\documentclass[11pt,a4paper]{article}

\usepackage{amsmath,amssymb,amsthm,mathrsfs}
\usepackage{stmaryrd}
\usepackage{hyperref}
\usepackage{graphicx}
\usepackage{xcolor}
\usepackage{soul}
\usepackage[thicklines]{cancel}
\usepackage{fullpage}
\usepackage[inline]{enumitem}
\usepackage{tikz}
\usepackage{pgfplots}
\pgfplotsset{compat=1.18}
\usepackage{subcaption}
\usepackage{titlesec}
\usepackage{cancel}
\usepackage{mathtools}

\theoremstyle{plain}
\newtheorem{theorem}{Theorem}[section]
\newtheorem{lemma}[theorem]{Lemma}
\newtheorem{proposition}[theorem]{Proposition}

\theoremstyle{definition}

\newtheorem{ass}[theorem]{Assumption}

\newtheorem{remark}[theorem]{Remark}
\numberwithin{equation}{section}

\newcommand{\dd}{\,\mathrm{d}}

\renewcommand{\epsilon}{\varepsilon}

\allowdisplaybreaks

\begin{document}
    \title{Explicit Signal-Adaptive Sequential Optimal Execution Quotes
    \addtocounter{footnote}{-1}\thanks{The author gratefully acknowledges financial support from the IDEA League Fellowship. The author would also like to thank Andrew Allan for helpful discussions at various stages of this project, and Mathieu Rosenbaum for valuable comments at the NUS Quantitative Finance Conference 2025, which helped improve the manuscript.}
    }

 \author{
Fenghui Yu
\thanks{Delft Institute of Applied Mathematics, TU Delft, 2628 CD Delft, The Netherlands.
(E-mail: fenghui.yu@tudelft.nl).}}

    \maketitle

\begin{abstract}
This paper develops a unified explicit solution theory for optimal execution through sequential limit-order placement in
a limit order book. Rather than controlling only the trading speed of a
metaorder, we determine how individual limit orders should be quoted over time.
The model incorporates signal-dependent drift, price impact, inventory risk, and
execution risk, with fills modeled by point processes whose intensities depend
on the submitted quotes. We formulate four execution criteria: expected terminal wealth, expected terminal wealth with running inventory penalty, CARA utility of terminal wealth, and CARA utility with running inventory penalty. For general
price-impact and inventory-penalty functions, we derive the corresponding HJB equations and
show that all four problems reduce to a triangular finite-dimensional
structure which can be solved explicitly, leading to fully explicit value functions and optimal quotes across all cases. We also prove well-posedness, admissibility, and verification results. The
explicit formulas reveal connections between quoting strategies under different
criteria, support long-horizon asymptotic analysis, and show numerically that
signal-dependent drift can substantially affect optimal execution. 
    \end{abstract}

        \noindent{\bf Keywords:}
Optimal execution; optimal quoting; limit order book; stochastic control with controlled jumps; explicit solutions; fill risk; price impact; inventory risk; market signals.
    \section{Introduction}

Optimal execution is a central problem in market microstructure and mathematical
finance. Its aim is to determine how a large order should be executed over time
while balancing execution costs, price impact, inventory exposure, and execution
risk. In practice, this problem naturally appears at several levels. At the
macro level, a trader decides how to split a metaorder into smaller child orders
over a finite horizon, leading to an optimal scheduling problem. This viewpoint
has been extensively studied through models in which the trading speed is the
control variable; see, for instance, Almgren and Chriss
\cite{almgren2001optimal} in discrete time and Cartea and Jaimungal
\cite{cartea2016incorporating} in continuous time. At the micro level, once a
child order has been specified, the trader must decide where to place it in the
limit order book. This gives rise to the optimal order placement, or optimal
quoting, problem, in which the control is the quote submitted to the book. A
third level concerns venue selection and smart order routing.

This paper focuses on the interaction between the first two levels. Rather than
separating the scheduling of a metaorder from the placement of its individual
child orders, we study an integrated sequential execution problem in which a
trader repeatedly chooses limit-order quotes over a finite horizon in order to
liquidate a target inventory. The chosen quotes determine both the aggressiveness
of execution and the probability of execution. More aggressive quotes increase
the likelihood of fills but reduce execution revenues, whereas more passive
quotes improve prices but expose the trader to fill risk and inventory risk. We
refer to this integrated problem as an optimal execution quoting problem.

The resulting model combines several risks that are often treated separately in
the literature. At the level of metaorder execution, price impact and inventory
risk are essential determinants of the optimal execution schedule. At the level
of limit-order placement, execution is not guaranteed, and fill risk must be
incorporated explicitly. In our framework, fills are modeled through point
processes whose intensities depend on the submitted quotes, so that the cash and
inventory processes exhibit jumps when orders are executed. This leads to a
nonlinear stochastic control problem in which the control enters the jump
intensity.

A further important feature of our model is that the reference price (can be interpreted as mid-price) is allowed
to have a non-zero drift. Many tractable optimal quoting models assume a
martingale mid-price, mainly for analytical convenience. This assumption excludes
predictive information about short-term price movements. In practice, however,
signals extracted from order-flow imbalance, limit-order-book dynamics, or
data-driven forecasting methods may contain information about the future drift
of the reference price. Such signals are increasingly important in modern
execution and market-making problems; see, for example, Kolm et al.
\cite{kolm2023deep} for deep learning methods for extracting alpha signals.
Allowing for a non-zero drift therefore makes it possible to study how market
signals affect optimal quoting decisions. Our numerical results show that the
optimal quotes may change substantially when such drift information is included.

The literature on optimal execution and optimal quoting is extensive. Classical
optimal execution models usually control the trading rate and study the tradeoff
between price impact and inventory risk. In contrast, optimal quoting models,
originating from market-making problems such as Ho and Stoll \cite{HO198147} and
Avellaneda and Stoikov \cite{Avellaneda01042008}, control bid and ask quotes
through their effect on execution intensities. Further refinements have
incorporated adverse selection, price discovery, and other microstructure
features; see, for example, Barzykin et al.
\cite{barzykin2025optimalquotingadverseselection,Barzykin2025}. The present
paper belongs to the class of models in which liquidation is carried out through
sequential limit-order placement, thereby combining execution and quoting.

The works particularly relevant to ours include Bayraktar and Ludkovski
\cite{Bayraktar2014}, Guéant et al. \cite{Guant2012}, Guéant and Lehalle
\cite{Guant2015}, Cartea et al. \cite{Cartea2014}, Cartea and Jaimungal
\cite{cartea2015optimal}, and the treatment in Cartea et al.
\cite{cartea2015algorithmic}. These papers study related liquidation or quoting
problems under specific assumptions on the price dynamics, execution
intensities, and objective functions. Bayraktar and Ludkovski
\cite{Bayraktar2014} consider an expected terminal wealth criterion in a setting without an
explicit stochastic mid-price dynamic. Guéant et al. \cite{Guant2012} and Guéant
and Lehalle \cite{Guant2015} study liquidation under CARA utility objectives;
although their formulation allows for diffusion price dynamics, explicit
solutions are obtained only under restrictive assumptions such as zero drift and
zero volatility in the relevant reduced equations. Cartea and Jaimungal
\cite{cartea2015optimal} and Cartea et al. \cite{cartea2015algorithmic} include
expected terminal wealth, price impact, and running inventory penalties, but assume a
zero-drift mid-price and do not obtain fully explicit solutions in the general
setting. Cartea et al. \cite{Cartea2014} allow for a non-zero drift in the
reference price, but the resulting control problem is not solved explicitly.

The main contribution of this paper is to provide a unified and explicit
solution theory for optimal execution quoting with non-zero drift and volatility
in the reference price dynamics. The drift may be interpreted as a signal-driven
predictor of short-term price movements. Our framework also incorporates price
impact, fill risk, and running inventory risk. The price impact and inventory
penalty are modeled by general functions, which allows the framework to recover
several commonly used specifications as special cases. We consider four execution criteria:
\[
\begin{array}{ll}
\text{(I) expected terminal wealth,}
&
\text{(II) expected terminal wealth with inventory penalty,}
\\[0.3em]
\text{(III) CARA utility of terminal wealth,}
&
\text{(IV) CARA utility with inventory penalty.}
\end{array}
\]
For each criterion,
we derive the corresponding Hamilton--Jacobi--Bellman (HJB) equation and show that,
after a suitable transformation, the problem reduces to a
triangular finite-dimensional system. This triangular structure makes it
possible to obtain explicit value functions and explicit optimal quotes, which
can be computed through divided-difference formulas. To the best of our
knowledge, this is the first unified explicit solution theory for an optimal
execution quoting problem of this generality, covering signal-dependent price
dynamics, quote-dependent execution intensities, price impact, inventory risk,
and multiple execution criteria within a single tractable framework.

Closed-form solutions for stochastic control problems with controlled jump
intensities are rare. Even in classical optimal quoting and market-making
models, explicit solutions are usually unavailable except under restrictive
assumptions or through approximation schemes. Guéant et al. \cite{Guant2013},
for example, develop approximation methods for utility-based quoting problems,
while Boyce et al. \cite{Boyce2025} use related approximation ideas for
objectives involving running inventory risk. By contrast, the present paper
derives exact explicit solutions for a broad family of execution quoting
problems. Existing explicit solutions obtained in more restrictive settings,
such as those in Guéant et al. \cite{Guant2012} and Cartea et al.
\cite{cartea2015algorithmic}, are recovered as special cases by choosing the
corresponding model parameters and objective functions.

In addition to explicit solvability, we provide a rigorous verification theory.
We establish well-posedness of the controlled execution model, admissibility of
the feedback quotes, and verification results showing that the candidate value
functions indeed solve the corresponding stochastic control problems. These
results are important because the controlled processes contain jumps and the
control appears in the fill intensities, making the verification argument more
delicate than in standard diffusion control problems.

The explicit formulas allow us to compare the optimal quotes generated by
different criteria. In particular, we identify conditions under which the quotes
arising from a quadratic running inventory penalty coincide with, or differ only
by a constant shift from, those obtained under a CARA utility objective. This
provides a precise sense in which a quadratic running inventory penalty can be
interpreted as a reduced-form certainty-equivalent approximation of exponential
utility under price risk. 

These explicit formulas also enable a long-horizon asymptotic analysis of the
optimal quotes. We show that the asymptotic behavior is governed by the dominant
growth modes of the triangular system: depending on the signal drift, inventory
penalties, volatility, and risk aversion, the optimal quote may grow linearly
with the remaining horizon, grow only logarithmically in the fully degenerate
case, or remain bounded at the linear scale, meaning that it is sublinear in the remaining
horizon. These regimes clarify when the
trader becomes increasingly patient, remains boundedly patient, or adjusts only
gradually as the execution horizon increases.

Finally, we investigate the role of signals through numerical experiments. We
compare optimal quotes under different drift specifications, including constant,
decaying, and latent signals. The results illustrate that ignoring drift
information can lead to substantially different execution strategies. Since the
solution is explicit, uncertainty in the estimated drift can also be propagated
directly into the optimal quotes.

The remainder of the paper is organized as follows. Section
\ref{sec:model} introduces the execution model, the admissible controls, and the
four optimization criteria. It also specifies the assumptions under which the
control problems are well posed. Section \ref{sec:solution} derives the HJB
equations and shows how the four problems reduce to triangular systems with
explicit solutions. Section \ref{sec:verification_admissibility} proves
well-posedness, admissibility, and verification results. Section
\ref{sec:comparison_optimal_quotes} compares the optimal quotes
across the four objectives and discusses their financial interpretation. Section
\ref{sec:long_horizon_asymptotics} studies the long-horizon asymptotics of the
optimal quotes. Section \ref{sec:numerics} presents numerical experiments under
different signal specifications. Section \ref{sec:conclusion} concludes.

\section{The model and control problems}
\label{sec:model}

In this section, we introduce the market model and formulate the optimal execution problems studied in the paper. We focus on the liquidation of a sell order through sequentially submitted limit orders. The corresponding acquisition problem with buy limit orders can be treated analogously, with the appropriate changes in signs.

Let \(T>0\) be a fixed trading horizon. Consider a filtered probability space
$(\Omega,\mathcal F,(\mathcal F_t)_{0\leq t\leq T},\mathbb P)$
satisfying the usual conditions. All stochastic processes appearing below are assumed to be defined on this stochastic basis. The agent initially holds \(Q_0\in\mathbb N\) units of an asset and aims to liquidate this inventory over the time interval \([0,T]\).

We denote by \(M=(M_t)_{0\leq t\leq T}\) the reference price of the asset. This reference price may be interpreted as the mid-price, or alternatively as the best bid price, depending on the precise market convention adopted. We assume that \(M\) follows the diffusion dynamics
\begin{equation}
\label{eq:mid-price_1}
    \dd M_t
    =
    g(s_t)\,\dd t
    +
    \sigma_t\,\dd W_t,
\end{equation}
where \(W=(W_t)_{0\leq t\leq T}\) is a standard Brownian motion. The process \(s=(s_t)_{0\leq t\leq T}\) denotes a generic market signal, for instance a signal extracted from the limit order book. The function
$g:\mathbb R^d\to\mathbb R$
captures the predictive effect of this signal on the future price movement. In particular, \(g(s_t)\) represents the drift component of the reference price induced by the signal \(s_t\). The process \(\sigma=(\sigma_t)_{0\leq t\leq T}\) denotes the volatility of the reference price.

The agent controls the quote depth
\[
    \delta=(\delta_t)_{0\leq t\leq T},
\]
where \(\delta_t\) specifies the distance from the reference price at which a sell limit order is posted. Executions are modeled by a counting process
\[
    N^\delta=(N_t^\delta)_{0\leq t\leq T},
\]
where \(N_t^\delta\) records the number of executed orders up to time \(t\). We assume that each execution corresponds to the sale of one unit of inventory. Hence the remaining inventory is
\begin{equation}
\label{eq:inventory_process}
    Q_t^\delta
    =
    Q_0-N_t^\delta,
    \qquad 0\leq t\leq T.
\end{equation}

Conditional on the control \(\delta\), the execution process \(N^\delta\) is assumed to have stochastic intensity
\begin{equation}
\label{eq:intensity_process}
    \lambda_t^\delta
    =
    \lambda e^{-\kappa\delta_t}\mathbf 1_{\{Q_{t-}^\delta>0\}},
    \qquad
    \lambda>0,\quad \kappa>0.
\end{equation}
The exponential form captures the standard trade-off between price improvement and execution risk: a larger quote depth gives a more favorable execution price, but reduces the arrival intensity of executions. More precisely, for small \(\Delta t>0\),
\[
    \mathbb P\bigl(
        N_{t+\Delta t}^\delta-N_t^\delta=1
        \,\big|\,
        \mathcal F_t
    \bigr)
    =
    \lambda e^{-\kappa\delta_t}\Delta t
    +
    o(\Delta t),
\]
on the event \(\{Q_{t-}^\delta>0\}\). Thus \(\lambda e^{-\kappa\delta_t}\) is an execution intensity, rather than an execution probability.

The associated cash process \(X^\delta=(X_t^\delta)_{0\leq t\leq T}\) is given by
\begin{equation}
\label{cash1}
    \dd X_t^\delta
    =
    \bigl(M_t-a+b\delta_t\bigr)\,\dd N_t^\delta,
\end{equation}
where \(a\geq 0\) and \(b>0\). Therefore, whenever one unit of inventory is executed, the agent receives the execution price
\[
    M_t-a+b\delta_t.
\]
The term \(-a+b\delta_t\) represents the adjustment of the execution price relative to the reference price. The parameter \(a\) may be interpreted as a fixed adverse price adjustment, while \(b\delta_t\) captures the improvement obtained by posting the order further away from the reference price. The condition \(b>0\) ensures that the quote depth affects the execution price and therefore enters the optimization problem nontrivially.

The liquidation time is defined by
\begin{equation}
\label{eq:liquidation_time}
    \tau
    =
    T\wedge
    \inf\{t\geq 0:Q_t^\delta=0\}.
\end{equation}
Thus, trading stops either when the inventory has been fully liquidated or when the terminal time \(T\) is reached. Equivalently, after the inventory has reached zero, the execution intensity is set to zero and the inventory remains at zero.

\begin{remark}[Relation to the standard no-impact specification]
If the price adjustment caused by the order is ignored, one may set \(a=0\) and \(b=1\). In that case, the cash dynamics reduce to
\begin{equation}
\label{cash1_degenerate}
    \dd X_t^\delta
    =
    (M_t+\delta_t)\,\dd N_t^\delta,
\end{equation}
which corresponds to the standard specification in which the execution price of a sell limit order is the reference price plus the posted depth; see, for example, \cite{cartea2015optimal}. The more general specification \eqref{cash1} allows for an additional execution-price adjustment while retaining the same point-process structure for order arrivals.
\end{remark}

We now impose the standing assumptions used throughout the paper.

\begin{ass}
\label{ass:standing}
The following assumptions hold:
\begin{enumerate}[label=(A\arabic*)]
    \item The initial inventory satisfies \(Q_0\in\mathbb N\), and each execution reduces the inventory by one unit. 

    \item The execution-intensity parameters satisfy
    \(
        \lambda>0,\) and 
        \(\kappa>0.\)
    The execution-price parameters satisfy
    \(a\geq 0,\) and 
        \(b>0.\)

    \item The admissible control set is
    \[
        \mathcal A
        :=
        \left\{
        \delta:\Omega\times[0,T]\to[\delta_{\min},\delta_{\max}]
        \;:\;
        \delta \text{ is predictable}
        \right\},
    \]
    where
    \(
        -\infty<\delta_{\min}<\delta_{\max}<\infty.
    \)

    \item The processes \(g(s_t)\) and \(\sigma_t\) are progressively measurable and bounded. Moreover, \(\sigma_t\geq 0\) for all \(t\in[0,T]\).

    \item The terminal liquidation price-impact penalty satisfies
    \(I:\{0,\dots,Q_0\}\to\mathbb R_+\) and \(I(0)=0\).
    If an inventory-running penalty is included in the objective, it is
    described by a function
    \(J:\{0,\dots,Q_0\}\to\mathbb R_+\), with \(J(0)=0\).

    \item In the cases with exponential utility, the risk-aversion parameter satisfies $
        \gamma>0.$
\end{enumerate}
\end{ass}

Under these assumptions, the state variables of the control problem are the cash
\(X^\delta\), the reference price \(M\), and the remaining inventory
\(Q^\delta\), while the control variable is the quote depth \(\delta\). The
inventory is measured in execution units: one unit may represent the chosen
trade size or lot size. More general execution sizes can therefore be
incorporated by rescaling the inventory accordingly. The predictability requirement means that the quote depth is chosen using
information available just before time \(t\). This ensures that
\(\lambda_t^\delta=\lambda e^{-\kappa\delta_t}\mathbf 1_{\{Q_{t-}^\delta>0\}}\)
is a valid predictable intensity for the execution process and rules out
anticipative strategies that would react to an execution at the same time it
occurs. Financially, \(I\) and \(J\) are typically taken to be nondecreasing in
    the inventory level, but monotonicity is not needed for the results below.

The different optimization
criteria considered below lead to different value functions and HJB equations,
but all are based on the same controlled execution dynamics described above.

\subsection{Case I: Optimization without risk aversion}
\label{subsec:case_I_without_risk_aversion}

We first consider the risk-neutral liquidation problem. The agent maximizes the expected terminal wealth obtained from limit-order executions and, if necessary, from the terminal liquidation of any remaining inventory. More precisely, the objective is
\begin{equation}
\label{eq:initial_value_function_without_risk_aversion}
    H(0,x,M,Q_0)
    =
    \sup_{\delta\in\mathcal A}
    \mathbb E\left[
        X_\tau^\delta
        +
        Q_\tau^\delta
        \bigl(M_\tau-I(Q_\tau^\delta)\bigr)
        \,\middle|\,
        X_0^\delta=x,\,
        M_0=M,\,
        Q_0^\delta=Q_0
    \right].
\end{equation}
The terminal payoff consists of the cash accumulated through executed limit orders and the liquidation value of the residual inventory. The function \(I\) represents the terminal liquidation price-impact penalty per share, so that any remaining inventory \(Q_\tau^\delta\) is liquidated at the penalized reference price
\[
    M_\tau-I(Q_\tau^\delta).
\]
This penalty discourages large residual positions at the terminal time; a common specification is the linear per-share penalty
\[
    I(q)=\alpha q,
    \qquad \alpha\geq 0.
\]
Throughout the paper, we use the term terminal liquidation penalty to refer to this terminal price-impact adjustment.

\subsection{Case II: Optimization with running inventory risk}
\label{subsec:case_II_running_inventory_risk}

We next extend the risk-neutral objective by incorporating a running inventory penalty. This term penalizes the agent for carrying inventory over time and therefore captures the exposure to inventory risk during the liquidation horizon. Let
\[
    J:\{0,\dots,Q_0\}\to\mathbb R_+
\]
be the running inventory penalty function. We assume that \(J\) is nonnegative and increasing, with \(J(0)=0\).

The initial value function is defined by
\begin{equation}
\label{eq:initial_value_function_inventory_risk}
    R(0,x,M,Q_0)
    =
    \sup_{\delta\in\mathcal A}
    \mathbb E\left[
        X_\tau^\delta
        +
        Q_\tau^\delta
        \bigl(M_\tau-I(Q_\tau^\delta)\bigr)
        -
        \int_0^\tau J(Q_s^\delta)\,\dd s
        \,\middle|\,
        X_0^\delta=x,\,
        M_0=M,\,
        Q_0^\delta=Q_0
    \right].
\end{equation}
Compared with Case I, the additional term
\[
    \int_0^\tau J(Q_s^\delta)\,\dd s
\]
penalizes the agent for maintaining a nonzero position before liquidation is completed. A common specification is the quadratic inventory penalty
\[
    J(q)=\beta q^2,
    \qquad
    \beta\geq 0.
\]

\subsection{Case III: Optimization with exponential utility}
\label{subsec:case_III_exponential_utility}

We now consider an execution criterion based on exponential utility. This formulation incorporates risk aversion through the concavity of the utility function and penalizes uncertainty in terminal wealth. Let \(\gamma>0\) denote the coefficient of absolute risk aversion. The initial value function is defined by
\begin{equation}
\label{eq:initial_value_function_exponential_utility}
    U(0,x,M,Q_0)
    =
    \sup_{\delta\in\mathcal A}
    \mathbb E\left[
        -\exp\left\{
            -\gamma
            \left(
                X_\tau^\delta
                +
                Q_\tau^\delta
                \bigl(M_\tau-I(Q_\tau^\delta)\bigr)
            \right)
        \right\}
        \,\middle|\,
        X_0^\delta=x,\,
        M_0=M,\,
        Q_0^\delta=Q_0
    \right].
\end{equation}
The random variable inside the utility function is the terminal wealth obtained from executed limit orders together with the penalized liquidation value of any residual inventory.

\subsection{Case IV: Optimization with exponential utility and running inventory risk}
\label{subsec:case_IV_exponential_utility_inventory_risk}

We finally combine the exponential utility criterion of Case III with the running inventory penalty of Case II. This formulation accounts both for risk aversion with respect to terminal execution wealth and for the cost of carrying inventory during the liquidation period.

The running inventory penalty is treated as a monetary cost deducted from terminal wealth before applying the exponential utility function. Thus, the initial value function is defined by
\begin{equation}
\label{eq:initial_value_function_utility_inventory}
\begin{aligned}
    F(0,x,M,Q_0)
    =
    \sup_{\delta\in\mathcal A}
    \mathbb E\Bigg[
        &-\exp\Bigg\{
            -\gamma
            \Bigg(
                X_\tau^\delta
                +
                Q_\tau^\delta
                \bigl(M_\tau-I(Q_\tau^\delta)\bigr)
                -
                \int_0^\tau J(Q_s^\delta)\,\dd s
            \Bigg)
        \Bigg\} \\
        &\qquad\qquad
        \Bigg|\,
        X_0^\delta=x,\,
        M_0=M,\,
        Q_0^\delta=Q_0
    \Bigg].
\end{aligned}
\end{equation}
Here \(\gamma>0\) is the coefficient of absolute risk aversion, and \(J(q)\) represents the instantaneous cost of holding inventory level \(q\).

\section{Solution to the optimal control problems}
\label{sec:solution}

In this section, we derive finite-dimensional characterizations of the optimal controls. To obtain explicit formulas, we first work under a frozen-coefficient specification, where the signal and volatility are fixed at given values:
\[
    s_t\equiv s,
    \qquad
    \sigma_t\equiv \sigma.
\]
Accordingly, throughout this section we write \(g(s)\) for the constant drift coefficient \(g(s_t)\). Under this specification, the HJB equations for the four cases reduce to triangular finite-dimensional systems of ordinary differential equations (ODE) indexed by the inventory level \(q\in\{0,\dots,Q_0\}\). Since the inventory is finite-state and the cash dependence can be separated explicitly, the resulting equations are finite-dimensional.

Although the formulas are derived under fixed signal and volatility values, this
frozen-coefficient formulation also explains the signal-adaptive nature of the
resulting quotes. At each decision time, newly observed or estimated values of
the signal \(s\) and volatility \(\sigma\) can be inserted directly into the
explicit quote formulas. The resulting feedback quote therefore adapts to the
current market state without requiring the HJB system to be solved again after
each update. In this sense, the closed-form solutions provide statewise adaptive
execution strategies.

If \(s_t\) and \(\sigma_t\) are deterministic time-dependent functions, the same
reductions remain valid, with time-dependent coefficients in the corresponding
triangular ODE systems. 
By contrast, if \(s_t\) is itself stochastic and Markovian, the
signal must be included as an additional state variable and its infinitesimal
generator must be added to the HJB equation. In that case, a direct plug-in
substitution \(s_t=s\) is not justified as an exact dynamic-programming
reduction, but the frozen-coefficient formulas still provide adaptive quotes
based on the currently observed or estimated signal and volatility. In the numerical experiments with time-dependent signals, we use a frozen-signal implementation. 

\subsection{Optimal solution to Case I}
\label{subsec:solution_case_I}

We first solve the risk-neutral liquidation problem. For \(t\in[0,T]\), define
the conditional value function
\begin{equation}
\label{eq:value_function_without_risk_aversion}
    H(t,x,M,q)
    =
    \sup_{\delta\in\mathcal A}
    \mathbb E_{t,x,M,q}
    \left[
        X_\tau^\delta
        +
        Q_\tau^\delta
        \bigl(M_\tau-I(Q_\tau^\delta)\bigr)
    \right],
\end{equation}
where \(\mathbb E_{t,x,M,q}[\cdot]\) denotes conditional expectation given
\[
    X_{t-}^\delta=x,
    \qquad
    M_t=M,
    \qquad
    Q_{t-}^\delta=q.
\]
The use of left limits for \(X^\delta\) and \(Q^\delta\) reflects the fact
that cash and inventory jump at execution times.

The dynamic programming principle gives the following coupled HJB system:
\begin{align}
\label{eq:HJB_equations_drift}
\left\{
\begin{array}{ll}
\displaystyle
\partial_t H(t,x,M,q)
+
g(s)\partial_M H(t,x,M,q)
+
\frac{1}{2}\sigma^2\partial_{MM}H(t,x,M,q)
& \\[0.6em]
\displaystyle\qquad
+
\sup_{\delta\in\mathcal A}
\left\{
    \lambda e^{-\kappa\delta}
    \left[
        H\bigl(t,x+M-a+b\delta,M,q-1\bigr)
        -
        H(t,x,M,q)
    \right]
\right\}
=0,
& q\geq 1,\; t<T,
\\[1.2em]
\displaystyle
H(T,x,M,q)
=
x+q\bigl(M-I(q)\bigr),
& q\geq 1,
\\[0.8em]
\displaystyle
H(t,x,M,0)
=
x,
& t\leq T.
\end{array}
\right.
\end{align}
The differential operator
\[
    g(s)\partial_M+\frac12\sigma^2\partial_{MM}
\]
is the infinitesimal generator of the reference price process. The jump term
corresponds to the execution of one unit of inventory: the cash increases by
\(M-a+b\delta\), while the inventory decreases from \(q\) to \(q-1\). Thus the
supremum over \(\delta\) captures the trade-off between price improvement and
execution intensity.

\begin{remark}[Linear terminal price impact]
For the linear per-share terminal penalty
\[
    I(q)=\alpha q,
    \qquad \alpha\geq0,
\]
the terminal condition becomes
\[
    H(T,x,M,q)=x+qM-\alpha q^2.
\]
The HJB equation is otherwise unchanged. 
\end{remark}

For later use, we denote by
\[
    \Pi_{[\delta_{\min},\delta_{\max}]}(y)
    :=
    \min\{\delta_{\max},\max\{\delta_{\min},y\}\},
    \qquad y\in\mathbb R,
\]
the projection onto the admissible quote interval
\([\delta_{\min},\delta_{\max}]\).

\begin{lemma}[Optimal quote in Case I]
\label{lem:optimal_quote_case_I}
Suppose that the unconstrained maximizer is admissible. Then the value function
admits the representation
\begin{equation}
\label{eq:ansatz_case_I}
    H(t,x,M,q)
    =
    x+qM+h(t,q),
\end{equation}
where
\[
    h(t,q)=\frac{b}{\kappa}\log w(t,q),
    \qquad w(t,q)>0.
\]
The function \(w\) solves the triangular linear system
\begin{align}
\label{eq:PDE_no_risk}
\left\{
\begin{array}{ll}
\displaystyle
\partial_t w(t,q)
+
\frac{\kappa}{b}g(s)q\,w(t,q)
+
\lambda\exp\left\{-\frac{\kappa a}{b}-1\right\}w(t,q-1)
=0,
& q\geq 1,\; t<T,
\\[1.1em]
\displaystyle
w(T,q)
=
\exp\left\{
    -\frac{\kappa}{b}qI(q)
\right\},
& q\geq 1,
\\[1.1em]
\displaystyle
w(t,0)=1.
&
\end{array}
\right.
\end{align}
The unconstrained optimal feedback quote is
\begin{equation}
\label{eq:optimal_delta_case1_w}
    \delta^{\mathrm{unc}}(t,q)
    =
    \frac{1}{\kappa}
    \left(
        1+\log\frac{w(t,q)}{w(t,q-1)}
    \right)
    +
    \frac{a}{b}.
\end{equation}
If the admissible quote set is \([\delta_{\min},\delta_{\max}]\), then the
admissible feedback quote is obtained by projection:
\begin{equation}
\label{eq:optimal_delta_feedback_case_I}
    \delta^*(t,q)
    =
    \Pi_{[\delta_{\min},\delta_{\max}]}
    \bigl(\delta^{\mathrm{unc}}(t,q)\bigr).
\end{equation}
\end{lemma}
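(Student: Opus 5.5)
We plug the ansatz \eqref{eq:ansatz_case_I} into the HJB system \eqref{eq:HJB_equations_drift} and check that it reduces to \eqref{eq:PDE_no_risk}. First we reduce to an equation for $h$. With $H=x+qM+h(t,q)$ we have $\partial_M H=q$ and $\partial_{MM}H=0$. The jump increment is
\[
H(t,x+M-a+b\delta,M,q-1)-H(t,x,M,q)=b\delta-a+h(t,q-1)-h(t,q),
\]
since the $M$ gained in cash cancels the $M$ lost from $qM$. Both $x$ and $M$ drop out, so the HJB equation becomes
\[
\partial_t h+g(s)q+\sup_{\delta}\lambda e^{-\kappa\delta}\bigl(b\delta-a+h(t,q-1)-h(t,q)\bigr)=0 .
\]
The terminal and boundary conditions become $h(T,q)=-qI(q)$ and $h(t,0)=0$. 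The volatility term vanishes because $H$ is affine in $M$.

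Next we maximise pointwise. Set $\Delta=h(t,q)-h(t,q-1)$. The map $\delta\mapsto \lambda e^{-\kappa\delta}(b\delta-a-\Delta)$ has derivative $\lambda e^{-\kappa\delta}\bigl(b-\kappa(b\delta-a-\Delta)\bigr)$. This vanishes exactly at
\[
\delta^{\mathrm{unc}}=\frac1\kappa+\frac{a+\Delta}{b}.
\]
The derivative is positive to the left of this point and negative to the right, so $\delta^{\mathrm{unc}}$ is the unique global maximiser. Plugging it back in, the supremum equals
\[
\frac{\lambda b}{\kappa}\exp\Bigl\{-1-\frac{\kappa a}{b}-\frac{\kappa}{b}\Delta\Bigr\}.
\]

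Then we linearise with $h=\frac b\kappa\log w$, so that $e^{-\kappa\Delta/b}=w(t,q-1)/w(t,q)$ and $\partial_t h=\frac b\kappa\,\partial_t w/w$. Multiplying the reduced equation by $\frac\kappa b w(t,q)$ gives exactly the linear equation in \eqref{eq:PDE_no_risk}. The conditions carry over: $w(T,q)=\exp\{-\frac\kappa b qI(q)\}$ and $w(t,0)=1$. Substituting $\Delta=\frac b\kappa\log\frac{w(t,q)}{w(t,q-1)}$ into $\delta^{\mathrm{unc}}$ yields \eqref{eq:optimal_delta_case1_w}.

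Finally we check well-definedness. The system is triangular: $w(\cdot,0)=1$ is known, and each $w(\cdot,q)$ solves a scalar linear ODE driven by $w(\cdot,q-1)$. By variation of constants,
\[
w(t,q)=e^{\frac\kappa b g(s)q(T-t)}w(T,q)+\lambda e^{-\kappa a/b-1}\int_t^T e^{\frac\kappa b g(s)q(u-t)}w(u,q-1)\dd u .
\]
Induction on $q$ gives existence, uniqueness and strict positivity of $w$, so the logarithm is legitimate. Hence a classical solution of \eqref{eq:HJB_equations_drift} of the stated form exists.

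When the constraint $[\delta_{\min},\delta_{\max}]$ is active, the map $\delta\mapsto\lambda e^{-\kappa\delta}(b\delta-a-\Delta)$ is unimodal, so the constrained maximiser is the projection \eqref{eq:optimal_delta_feedback_case_I}. Under the lemma's hypothesis the projection does nothing, and the computation above is unchanged. The step needing most care is identifying this smooth HJB solution with the value function $H$, i.e.\ verification. Here I rely on the paper's later verification results (Section \ref{sec:verification_admissibility}). In outline, apply It\^o's formula with jumps to $H(t,X_t,M_t,Q_t)$ up to $\tau$. The HJB inequality makes this a supermartingale for any admissible $\delta$ and a martingale for $\delta^*$. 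This uses bounded controls and bounded intensity together with boundedness of $g$, $\sigma$, $\partial_M H=q$ and $h$ on $[0,T]\times\{0,\dots,Q_0\}$, so that the local martingales are true martingales.
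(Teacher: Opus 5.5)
Your proposal is correct and follows essentially the same route as the paper: you substitute the affine ansatz, reduce to the equation for $h$, and solve the first-order condition to get $\delta^{\mathrm{unc}}=\frac1\kappa+\frac{a}{b}+\frac{h(t,q)-h(t,q-1)}{b}$, then plug it back in and linearize via $h=\frac{b}{\kappa}\log w$. Your additional steps are the ones the paper defers to Lemma~\ref{lem:positivity_w}, Lemma~\ref{lem:Hamiltonian_maximizers} and Theorem~\ref{thm:verification}: positivity of $w$ by variation of constants and induction, unimodality for the projection, and the It\^o supermartingale argument identifying the HJB solution with the value function.
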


\begin{proof}
The terminal and boundary conditions in \eqref{eq:HJB_equations_drift} suggest
the affine ansatz
\[
    H(t,x,M,q)=x+qM+h(t,q).
\]
The term \(x+qM\) is the mark-to-market wealth, while \(h(t,q)\) represents the
additional value generated by optimally choosing the quote depth for the
remaining liquidation problem.

Using
\[
    \partial_t H=\partial_t h,
    \qquad
    \partial_M H=q,
    \qquad
    \partial_{MM}H=0,
\]
and substituting the ansatz into \eqref{eq:HJB_equations_drift}, we obtain, for
\(q\geq1\),
\begin{align}
\label{eq:HJB_equations_drift_ansatz}
\left\{
\begin{array}{ll}
\displaystyle
\partial_t h(t,q)
+
g(s)q
+
\sup_{\delta\in\mathcal A}
\left\{
    \lambda e^{-\kappa\delta}
    \left[
        -a+b\delta+h(t,q-1)-h(t,q)
    \right]
\right\}
=0,
& t<T,
\\[1.1em]
\displaystyle
h(T,q)=-qI(q),
& q\geq 1,
\\[0.6em]
\displaystyle
h(t,0)=0.
&
\end{array}
\right.
\end{align}

For fixed \((t,q)\), define
\[
    \Delta h(t,q):=h(t,q)-h(t,q-1).
\]
The term to be maximized is
\[
    \lambda e^{-\kappa\delta}
    \left[
        -a+b\delta-\Delta h(t,q)
    \right].
\]
The first-order condition for an interior maximizer is
\[
\begin{aligned}
0
&=
\frac{\partial}{\partial\delta}
\left(
    \lambda e^{-\kappa\delta}
    \left[
        -a+b\delta-\Delta h(t,q)
    \right]
\right)
\\
&=
\lambda e^{-\kappa\delta}
\left(
    b
    -
    \kappa
    \left[
        -a+b\delta-\Delta h(t,q)
    \right]
\right).
\end{aligned}
\]
Hence the unconstrained maximizer is
\begin{equation}
\label{eq:optimal_delta_feedback_unconstrained_case_I_h}
    \delta^{\mathrm{unc}}(t,q)
    =
    \frac{1}{\kappa}
    +
    \frac{a}{b}
    +
    \frac{h(t,q)-h(t,q-1)}{b}.
\end{equation}

Substituting \eqref{eq:optimal_delta_feedback_unconstrained_case_I_h} into
\eqref{eq:HJB_equations_drift_ansatz} yields
\begin{align}
\label{eq:HJB_equations_drift_feedback_solution}
\left\{
\begin{array}{ll}
\displaystyle
\partial_t h(t,q)
+
g(s)q
+
\frac{\lambda b}{e\kappa}
\exp\left\{
    -\frac{\kappa a}{b}
    -
    \frac{\kappa}{b}
    \bigl(h(t,q)-h(t,q-1)\bigr)
\right\}
=0,
& q\geq 1,\; t<T,
\\[1.1em]
\displaystyle
h(T,q)=-qI(q),
& q\geq 1,
\\[0.6em]
\displaystyle
h(t,0)=0.
&
\end{array}
\right.
\end{align}

Now set
\[
    h(t,q)=\frac{b}{\kappa}\log w(t,q),
    \qquad w(t,q)>0.
\]
Then
\[
    h(t,q)-h(t,q-1)
    =
    \frac{b}{\kappa}
    \log\frac{w(t,q)}{w(t,q-1)}.
\]
Therefore the feedback quote becomes
\[
    \delta^{\mathrm{unc}}(t,q)
    =
    \frac{1}{\kappa}
    \left(
        1+\log\frac{w(t,q)}{w(t,q-1)}
    \right)
    +
    \frac{a}{b}.
\]
Substituting the logarithmic transformation into
\eqref{eq:HJB_equations_drift_feedback_solution} gives
\[
\begin{aligned}
0
&=
\frac{b}{\kappa}
\frac{\partial_t w(t,q)}{w(t,q)}
+
g(s)q
+
\frac{\lambda b}{e\kappa}
\exp\left\{-\frac{\kappa a}{b}\right\}
\frac{w(t,q-1)}{w(t,q)}.
\end{aligned}
\]
Multiplying by \(\kappa w(t,q)/b\) gives \eqref{eq:PDE_no_risk}. The terminal
and boundary conditions follow from
\[
    h(T,q)=-qI(q),
    \qquad
    h(t,0)=0,
\]
which imply
\[
    w(T,q)=\exp\left\{-\frac{\kappa}{b}qI(q)\right\},
    \qquad
    w(t,0)=1.
\]
If the admissible set is bounded, the admissible maximizer is the projection of
the unconstrained maximizer onto \([\delta_{\min},\delta_{\max}]\).
\end{proof}

The coefficients of \(w(t,q)\) and \(w(t,q-1)\) in \eqref{eq:PDE_no_risk} are
\[
    A_q^{(H)}
    :=
    \frac{\kappa}{b}g(s)q,
    \qquad
    C^{(H)}
    :=
    \lambda\exp\left\{-\frac{\kappa a}{b}-1\right\}.
\]
Thus Case I already has the common triangular form introduced in Section \ref{subsec:common_recursive_structure} below.

\subsection{Optimal solution to Case II}
\label{subsec:solution_case_II}

We next solve the problem with running inventory risk. For \(t\in[0,T]\), define
the conditional value function
\begin{equation}
\label{eq:value_function_inventory_risk}
    R(t,x,M,q)
    =
    \sup_{\delta\in\mathcal A}
    \mathbb E_{t,x,M,q}
    \left[
        X_\tau^\delta
        +
        Q_\tau^\delta
        \bigl(M_\tau-I(Q_\tau^\delta)\bigr)
        -
        \int_t^\tau J(Q_s^\delta)\,\dd s
    \right],
\end{equation}
where \(\mathbb E_{t,x,M,q}[\cdot]\) denotes conditional expectation given
\[
    X_{t-}^\delta=x,
    \qquad
    M_t=M,
    \qquad
    Q_{t-}^\delta=q.
\]

The dynamic programming principle gives the coupled HJB system
\begin{align}
\label{eq:HJB_equations_inventory_risk}
\left\{
\begin{array}{ll}
\displaystyle
\partial_t R(t,x,M,q)
+
g(s)\partial_M R(t,x,M,q)
+
\frac{1}{2}\sigma^2\partial_{MM}R(t,x,M,q)
-
J(q)
& \\[0.6em]
\displaystyle\qquad
+
\sup_{\delta\in\mathcal A}
\left\{
    \lambda e^{-\kappa\delta}
    \left[
        R\bigl(t,x+M-a+b\delta,M,q-1\bigr)
        -
        R(t,x,M,q)
    \right]
\right\}
=0,
& q\geq 1,\; t<T,
\\[1.2em]
\displaystyle
R(T,x,M,q)
=
x+q\bigl(M-I(q)\bigr),
& q\geq 1,
\\[0.8em]
\displaystyle
R(t,x,M,0)
=
x,
& t\leq T.
\end{array}
\right.
\end{align}
Compared with Case I, the HJB contains the additional term \(-J(q)\). This
reflects the instantaneous cost of holding inventory level \(q\). Financially,
this running penalty lowers the value of waiting and therefore encourages faster
liquidation.

\begin{remark}[Quadratic running inventory penalty]
A common specification is
\[
    J(q)=\beta q^2,
    \qquad \beta\geq0.
\]
Then the running-cost term in \eqref{eq:HJB_equations_inventory_risk} becomes
\(-\beta q^2\). When \(\beta=0\), the formulation reduces to Case I.
\end{remark}

\begin{lemma}[Optimal quote in Case II]
\label{lem:optimal_quote_case_II}
Suppose that the unconstrained maximizer is admissible. Then the value function
admits the representation
\begin{equation}
\label{eq:ansatz_running_risk}
    R(t,x,M,q)
    =
    x+qM+r(t,q),
\end{equation}
where
\[
    r(t,q)=\frac{b}{\kappa}\log w(t,q),
    \qquad w(t,q)>0.
\]
The function \(w\) solves the triangular linear system
\begin{align}
\label{eq:PDE_running_risk}
\left\{
\begin{array}{ll}
\displaystyle
\partial_t w(t,q)
+
\frac{\kappa}{b}
\bigl(g(s)q-J(q)\bigr)w(t,q)
+
\lambda\exp\left\{-\frac{\kappa a}{b}-1\right\}w(t,q-1)
=0,
& q\geq 1,\; t<T,
\\[1.1em]
\displaystyle
w(T,q)
=
\exp\left\{
    -\frac{\kappa}{b}qI(q)
\right\},
& q\geq 1,
\\[1.1em]
\displaystyle
w(t,0)=1.
&
\end{array}
\right.
\end{align}
The unconstrained optimal feedback quote is
\begin{equation}
\label{eq:optimal_delta_case2_w}
    \delta^{\mathrm{unc}}(t,q)
    =
    \frac{1}{\kappa}
    \left(
        1+\log\frac{w(t,q)}{w(t,q-1)}
    \right)
    +
    \frac{a}{b}.
\end{equation}
If the admissible quote set is \([\delta_{\min},\delta_{\max}]\), then the
admissible feedback quote is obtained by projection:
\begin{equation}
\label{eq:optimal_delta_feedback_running_risk}
    \delta^*(t,q)
    =
    \Pi_{[\delta_{\min},\delta_{\max}]}
    \bigl(\delta^{\mathrm{unc}}(t,q)\bigr).
\end{equation}
\end{lemma}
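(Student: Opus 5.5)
The plan is to repeat the argument of Lemma~\ref{lem:optimal_quote_case_I} almost verbatim. The only change is that the running cost $-J(q)$ now enters the zeroth-order part of the reduced equation.

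First, the terminal and boundary conditions in \eqref{eq:HJB_equations_inventory_risk} suggest the affine ansatz $R(t,x,M,q)=x+qM+r(t,q)$. Under this ansatz $\partial_M R=q$ and $\partial_{MM}R=0$, so the volatility term again drops out. The jump difference is
\[
R(t,x+M-a+b\delta,M,q-1)-R(t,x,M,q)=-a+b\delta-\bigl(r(t,q)-r(t,q-1)\bigr),
\]
because the $qM$ and $(q-1)M$ terms cancel against the cash increment $M$. The reduced equation is therefore
\[
\partial_t r(t,q)+g(s)q-J(q)+\sup_{\delta}\lambda e^{-\kappa\delta}\bigl[-a+b\delta-\Delta r(t,q)\bigr]=0,
\]
with $r(T,q)=-qI(q)$ and $r(t,0)=0$; here $J(0)=0$ is consistent with the boundary condition at $q=0$.

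The maximization in $\delta$ is pointwise and identical to Case I. The map $\delta\mapsto e^{-\kappa\delta}(b\delta-c)$ is strictly concave near its unique critical point and has a unique global maximizer on $\mathbb R$. The first-order condition gives
\[
\delta^{\mathrm{unc}}=\frac{1}{\kappa}+\frac{a}{b}+\frac{\Delta r}{b},
\]
and the resulting supremum equals $\frac{\lambda b}{e\kappa}\exp\{-\kappa a/b-\kappa\Delta r/b\}$.

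Next, substitute $r=\frac{b}{\kappa}\log w$. This turns the exponential of the difference into $w(t,q-1)/w(t,q)$ and gives formula \eqref{eq:optimal_delta_case2_w}. Multiplying the equation by $\kappa w(t,q)/b$ yields \eqref{eq:PDE_running_risk}, whose diagonal coefficient is now $\frac{\kappa}{b}(g(s)q-J(q))$. The terminal and boundary data transform to $w(T,q)=e^{-\kappa qI(q)/b}$ and $w(t,0)=1$.

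To justify that this $w$ exists and is positive, I would solve the linear triangular system recursively in $q$. Set $A_q=\frac{\kappa}{b}(g(s)q-J(q))$ and $C=\lambda e^{-\kappa a/b-1}$. Variation of constants gives
\[
w(t,q)=e^{A_q(T-t)}w(T,q)+C\int_t^T e^{A_q(u-t)}w(u,q-1)\,\dd u .
\]
Induction from $w(\cdot,0)=1$ shows that $w(\cdot,q)$ is smooth and strictly positive, so $r$ is well defined. Reversing the substitutions shows that $x+qM+r$ is a classical solution of \eqref{eq:HJB_equations_inventory_risk}.

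For the bounded admissible set, concavity of the objective on the relevant range, which is unimodal in $\delta$, gives that the constrained maximizer is the projection \eqref{eq:optimal_delta_feedback_running_risk}. Identifying this HJB solution with the value function $R$ itself is the genuine obstacle. That requires the verification argument with controlled jumps, which the paper defers to Section~\ref{sec:verification_admissibility}. At this stage, as in Case I, I would present the representation as the classical solution of the HJB produced by the ansatz and invoke the later verification theorem for the identification.
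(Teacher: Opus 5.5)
Your proposal is correct and follows the paper's proof in the appendix step for step: the affine ansatz, the reduced equation with the extra $-J(q)$ term, the same first-order condition and supremum value, the logarithmic substitution, and the projection argument for the bounded quote set. Your additional steps on positivity of $w$ via variation of constants, and on deferring the identification with $R$, correspond to Lemma~\ref{lem:positivity_w} and Theorem~\ref{thm:verification}, which the paper likewise handles separately.
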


\begin{proof}
The argument follows the same steps as the proof of Lemma~\ref{lem:optimal_quote_case_I}: one substitutes the affine ansatz into the HJB equation, derives the first-order condition for the optimal quote, and then applies the logarithmic transformation to obtain the triangular linear system. The details are given in Appendix~\ref{app:optimal-quote-caseII-proof}.
\end{proof}

Compared with Case I, the running inventory penalty modifies the coefficient
multiplying \(w(t,q)\) in \eqref{eq:PDE_running_risk}: the drift contribution
\(g(s)q\) is replaced by the net term \(g(s)q-J(q)\). Thus the coefficients
multiplying \(w(t,q)\) and \(w(t,q-1)\) in \eqref{eq:PDE_running_risk} are
\[
    A_q^{(R)}
    :=
    \frac{\kappa}{b}\bigl(g(s)q-J(q)\bigr),
    \qquad
    C^{(R)}
    :=
    \lambda\exp\left\{-\frac{\kappa a}{b}-1\right\}.
\]
Financially, \(g(s)q\) represents the expected mark-to-market benefit of
holding inventory under a favourable signal, whereas \(J(q)\) represents the
cost of carrying inventory. Hence Case II has the common triangular form
introduced in Section~\ref{subsec:common_recursive_structure} below.

\subsection{Optimal solution to Case III}
\label{subsec:solution_case_III}

We now solve the problem with exponential utility. For \(t\in[0,T]\), define
the conditional value function
\begin{equation}
\label{eq:value_function_utility}
    U(t,x,M,q)
    =
    \sup_{\delta\in\mathcal A}
    \mathbb E_{t,x,M,q}
    \left[
        -\exp\left\{
            -\gamma
            \left(
                X_\tau^\delta
                +
                Q_\tau^\delta
                \bigl(M_\tau-I(Q_\tau^\delta)\bigr)
            \right)
        \right\}
    \right],
\end{equation}
where \(\mathbb E_{t,x,M,q}[\cdot]\) denotes conditional expectation given
\[
    X_{t-}^\delta=x,
    \qquad
    M_t=M,
    \qquad
    Q_{t-}^\delta=q.
\]

The
dynamic programming principle gives the coupled HJB system
\begin{align}
\label{eq:HJB_equations_utility}
\left\{
\begin{array}{ll}
\displaystyle
\partial_t U(t,x,M,q)
+
g(s)\partial_M U(t,x,M,q)
+
\frac{1}{2}\sigma^2\partial_{MM}U(t,x,M,q)
& \\[0.6em]
\displaystyle\qquad
+
\sup_{\delta\in\mathcal A}
\left\{
    \lambda e^{-\kappa\delta}
    \left[
        U\bigl(t,x+M-a+b\delta,M,q-1\bigr)
        -
        U(t,x,M,q)
    \right]
\right\}
=0,
& q\geq 1,\; t<T,
\\[1.2em]
\displaystyle
U(T,x,M,q)
=
-\exp\left\{
    -\gamma
    \left(
        x+q\bigl(M-I(q)\bigr)
    \right)
\right\},
& q\geq 1,
\\[1.0em]
\displaystyle
U(t,x,M,0)
=
-\exp\{-\gamma x\},
& t\leq T.
\end{array}
\right.
\end{align}
The controlled state dynamics are the same as in Case I, so the HJB has the
same jump-diffusion structure. The difference is the exponential utility
criterion, which changes the terminal and boundary conditions and introduces
risk aversion through the nonlinear dependence on cash and inventory value.

\begin{lemma}[Optimal quote in Case III]
\label{lem:optimal_quote_case_III}
Suppose that the unconstrained maximizer is admissible. Then the value function
admits the CARA representation
\begin{equation}
\label{eq:ansatz_utility}
    U(t,x,M,q)
    =
    -\exp\left\{
        -\gamma\bigl(x+qM+u(t,q)\bigr)
    \right\},
\end{equation}
where
\[
    u(t,q)=\frac{b}{\kappa}\log w(t,q),
    \qquad w(t,q)>0.
\]
Define
\begin{equation}
\label{eq:hat_lambda_utility}
    \widehat{\lambda}
    :=
    \lambda
    \left(
        \frac{\kappa}{\kappa+b\gamma}
    \right)^{\frac{\kappa}{b\gamma}+1}
    e^{-\kappa a/b}.
\end{equation}
Then \(w\) solves the triangular linear system
\begin{align}
\label{eq:PDE_utility}
\left\{
\begin{array}{ll}
\displaystyle
\partial_t w(t,q)
+
\frac{\kappa}{b}
\left(
    g(s)q
    -
    \frac{1}{2}\sigma^2\gamma q^2
\right)
w(t,q)
+
\widehat{\lambda}w(t,q-1)
=0,
& q\geq 1,\; t<T,
\\[1.2em]
\displaystyle
w(T,q)
=
\exp\left\{
    -\frac{\kappa}{b}qI(q)
\right\},
& q\geq 1,
\\[1.0em]
\displaystyle
w(t,0)=1.
&
\end{array}
\right.
\end{align}
The unconstrained optimal feedback quote is
\begin{equation}
\label{eq:optimal_delta_feedback_utility_w}
    \delta^{\mathrm{unc}}(t,q)
    =
    \frac{1}{b\gamma}
    \log\left(\frac{\kappa+b\gamma}{\kappa}\right)
    +
    \frac{a}{b}
    +
    \frac{1}{\kappa}
    \log\frac{w(t,q)}{w(t,q-1)}.
\end{equation}
If the admissible quote set is \([\delta_{\min},\delta_{\max}]\), then the
admissible feedback quote is obtained by projection:
\begin{equation}
\label{eq:optimal_delta_feedback_utility}
    \delta^*(t,q)
    =
    \Pi_{[\delta_{\min},\delta_{\max}]}
    \bigl(\delta^{\mathrm{unc}}(t,q)\bigr).
\end{equation}
\end{lemma}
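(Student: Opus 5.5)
The plan mirrors the proof of Lemma~\ref{lem:optimal_quote_case_I}. I substitute the CARA ansatz \eqref{eq:ansatz_utility} into \eqref{eq:HJB_equations_utility}, factor out the common negative term \(U\), optimize the jump term pointwise in \(\delta\), and then apply the logarithmic change of variables \(u=\frac{b}{\kappa}\log w\).

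\textbf{Step 1 (substitution).} With \(U=-\exp\{-\gamma(x+qM+u(t,q))\}\) the derivatives are
\[
\partial_tU=-\gamma\,\partial_tu\,U,\qquad
\partial_MU=-\gamma q\,U,\qquad
\partial_{MM}U=\gamma^2q^2U .
\]
For the post-fill state \((x+M-a+b\delta,\,q-1)\) one gets
\[
U(t,x+M-a+b\delta,M,q-1)=U(t,x,M,q)\,e^{-\gamma(-a+b\delta-\Delta u)},
\qquad \Delta u:=u(t,q)-u(t,q-1).
\]
Since \(U<0\), dividing the HJB by \(U\) turns the supremum into an infimum:
\[
-\gamma\partial_tu-\gamma g(s)q+\tfrac12\sigma^2\gamma^2q^2
+\inf_{\delta}\Bigl\{\lambda e^{-\kappa\delta}\bigl(e^{-\gamma(b\delta-a-\Delta u)}-1\bigr)\Bigr\}=0 .
\]
The terminal and boundary conditions translate into \(u(T,q)=-qI(q)\) and \(u(t,0)=0\).

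\textbf{Step 2 (pointwise optimization).} Set \(E(\delta):=e^{-\gamma(b\delta-a-\Delta u)}\) and \(f(\delta):=e^{-\kappa\delta}(E(\delta)-1)\). Then
\[
f'(\delta)=e^{-\kappa\delta}\bigl[\kappa-(\kappa+b\gamma)E(\delta)\bigr].
\]
The bracket is strictly increasing in \(\delta\), running from \(-\infty\) to \(\kappa>0\). Hence \(f\) has a unique global minimizer, characterized by \(E=\kappa/(\kappa+b\gamma)\). This gives
\[
\delta^{\mathrm{unc}}=\frac{a}{b}+\frac{\Delta u}{b}+\frac{1}{b\gamma}\log\frac{\kappa+b\gamma}{\kappa}.
\]
At the minimizer,
\[
f=-e^{-\kappa\delta^{\mathrm{unc}}}\frac{b\gamma}{\kappa+b\gamma},
\qquad
e^{-\kappa\delta^{\mathrm{unc}}}=e^{-\kappa a/b}e^{-\kappa\Delta u/b}\Bigl(\tfrac{\kappa}{\kappa+b\gamma}\Bigr)^{\kappa/(b\gamma)}.
\]
Combining these, the infimum equals \(-\frac{b\gamma}{\kappa}\widehat\lambda\,e^{-\kappa\Delta u/b}\), with \(\widehat\lambda\) as in \eqref{eq:hat_lambda_utility}. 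This is exactly where the exponent \(\frac{\kappa}{b\gamma}+1\) in \(\widehat\lambda\) comes from.

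\textbf{Step 3 (linearization).} Dividing by \(-\gamma\) gives the nonlinear ODE
\[
\partial_tu+g(s)q-\tfrac12\sigma^2\gamma q^2+\tfrac{b}{\kappa}\widehat\lambda\,e^{-\kappa\Delta u/b}=0 .
\]
Now set \(u=\frac b\kappa\log w\), so that \(\partial_tu=\frac b\kappa\,\partial_tw/w\) and \(e^{-\kappa\Delta u/b}=w(t,q-1)/w(t,q)\). Multiplying through by \(\kappa w(t,q)/b\) yields \eqref{eq:PDE_utility}. The terminal and boundary data become \(w(T,q)=e^{-\frac{\kappa}{b}qI(q)}\) and \(w(t,0)=1\).

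Substituting \(\Delta u=\frac b\kappa\log\frac{w(t,q)}{w(t,q-1)}\) into \(\delta^{\mathrm{unc}}\) gives \eqref{eq:optimal_delta_feedback_utility_w}. For the constrained case, \(f\) is decreasing to the left of \(\delta^{\mathrm{unc}}\) and increasing to its right. Its minimum over \([\delta_{\min},\delta_{\max}]\) is therefore attained at the projection, which is \eqref{eq:optimal_delta_feedback_utility}.

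\textbf{Main obstacle.} I expect the delicate step to be Step 2: tracking the sign flip caused by \(U<0\), and confirming that the critical point is a global minimizer rather than merely stationary. The monotonicity of the bracket in \(f'\) settles the latter cleanly. The rest is bookkeeping of constants so that the factor \(b\gamma/\kappa\) cancels correctly and produces \(\widehat\lambda\).
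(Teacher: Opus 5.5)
Your proposal is correct and follows essentially the same route as the paper: the CARA ansatz, factoring $U$ out of the HJB, the first-order condition giving $\delta^{\mathrm{unc}}=\frac{a}{b}+\frac{\Delta u}{b}+\frac{1}{b\gamma}\log\frac{\kappa+b\gamma}{\kappa}$, and the substitution $u=\frac{b}{\kappa}\log w$ to linearize. The differences are only cosmetic. You divide by $U<0$ and minimize, whereas the paper divides by $-\gamma U>0$ and maximizes; you also check global optimality and compute the optimized value inline, while the paper defers the optimality check to its separate Hamiltonian-maximizer lemma.
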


\begin{proof}
The proof is analogous to that of Lemma~\ref{lem:optimal_quote_case_I}, with the exponential-utility ansatz replacing the affine ansatz. Substituting the ansatz into the HJB equation, optimizing over the quote depth, and applying the logarithmic transformation yields the stated triangular system and feedback quote. The details are provided in Appendix~\ref{app:optimal-quote-caseIII-proof}.
\end{proof}

Compared with Cases I and II, the exponential
utility criterion modifies the coefficient multiplying \(w(t,q)\) in
\eqref{eq:PDE_utility} by adding the certainty-equivalent inventory-risk term
\(
    -\frac12\sigma^2\gamma q^2.
\)
Thus the coefficients multiplying \(w(t,q)\) and \(w(t,q-1)\) in
\eqref{eq:PDE_utility} are
\[
    A_q^{(U)}
    :=
    \frac{\kappa}{b}
    \left(
        g(s)q-\frac12\sigma^2\gamma q^2
    \right),
    \qquad
    C^{(U)}
    :=
    \widehat{\lambda}.
\]
Financially, the term \(\frac12\sigma^2\gamma q^2\) represents the
certainty-equivalent cost of holding inventory under price volatility and
exponential utility. Hence Case III also has the common triangular form introduced
in Section~\ref{subsec:common_recursive_structure} below.

\subsection{Optimal solution to Case IV}
\label{subsec:solution_case_IV}

We finally solve the problem with exponential utility and running inventory risk.
For \(t\in[0,T]\), define the conditional value function
\begin{equation}
\label{eq:value_function_utility_inventory}
\begin{aligned}
    F(t,x,M,q)
    =
    \sup_{\delta\in\mathcal A}
    \mathbb E_{t,x,M,q}
    \Bigg[
        -\exp\Bigg\{
            -\gamma
            \Bigg(
                X_\tau^\delta
                +
                Q_\tau^\delta
                \bigl(M_\tau-I(Q_\tau^\delta)\bigr)
                -
                \int_t^\tau J(Q_s^\delta)\,\dd s
            \Bigg)
        \Bigg\}
    \Bigg],
\end{aligned}
\end{equation}
where \(\mathbb E_{t,x,M,q}[\cdot]\) denotes conditional expectation given
\[
    X_{t-}^\delta=x,
    \qquad
    M_t=M,
    \qquad
    Q_{t-}^\delta=q.
\]

We first explain the origin of the running-cost term in the HJB equation. Over
a short time interval \([t,t+\Delta t]\), if the inventory remains equal to
\(q\), then
\[
    \int_t^{t+\Delta t}J(Q_s^\delta)\,\dd s
    =
    J(q)\Delta t+o(\Delta t).
\]
Since this term is subtracted from the monetary payoff before applying
exponential utility, the contribution of this short interval to the payoff is
\[
    -J(q)\Delta t+o(\Delta t).
\]
Hence, inside the exponential utility, this generates the factor
\[
    \exp\{-\gamma(-J(q)\Delta t+o(\Delta t))\}
    =
    1+\gamma J(q)\Delta t+o(\Delta t).
\]
More precisely, the dynamic programming expansion gives
\[
\begin{aligned}
F(t,x,M,q)
\approx
\sup_{\delta\in\mathcal A}
\mathbb E_{t,x,M,q}
\Big[
\bigl(1+\gamma J(q)\Delta t\bigr)
F\bigl(
    t+\Delta t,
    X_{t+\Delta t}^\delta,
    M_{t+\Delta t},
    Q_{t+\Delta t}^\delta
\bigr)
\Big].
\end{aligned}
\]
Because \(J(q)\) is deterministic conditional on the current state
\((t,x,M,q)\), this multiplicative factor can be taken outside the conditional
expectation. For a fixed admissible control, suppressing the optimization
notation for readability,
\[
\begin{aligned}
F(t,x,M,q)
\approx
\bigl(1+\gamma J(q)\Delta t\bigr)
\mathbb E_{t,x,M,q}
\left[
F\bigl(
    t+\Delta t,
    X_{t+\Delta t}^\delta,
    M_{t+\Delta t},
    Q_{t+\Delta t}^\delta
\bigr)
\right].
\end{aligned}
\]
Let \(\mathcal L^\delta\) denote the infinitesimal generator of the controlled
state process without the running inventory penalty. Then
\[
\begin{aligned}
\mathbb E_{t,x,M,q}
\left[
F\bigl(
    t+\Delta t,
    X_{t+\Delta t}^\delta,
    M_{t+\Delta t},
    Q_{t+\Delta t}^\delta
\bigr)
\right]
=
F(t,x,M,q)
+
\Delta t\,\mathcal L^\delta F(t,x,M,q)
+
o(\Delta t).
\end{aligned}
\]
Combining the previous two expansions gives
\[
\begin{aligned}
F(t,x,M,q)
\approx
F(t,x,M,q)
+
\Delta t\,\mathcal L^\delta F(t,x,M,q)
+
\gamma J(q)\Delta t\,F(t,x,M,q)
+
o(\Delta t).
\end{aligned}
\]
After subtracting \(F(t,x,M,q)\), dividing by \(\Delta t\), and letting
\(\Delta t\to0\), the running inventory penalty contributes the zeroth-order
term
\[
    \gamma J(q)F(t,x,M,q)
\]
to the HJB equation. Since \(F<0\), this term lowers the value when \(J(q)>0\),
as expected for an inventory-carrying cost.

The
dynamic programming principle therefore yields
\begin{align}
\label{eq:HJB_equations_utility_inventory}
\left\{
\begin{array}{ll}
\displaystyle
\partial_t F(t,x,M,q)
+
g(s)\partial_M F(t,x,M,q)
+
\frac{1}{2}\sigma^2\partial_{MM}F(t,x,M,q)
+
\gamma J(q)F(t,x,M,q)
& \\[0.6em]
\displaystyle\qquad
+
\sup_{\delta\in\mathcal A}
\left\{
    \lambda e^{-\kappa\delta}
    \left[
        F\bigl(t,x+M-a+b\delta,M,q-1\bigr)
        -
        F(t,x,M,q)
    \right]
\right\}
=0,
& q\geq 1,\; t<T,
\\[1.2em]
\displaystyle
F(T,x,M,q)
=
-\exp\left\{
    -\gamma
    \left(
        x+q\bigl(M-I(q)\bigr)
    \right)
\right\},
& q\geq 1,
\\[1.0em]
\displaystyle
F(t,x,M,0)
=
-\exp\{-\gamma x\},
& t\leq T.
\end{array}
\right.
\end{align}
This HJB combines the effects of Cases II and III. The exponential utility
changes the terminal and boundary conditions, while the running inventory cost
appears as the additional zeroth-order term \(\gamma J(q)F\).

\begin{lemma}[Optimal quote in Case IV]
\label{lem:optimal_quote_case_IV}
Suppose that the unconstrained maximizer is admissible. Then the value function
admits the CARA representation
\begin{equation}
\label{eq:ansatz_utility_inventory}
    F(t,x,M,q)
    =
    -\exp\left\{
        -\gamma\bigl(x+qM+f(t,q)\bigr)
    \right\},
\end{equation}
where
\[
    f(t,q)=\frac{b}{\kappa}\log w(t,q),
    \qquad w(t,q)>0.
\]
Let \(\widehat{\lambda}\) be the effective execution coefficient defined in
\eqref{eq:hat_lambda_utility}. Then \(w\) solves
\begin{align}
\label{eq:PDE_utility_inventory}
\left\{
\begin{array}{ll}
\displaystyle
\partial_t w(t,q)
+
\frac{\kappa}{b}
\left[
    g(s)q
    -
    \frac{1}{2}\sigma^2\gamma q^2
    -
    J(q)
\right]
w(t,q)
+
\widehat{\lambda}w(t,q-1)
=0,
& q\geq 1,\; t<T,
\\[1.2em]
\displaystyle
w(T,q)
=
\exp\left\{
    -\frac{\kappa}{b}qI(q)
\right\},
& q\geq 1,
\\[1.0em]
\displaystyle
w(t,0)=1.
&
\end{array}
\right.
\end{align}
The unconstrained optimal feedback quote is
\begin{equation}
\label{eq:optimal_delta_feedback_utility_inventory_w}
    \delta^{\mathrm{unc}}(t,q)
    =
    \frac{1}{b\gamma}
    \log\left(\frac{\kappa+b\gamma}{\kappa}\right)
    +
    \frac{a}{b}
    +
    \frac{1}{\kappa}
    \log\frac{w(t,q)}{w(t,q-1)}.
\end{equation}
If the admissible quote set is \([\delta_{\min},\delta_{\max}]\), then
\[
    \delta^*(t,q)
    =
    \Pi_{[\delta_{\min},\delta_{\max}]}
    \bigl(\delta^{\mathrm{unc}}(t,q)\bigr).
\]
\end{lemma}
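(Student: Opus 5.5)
We follow the same three steps as in Lemma~\ref{lem:optimal_quote_case_III}: substitute the CARA ansatz into \eqref{eq:HJB_equations_utility_inventory}, solve the pointwise maximisation in $\delta$, and then apply the logarithmic change of variables. The only new ingredient relative to Case III is the zeroth-order term $\gamma J(q)F$.

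\emph{Substituting the ansatz.} Write $F=-e^{-\gamma\Phi}$ with $\Phi=x+qM+f(t,q)$, so that $F<0$. Then
\[
\partial_t F=-\gamma\,\partial_t f\,F,\qquad \partial_M F=-\gamma q F,\qquad \partial_{MM}F=\gamma^2q^2F .
\]
After a jump,
\[
F(t,x+M-a+b\delta,M,q-1)=F(t,x,M,q)\,e^{-\gamma(-a+b\delta-\Delta f)},\qquad \Delta f:=f(t,q)-f(t,q-1).
\]
The term $M$ from the cash increment cancels exactly against the loss of $qM$ in the mark-to-market. Every term of the HJB therefore carries the common factor $F$. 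Since $F<0$, dividing by $-\gamma F$ turns the $\sup$ into an $\inf$ of $\lambda e^{-\kappa\delta}\bigl(e^{-\gamma(-a+b\delta-\Delta f)}-1\bigr)/\gamma$. The resulting equation is
\[
\partial_t f+g(s)q-\tfrac12\gamma\sigma^2q^2-J(q)-\inf_\delta\Bigl\{\tfrac{\lambda}{\gamma}e^{-\kappa\delta}\bigl(e^{-\gamma(-a+b\delta-\Delta f)}-1\bigr)\Bigr\}=0 .
\]
The sign in front of $J$ must be tracked carefully: $\gamma J F/(-\gamma F)=-J$. This is the step I regard as the main (if modest) obstacle, because a sign error here would turn the penalty into a reward. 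The boundary data are $f(T,q)=-qI(q)$ and $f(t,0)=0$.

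\emph{Optimising over $\delta$.} Set $y=e^{-b\gamma\delta}$. The objective equals $e^{-\kappa\delta}\bigl(c\,e^{-b\gamma\delta}-1\bigr)$ with $c=e^{\gamma(a+\Delta f)}$. Differentiating in $\delta$ gives the first-order condition $(\kappa+b\gamma)\,c\,e^{-b\gamma\delta}=\kappa$. Solving,
\[
\delta^{\mathrm{unc}}=\frac{1}{b\gamma}\log\frac{\kappa+b\gamma}{\kappa}+\frac{a+\Delta f}{b}.
\]
This is a minimum because the objective is strictly convex in the relevant range; alternatively, it is the unique critical point and the objective tends to $+\infty$ or $0^-$ at the ends. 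Plugging back, the optimal value is
\[
-\frac{\lambda}{\gamma}\cdot\frac{b\gamma}{\kappa+b\gamma}\,e^{-\kappa\delta^{\mathrm{unc}}}
=-\frac{\lambda b}{\kappa}\Bigl(\frac{\kappa}{\kappa+b\gamma}\Bigr)^{\frac{\kappa}{b\gamma}+1}e^{-\kappa a/b}\,e^{-\kappa\Delta f/b}
=-\frac{b}{\kappa}\,\widehat\lambda\, e^{-\kappa\Delta f/b}.
\]
The equation for $f$ therefore becomes
\[
\partial_t f+g(s)q-\tfrac12\gamma\sigma^2q^2-J(q)+\tfrac{b}{\kappa}\widehat\lambda\,e^{-\kappa\Delta f/b}=0 .
\]

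\emph{Logarithmic transformation.} Put $f=\frac{b}{\kappa}\log w$. Then $e^{-\kappa\Delta f/b}=w(t,q-1)/w(t,q)$ and $\partial_t f=\frac{b}{\kappa}\,\partial_t w/w$. Multiplying the equation for $f$ by $\kappa w/b$ gives \eqref{eq:PDE_utility_inventory}. The terminal data become $w(T,q)=e^{-\kappa qI(q)/b}$ and $w(t,0)=1$. The quote \eqref{eq:optimal_delta_feedback_utility_inventory_w} follows from $\Delta f/b=\kappa^{-1}\log\bigl(w(t,q)/w(t,q-1)\bigr)$.

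\emph{Constrained case.} For a bounded admissible set the objective is unimodal in $\delta$: it is decreasing before $\delta^{\mathrm{unc}}$ and increasing after it. The constrained minimiser is therefore the projection $\Pi_{[\delta_{\min},\delta_{\max}]}(\delta^{\mathrm{unc}})$.

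As a consistency check, setting $J\equiv0$ recovers Lemma~\ref{lem:optimal_quote_case_III}.
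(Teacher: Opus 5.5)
Your proposal is correct and follows essentially the same route as the paper's proof in Appendix~\ref{app:optimal-quote-caseIV-proof}. The steps match: the CARA ansatz, division by $-\gamma F>0$ (which produces the $-J(q)$ term), the first-order condition giving $\delta^{\mathrm{unc}}$, substitution yielding the term $\frac{b}{\kappa}\widehat{\lambda}e^{-\kappa\Delta f/b}$, and the logarithmic change of variables. Writing the Hamiltonian as $-\inf$ rather than a $\sup$ is only cosmetic (strictly, pulling out the negative factor $F$, not dividing by the positive $-\gamma F$, swaps sup and inf), and your unimodality argument is the one the paper gives in Lemma~\ref{lem:Hamiltonian_maximizers}.
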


\begin{proof}
The proof follows the same structure as the proof of Lemma~\ref{lem:optimal_quote_case_III}. The only additional contribution is the running inventory cost, which produces the term \(-J(q)\) in the reduced certainty-equivalent equation. After optimizing over the quote depth and applying the logarithmic transformation, one obtains the stated triangular system and feedback quote. The details are provided in Appendix~\ref{app:optimal-quote-caseIV-proof}.
\end{proof}

Compared with Case III, the running inventory penalty further modifies the
coefficient multiplying \(w(t,q)\) in \eqref{eq:PDE_utility_inventory} by
subtracting \(J(q)\). Thus the coefficients multiplying \(w(t,q)\) and
\(w(t,q-1)\) in \eqref{eq:PDE_utility_inventory} are
\[
    A_q^{(F)}
    :=
    \frac{\kappa}{b}
    \left(
        g(s)q
        -
        \frac12\sigma^2\gamma q^2
        -
        J(q)
    \right),
    \qquad
    C^{(F)}
    :=
    \widehat{\lambda}.
\]
Financially, inventory is penalized both through the explicit carrying cost
\(J(q)\) and through the exponential-utility volatility penalty
\(\frac12\sigma^2\gamma q^2\). Hence Case IV also has the common triangular form
introduced in Section~\ref{subsec:common_recursive_structure} below.

\subsection{A common recursive structure}
\label{subsec:common_recursive_structure}

The transformations above show that all four cases lead to the same linear
triangular structure. More precisely, after the appropriate change of variables,
the function \(w(t,q)\) satisfies
\begin{align}
\label{eq:PDE_general}
\left\{
\begin{array}{ll}
\displaystyle
\partial_t w(t,q)
+
A_q w(t,q)
+
C w(t,q-1)
=0,
& q\geq 1,\; t<T,
\\[1.1em]
\displaystyle
w(T,q)=G_q,
& q\geq 1,
\\[0.6em]
\displaystyle
w(t,0)=1.
&
\end{array}
\right.
\end{align}
Here \(A_q\in\mathbb R\), \(C>0\), and \(G_q>0\) are deterministic
coefficients. We set
\[
    A_0=0,
    \qquad
    G_0=1,
\]
which is consistent with the boundary condition \(w(t,0)=1\). In the four cases
considered above, the terminal condition is
\begin{equation}
\label{eq:common_terminal_condition}
    G_q
    =
    \exp\left\{
        -\frac{\kappa}{b}qI(q)
    \right\},
    \qquad q=0,\dots,Q_0.
\end{equation}

The coefficients \(A_q\) and \(C\) depend on the objective criterion:
\begin{equation}
\label{eq:case_coefficients_A_C}
\begin{array}{lll}
\text{Case I:}
&
\displaystyle
A_q=\frac{\kappa}{b}g(s)q,
&
\displaystyle
C=\lambda\exp\left\{-\frac{\kappa a}{b}-1\right\},
\\[1.2em]
\text{Case II:}
&
\displaystyle
A_q=\frac{\kappa}{b}\bigl(g(s)q-J(q)\bigr),
&
\displaystyle
C=\lambda\exp\left\{-\frac{\kappa a}{b}-1\right\},
\\[1.2em]
\text{Case III:}
&
\displaystyle
A_q=\frac{\kappa}{b}
\left(
    g(s)q-\frac{1}{2}\sigma^2\gamma q^2
\right),
&
\displaystyle
C=\widehat{\lambda},
\\[1.2em]
\text{Case IV:}
&
\displaystyle
A_q=\frac{\kappa}{b}
\left(
    g(s)q-\frac{1}{2}\sigma^2\gamma q^2-J(q)
\right),
&
\displaystyle
C=\widehat{\lambda},
\end{array}
\end{equation}
where
\begin{equation}
\label{eq:hat_lambda_common}
    \widehat{\lambda}
    :=
    \lambda
    \left(
        \frac{\kappa}{\kappa+b\gamma}
    \right)^{\frac{\kappa}{b\gamma}+1}
    e^{-\kappa a/b}.
\end{equation}

Thus the four optimization problems differ only through the coefficient
\(A_q\), which encodes the financial trade-offs associated with signal drift,
running inventory cost, and price risk, and through the effective execution
coefficient \(C\). 

The system \eqref{eq:PDE_general} is triangular in the inventory variable:
for each \(q\geq1\), the equation for \(w(t,q)\) depends only on \(w(t,q)\) and
\(w(t,q-1)\). Therefore, given \(w(\cdot,q-1)\), the
variation-of-constants formula gives the recursive representation
\begin{equation}
\label{eq:w_recursive_solution_general}
    w(t,q)
    =
    e^{A_q(T-t)}G_q
    +
    C\int_t^T e^{A_q(u-t)}w(u,q-1)\,\dd u.
\end{equation}
This recursion provides a constructive way to compute \(w(t,q)\) successively
for \(q=1,\dots,Q_0\). In
Section~\ref{subsec:explicit_triangular_solution}, we show that this common
system can in fact be solved fully explicitly, even for more general terminal
conditions \(G_q\). The recursive representation is therefore useful both as a
constructive solution method and as an intermediate step toward the closed-form
solution.

\begin{remark}[Deterministic time-dependent coefficients]
\label{rem:time_dependent_coefficients}
If the signal or volatility is deterministic and time-dependent, the triangular
structure remains valid with \(A_q\) replaced by \(A_q(t)\). In that case
\eqref{eq:w_recursive_solution_general} becomes
\[
    w(t,q)
    =
    G_q
    \exp\left\{
        \int_t^T A_q(r)\,\dd r
    \right\}
    +
    C
    \int_t^T
    \exp\left\{
        \int_t^u A_q(r)\,\dd r
    \right\}
    w(u,q-1)\,\dd u .
\]
Thus the recursion remains valid, but the constant-coefficient
divided-difference formula in Section~\ref{subsec:explicit_triangular_solution} below 
no longer applies directly. If the signal is stochastic, it must be included as
an additional state variable in the HJB equation.
\end{remark}

\subsection{Explicit solution of the triangular system}
\label{subsec:explicit_triangular_solution}

The transformations introduced in the previous subsections show that all four
control problems reduce to the same triangular linear system. Therefore, we don't need to solve
the resulting ODEs separately for each objective, we can solve the general
system given by \eqref{eq:PDE_general}.

\subsubsection{Non-degenerate case}
\label{subsubsec:explicit_triangular_nondegenerate}

We first consider the non-degenerate case
\begin{equation}
\label{eq:nondegenerate_Aq}
    A_i\neq A_j,
    \qquad
    i\neq j,
    \qquad
    i,j\in\{0,\dots,Q_0\}.
\end{equation}
Under this assumption, the denominators in the representation below are
well defined, and the solution admits a finite-sum representation.

\begin{proposition}[Explicit solution in the non-degenerate case]
\label{prop:explicit_solution_triangular_system}
Suppose that \eqref{eq:nondegenerate_Aq} holds. For \(0\leq r\leq q\), define
\begin{equation}
\label{eq:Phi_rq_definition}
    \Phi_{r,q}(\tau)
    :=
    \sum_{i=r}^q
    \frac{e^{A_i\tau}}
    {\displaystyle\prod_{\substack{j=r\\ j\neq i}}^q (A_i-A_j)}.
\end{equation}
Then the solution of \eqref{eq:PDE_general} is
\begin{equation}
\label{eq:general_explicit_solution}
    w(t,q)
    =
    \sum_{r=0}^q
    C^{\,q-r}G_r\,\Phi_{r,q}(T-t),
    \qquad q=0,1,\dots,Q_0,
\end{equation}
where \(C\) is the coefficient in \eqref{eq:PDE_general}. Equivalently,
\begin{equation}
\label{eq:general_explicit_solution_expanded}
    w(t,q)
    =
    \sum_{r=0}^q
    C^{\,q-r}G_r
    \sum_{i=r}^q
    \frac{e^{A_i(T-t)}}
    {\displaystyle\prod_{\substack{j=r\\ j\neq i}}^q (A_i-A_j)}.
\end{equation}
\end{proposition}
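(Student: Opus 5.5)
We prove the formula by induction on \(q\), using the recursive representation \eqref{eq:w_recursive_solution_general}. Two facts make this efficient. First, the linear system \eqref{eq:PDE_general} with prescribed terminal data has a unique solution, since it is a finite triangular system of linear ODEs with constant coefficients. So it is enough to check that the candidate \eqref{eq:general_explicit_solution} satisfies the ODE, the terminal condition, and the boundary condition. Second, \(\Phi_{r,q}(\tau)\) is the divided difference of \(x\mapsto e^{x\tau}\) at the nodes \(A_r,\dots,A_q\); this is well defined by \eqref{eq:nondegenerate_Aq}.

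\textbf{Base case and terminal condition.} For \(q=0\) we have \(\Phi_{0,0}(\tau)=e^{A_0\tau}=1\), since \(A_0=0\). Hence the formula gives \(w(t,0)=G_0=1\), which is the boundary condition. For the terminal condition at \(t=T\) we need
\[
\Phi_{r,q}(0)=\delta_{rq}.
\]
For \(r=q\) this is immediate. For \(r<q\), \(\Phi_{r,q}(0)\) is the divided difference of the constant function \(1\) over at least two nodes, which vanishes by the partial-fraction identity \(\sum_i \prod_{j\neq i}(A_i-A_j)^{-1}=0\). Therefore \(w(T,q)=G_q\).

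\textbf{ODE.} The key identity is
\[
\frac{d}{d\tau}\Phi_{r,q}(\tau)=A_q\Phi_{r,q}(\tau)+\Phi_{r,q-1}(\tau),\qquad r<q.
\]
Differentiating \(\Phi_{r,q}\) term by term multiplies the \(i\)-th term by \(A_i\). Writing \(A_i=A_q+(A_i-A_q)\), the part with \(A_q\) gives \(A_q\Phi_{r,q}\). In the remaining part, the factor \((A_i-A_q)\) cancels the \(j=q\) factor of the denominator. The \(i=q\) term drops out, and what is left is exactly \(\Phi_{r,q-1}(\tau)\).

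Now set \(\tau=T-t\), so \(\partial_t=-\partial_\tau\). Then
\[
\partial_t w(t,q)+A_q w(t,q)=-\sum_{r=0}^{q-1}C^{q-r}G_r\,\Phi_{r,q-1}(\tau).
\]
Here the \(r=q\) term cancels, because \(\Phi_{q,q}(\tau)=e^{A_q\tau}\) gives \(\partial_t e^{A_q(T-t)}+A_qe^{A_q(T-t)}=0\). The right-hand side equals \(-C\,w(t,q-1)\) by the formula at level \(q-1\). This is exactly the ODE in \eqref{eq:PDE_general}.

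An alternative route is to substitute the formula into \eqref{eq:w_recursive_solution_general} and integrate the exponentials directly. This requires the same identity, in the integrated form
\[
\int_0^\tau e^{A_q(\tau-s)}\Phi_{r,q-1}(s)\,ds=\Phi_{r,q}(\tau).
\]

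\textbf{Main obstacle.} The only nontrivial algebra is the derivative identity for \(\Phi_{r,q}\) together with \(\Phi_{r,q}(0)=0\) for \(r<q\). I would handle both with the partial-fraction argument above. An equivalent approach is Laplace transforms, using \(\int_0^\infty e^{-p\tau}\Phi_{r,q}(\tau)\,d\tau=\prod_{j=r}^q(p-A_j)^{-1}\). Once these are in place, uniqueness of the linear ODE system completes the proof.
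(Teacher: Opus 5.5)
Your proof is correct. It runs in the opposite direction from the paper's: you verify a candidate, while the paper computes the solution.

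The paper inserts the level-$(q-1)$ formula into the variation-of-constants representation. It integrates each exponential to get $(e^{A_i\tau}-e^{A_q\tau})/(A_i-A_q)$. It then uses the Lagrange identity $\sum_{i=r}^{q}\prod_{j\neq i}(A_i-A_j)^{-1}=0$ to identify the collected coefficient of $e^{A_q\tau}$ as $\prod_{j=r}^{q-1}(A_q-A_j)^{-1}$.

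You instead check the candidate in three steps:
\begin{itemize}
\item The recursion $\Phi_{r,q}'=A_q\Phi_{r,q}+\Phi_{r,q-1}$ for $r<q$ gives the ODE. Your proof of it, by splitting $A_i=A_q+(A_i-A_q)$, is correct; it is the Leibniz rule for divided differences applied to $x\mapsto xe^{x\tau}$.
\item The same Lagrange identity is needed only at $\tau=0$, to get $\Phi_{r,q}(0)=0$ for $r<q$.
\item Uniqueness of the scalar linear ODE at each level closes the induction.
\end{itemize}

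The two key facts are one statement in differential and integral form. Your recursion together with $\Phi_{r,q}(0)=0$ integrates to $\int_0^\tau e^{A_q(\tau-s)}\Phi_{r,q-1}(s)\dd s=\Phi_{r,q}(\tau)$, which is exactly what the paper's computation establishes.

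What each route buys:
\begin{itemize}
\item Yours avoids regrouping the $e^{A_q\tau}$ coefficients and makes the divided-difference structure explicit.
\item The paper's computes $v_q$ directly from $v_{q-1}$ rather than checking a guessed formula. It also needs no separate uniqueness appeal, since variation of constants already produces the unique solution.
\end{itemize}
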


\begin{proof}[Idea of proof]
Set
\[
    v_q(\tau):=w(T-\tau,q),
    \qquad 0\leq \tau\leq T.
\]
Then \eqref{eq:PDE_general} becomes
\[
    \frac{\dd}{\dd\tau}v_q(\tau)
    =
    A_qv_q(\tau)+Cv_{q-1}(\tau),
    \qquad
    v_q(0)=G_q,
    \qquad
    v_0(\tau)=1.
\]
For each \(q\geq 1\), this is a linear inhomogeneous ODE driven by
\(v_{q-1}\). The variation-of-constants formula gives
\[
    v_q(\tau)
    =
    e^{A_q\tau}G_q
    +
    C\int_0^\tau e^{A_q(\tau-u)}v_{q-1}(u)\,\dd u.
\]
Starting from \(v_0(\tau)=1\), one obtains \(v_q\) recursively. Substituting
the finite-sum expression for \(v_{q-1}\) into the integral gives
\[
\begin{aligned}
    v_q(\tau)
    &=
    e^{A_q\tau}G_q
    +
    \sum_{r=0}^{q-1}
    C^{q-r}G_r
    \sum_{i=r}^{q-1}
    \frac{
        e^{A_i\tau}-e^{A_q\tau}
    }{
        (A_i-A_q)
        \displaystyle\prod_{\substack{j=r\\ j\neq i}}^{q-1}(A_i-A_j)
    }.
\end{aligned}
\]
This expression can be rewritten as
\[
    v_q(\tau)
    =
    \sum_{r=0}^{q}
    C^{q-r}G_r
    \sum_{i=r}^{q}
    \frac{e^{A_i\tau}}
    {\displaystyle\prod_{\substack{j=r\\ j\neq i}}^{q}(A_i-A_j)}.
\]
The algebraic rearrangement follows from the standard divided-difference
identity. A detailed induction proof is provided in
Appendix~\ref{app:triangular-system-proof}. Finally, setting \(\tau=T-t\)
yields \eqref{eq:general_explicit_solution}.
\end{proof}

\begin{remark}[Coinciding coefficients]
The non-degeneracy assumption \eqref{eq:nondegenerate_Aq} is imposed only to
avoid vanishing denominators in
\eqref{eq:general_explicit_solution_expanded}. It is not needed for
well-posedness of the triangular ODE system. If some coefficients coincide, the
solution is obtained by taking the corresponding limits in the explicit
formula.

For example,
\[
    \lim_{A_1\to 0}
    \frac{e^{A_1(T-t)}-1}{A_1}
    =
    T-t.
\]
More generally,
\[
    \lim_{A_2\to A_1}
    \frac{
        e^{A_2(T-t)}-e^{A_1(T-t)}
    }{
        A_2-A_1
    }
    =
    (T-t)e^{A_1(T-t)}.
\]
Thus repeated coefficients lead to polynomial factors multiplying the
corresponding exponential terms, as in the standard repeated-root case for
linear ODE systems.
\end{remark}

\subsubsection{Degenerate case}
\label{subsubsec:explicit_triangular_degenerate}

We next consider the completely degenerate case
\[
    A_q=0,
    \qquad q=0,1,\dots,Q_0.
\]
Then \eqref{eq:PDE_general} reduces to
\begin{align}
\label{eq:general_triangular_zero_system}
\left\{
\begin{array}{ll}
\displaystyle
\partial_t w(t,q)+Cw(t,q-1)=0,
& q\geq 1,\; t<T,
\\[0.8em]
\displaystyle
w(T,q)=G_q,
& q\geq 1,
\\[0.6em]
\displaystyle
w(t,0)=G_0=1.
&
\end{array}
\right.
\end{align}

\begin{lemma}[Explicit solution in the degenerate case]
\label{lem:explicit_solution_degenerate_case}
The solution of \eqref{eq:general_triangular_zero_system} is
\begin{equation}
\label{eq:general_explicit_solution_zero}
    w(t,q)
    =
    \sum_{i=0}^q
    \frac{C^i}{i!}
    G_{q-i}(T-t)^i,
    \qquad q=0,1,\dots,Q_0.
\end{equation}
\end{lemma}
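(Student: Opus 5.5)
We prove \eqref{eq:general_explicit_solution_zero} by induction on \(q\), using the recursive representation \eqref{eq:w_recursive_solution_general} specialised to \(A_q=0\). With the time-to-maturity variable \(v_q(\tau):=w(T-\tau,q)\), the system \eqref{eq:general_triangular_zero_system} becomes
\[
    v_q'(\tau)=C\,v_{q-1}(\tau),\qquad v_q(0)=G_q,\qquad v_0\equiv 1 .
\]
For each \(q\geq1\) this is a pure integration, so
\[
    v_q(\tau)=G_q+C\int_0^\tau v_{q-1}(u)\,\dd u .
\]
This is exactly \eqref{eq:w_recursive_solution_general} with \(A_q=0\) after the substitution \(u\mapsto T-u\).

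The base case \(q=0\) is immediate: the claimed sum reduces to its \(i=0\) term \(G_0=1\), which matches the boundary condition.

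For the inductive step, assume that \(v_{q-1}(\tau)=\sum_{i=0}^{q-1}\frac{C^i}{i!}G_{q-1-i}\tau^i\). Integrating term by term gives
\[
    C\int_0^\tau v_{q-1}(u)\,\dd u=\sum_{i=0}^{q-1}\frac{C^{i+1}}{(i+1)!}G_{q-1-i}\tau^{i+1}
    =\sum_{k=1}^{q}\frac{C^{k}}{k!}G_{q-k}\tau^{k},
\]
where we reindexed with \(k=i+1\). Adding the constant \(G_q\), which is the \(k=0\) term, yields \(v_q(\tau)=\sum_{k=0}^q\frac{C^k}{k!}G_{q-k}\tau^k\). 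Setting \(\tau=T-t\) gives \eqref{eq:general_explicit_solution_zero}.

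Uniqueness holds because each equation is a linear ODE whose right-hand side is already determined by the previous level, so the triangular system has exactly one solution. If desired, one can also verify the formula directly: differentiating the claimed expression in \(t\) gives \(-C\sum_{i=1}^q\frac{C^{i-1}}{(i-1)!}G_{q-i}(T-t)^{i-1}=-C\,w(t,q-1)\), and at \(t=T\) only the \(i=0\) term survives, giving \(G_q\).

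No step is a genuine obstacle; the only care needed is the index shift in the reindexing. As a consistency check, the formula should agree with the limit \(A_i\to0\) of Proposition~\ref{prop:explicit_solution_triangular_system}. Indeed \(\Phi_{r,q}(\tau)\) is the divided difference of \(e^{x\tau}\) at the nodes \(A_r,\dots,A_q\), and when all nodes coalesce at \(0\) it tends to \(\tau^{q-r}/(q-r)!\). Substituting this limit reproduces the formula with \(i=q-r\).
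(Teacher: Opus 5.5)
Your proof is correct and follows the paper's argument: pass to time-to-maturity, write $v_q(\tau)=G_q+C\int_0^\tau v_{q-1}(u)\,\dd u$, and induct on $q$. You also spell out the index shift, and add a uniqueness remark and a direct verification, all of which the paper leaves implicit.
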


\begin{proof}[Proof]
Let \(\tau=T-t\) and define \(v_q(\tau)=w(T-\tau,q)\). Then
\eqref{eq:general_triangular_zero_system} becomes
\[
    \frac{\dd}{\dd\tau}v_q(\tau)
    =
    Cv_{q-1}(\tau),
    \qquad
    v_q(0)=G_q,
    \qquad
    v_0(\tau)=1.
\]
Thus
\[
    v_q(\tau)
    =
    G_q
    +
    C\int_0^\tau v_{q-1}(u)\,\dd u.
\]
Starting from \(v_0(\tau)=1\), induction on \(q\) gives
\[
    v_q(\tau)
    =
    \sum_{i=0}^q
    \frac{C^i}{i!}G_{q-i}\tau^i.
\]
Setting \(\tau=T-t\) yields \eqref{eq:general_explicit_solution_zero}.
\end{proof}

Under the terminal condition used in the optimal execution problems \eqref{eq:common_terminal_condition},
the degenerate solution becomes
\begin{equation}
\label{eq:general_explicit_solution_zero_execution}
    w(t,q)
    =
    \sum_{i=0}^q
    \frac{C^i}{i!}
    \exp\left\{
        -\frac{\kappa}{b}(q-i)I(q-i)
    \right\}
    (T-t)^i.
\end{equation}
In particular, for the linear per-share terminal penalty
\[
    I(q)=\alpha q,
    \qquad \alpha\geq 0,
\]
we obtain
\begin{equation}
\label{eq:general_explicit_solution_zero_linear_impact}
    w(t,q)
    =
    \sum_{i=0}^q
    \frac{C^i}{i!}
    \exp\left\{
        -\frac{\kappa}{b}\alpha(q-i)^2
    \right\}
    (T-t)^i.
\end{equation}

\subsection{Unified explicit optimal quotes under all four cases}

The coefficients \(A_q\) and \(C\) in the triangular system
\eqref{eq:PDE_general} depend on the objective criterion. The
following proposition summarizes the corresponding choices and the resulting
optimal feedback quotes.

\begin{proposition}[Explicit optimal quotes for the four cases]
\label{prop:explicit_optimal_quotes_four_cases}
Let \(w(t,q)\) be the solution of the triangular system
\eqref{eq:PDE_general}, with coefficients \(A_q\) and \(C\)
chosen according to \eqref{eq:case_coefficients_A_C}. Suppose the terminal
condition is
\[
    G_q
    =
    \exp\left\{
        -\frac{\kappa}{b}qI(q)
    \right\},
    \qquad q=0,\dots,Q_0.
\]
Then \(w(t,q)\) is given by \eqref{eq:general_explicit_solution} in the
non-degenerate case and by \eqref{eq:general_explicit_solution_zero} in the
fully degenerate case \(A_q=0\) for all \(q\).

For Cases I and II, the unconstrained optimal quote is
\begin{equation}
\label{eq:explicit_quote_cases_I_II}
    \delta^{\mathrm{unc}}(t,q)
    =
    \frac{1}{\kappa}
    \left(
        1+\log\frac{w(t,q)}{w(t,q-1)}
    \right)
    +
    \frac{a}{b}.
\end{equation}
For Cases III and IV, the unconstrained optimal quote is
\begin{equation}
\label{eq:explicit_quote_cases_III_IV}
    \delta^{\mathrm{unc}}(t,q)
    =
    \frac{1}{b\gamma}
    \log\left(\frac{\kappa+b\gamma}{\kappa}\right)
    +
    \frac{1}{\kappa}
    \log\frac{w(t,q)}{w(t,q-1)}
    +
    \frac{a}{b}.
\end{equation}
If the admissible quote set is \([\delta_{\min},\delta_{\max}]\), then the
admissible optimal quote is obtained by projection:
\begin{equation}
\label{eq:projected_explicit_quote}
    \delta^*(t,q)
    =
    \Pi_{[\delta_{\min},\delta_{\max}]}
    \bigl(\delta^{\mathrm{unc}}(t,q)\bigr).
\end{equation}
\end{proposition}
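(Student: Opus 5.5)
This proposition is essentially a collation of earlier results, so the plan is to assemble them rather than prove anything new. The approach runs case by case through the reduction lemmas and then invokes the explicit solution of the common system.

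First, I will check that for each of the four criteria the transformed function satisfies the general system \eqref{eq:PDE_general} with the coefficients listed in \eqref{eq:case_coefficients_A_C} and terminal data $G_q=\exp\{-\frac{\kappa}{b}qI(q)\}$:
\begin{itemize}
\item Case I is Lemma~\ref{lem:optimal_quote_case_I}, giving \eqref{eq:PDE_no_risk}.
\item Case II is Lemma~\ref{lem:optimal_quote_case_II}.
\item Case III is Lemma~\ref{lem:optimal_quote_case_III}.
\item Case IV is Lemma~\ref{lem:optimal_quote_case_IV}.
\end{itemize}
In each case one reads off $A_q$ as the coefficient of $w(t,q)$ and $C$ as the coefficient of $w(t,q-1)$, and verifies the hypotheses $C>0$ and $G_q>0$. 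Positivity of $C$ is clear since $\lambda>0$ and $\kappa,b>0$, so both $\lambda e^{-\kappa a/b-1}$ and $\widehat\lambda$ are positive. Also $G_0=1$ because $I(0)=0$, which is consistent with $w(t,0)=1$.

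Second, the form of $w$ follows from results already stated. In the non-degenerate case, Proposition~\ref{prop:explicit_solution_triangular_system} gives \eqref{eq:general_explicit_solution}. In the fully degenerate case, Lemma~\ref{lem:explicit_solution_degenerate_case} gives \eqref{eq:general_explicit_solution_zero}. Uniqueness of the solution of \eqref{eq:PDE_general} holds because, for each $q$, the equation is a linear ODE with continuous forcing $w(\cdot,q-1)$. The recursion \eqref{eq:w_recursive_solution_general} then determines $w(\cdot,q)$ uniquely by induction on $q$. Hence the $w$ appearing in the four lemmas coincides with the explicit formula.

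Third, I will substitute this $w$ into the feedback quotes. For Cases I and II the quotes are \eqref{eq:optimal_delta_case1_w} and \eqref{eq:optimal_delta_case2_w}, which together give \eqref{eq:explicit_quote_cases_I_II}. For Cases III and IV they are \eqref{eq:optimal_delta_feedback_utility_w} and \eqref{eq:optimal_delta_feedback_utility_inventory_w}, which give \eqref{eq:explicit_quote_cases_III_IV}. For the constrained case, the Hamiltonian $\delta\mapsto\lambda e^{-\kappa\delta}[-a+b\delta-\Delta]$ has derivative $\lambda e^{-\kappa\delta}\bigl(b-\kappa(-a+b\delta-\Delta)\bigr)$. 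This derivative is positive for $\delta<\delta^{\mathrm{unc}}$ and negative for $\delta>\delta^{\mathrm{unc}}$, so the function is unimodal and the constrained maximizer is the projection \eqref{eq:projected_explicit_quote}. The analogous monotonicity holds in the CARA cases after using the exponential ansatz.

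The only genuine point to check is that the logarithms in the quote formulas are well defined, i.e.\ that $w(t,q)>0$. I will obtain this from the recursion \eqref{eq:w_recursive_solution_general}. The first term $e^{A_q(T-t)}G_q$ is strictly positive, and if $w(\cdot,q-1)>0$ the integral term is nonnegative. Induction from $w(\cdot,0)=1$ then gives $w>0$ everywhere, independently of the sign pattern of the explicit divided-difference sum, in which individual terms may be negative. With positivity in hand, the remaining work is routine bookkeeping.
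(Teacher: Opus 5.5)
Your proposal is correct and is essentially the paper's proof, which simply collects the reductions in Lemmas~\ref{lem:optimal_quote_case_I}--\ref{lem:optimal_quote_case_IV} and reads off $w$ from Proposition~\ref{prop:explicit_solution_triangular_system} and Lemma~\ref{lem:explicit_solution_degenerate_case}; your extra checks (uniqueness via the recursion, positivity of $w$, unimodality of the Hamiltonian) are what the paper proves separately in Lemmas~\ref{lem:positivity_w} and~\ref{lem:Hamiltonian_maximizers}. One caveat, which the paper shares: unimodality only shows that the projection is the pointwise Hamiltonian maximizer for a given $\Delta$, so the projected formula built from the unconstrained $w$ is the constrained optimal quote only when the projection is inactive, because otherwise the constrained value function no longer solves the linear $w$-system (cf.\ Remark~\ref{rem:interior_admissibility}).
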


\begin{proof}
The result follows by collecting the reductions obtained in
Lemmas~\ref{lem:optimal_quote_case_I}--\ref{lem:optimal_quote_case_IV}.
These lemmas show that, after the corresponding change of variables, each value
function is characterized by the triangular system
\eqref{eq:PDE_general}, with coefficients \(A_q\) and \(C\) given in
\eqref{eq:case_coefficients_A_C} and terminal condition \eqref{eq:common_terminal_condition}.
Therefore \(w(t,q)\) is given by the explicit solution
\eqref{eq:general_explicit_solution} in the non-degenerate case and by
\eqref{eq:general_explicit_solution_zero} in the fully degenerate case.
The feedback quote formulas are precisely those obtained in the four case-wise
lemmas. 
\end{proof}

The proposition shows that the four problems differ only through the coefficient
\(A_q\), which collects the financial effects of predictable price drift,
running inventory cost, and mark-to-market risk, and through the effective
execution coefficient \(C\). Once \(w(t,q)\) has been computed from the explicit
formula, the optimal quote is obtained from the ratio \(w(t,q)/w(t,q-1)\). This
ratio represents the marginal continuation value of holding one additional
unit of inventory.

\begin{remark}[No-drift risk-neutral benchmark]
The classical no-drift optimal quote in Section 8.2 of \cite{cartea2015algorithmic} is
recovered as a special case of Case I. Indeed, take
\[
    a=0,
    \qquad
    b=1,
    \qquad
    g(s)=0,
    \qquad
    I(q)=\alpha q,
\]
where \(\alpha\geq0\) is the linear terminal impact parameter. Then the
triangular system is in the degenerate case \(A_q=0\), and
\[
    w(t,q)
    =
    \sum_{i=0}^{q}
    \frac{(\lambda e^{-1})^i}{i!}
    e^{-\kappa\alpha(q-i)^2}
    (T-t)^i.
\]
Consequently, the unconstrained optimal sell quote is
\[
    \delta^{\mathrm{unc}}(t,q)
    =
    \frac{1}{\kappa}
    \left[
        1+
        \log
        \frac{
        \displaystyle
        \sum_{i=0}^{q}
        \frac{(\lambda e^{-1})^i}{i!}
        e^{-\kappa\alpha(q-i)^2}
        (T-t)^i
        }{
        \displaystyle
        \sum_{i=0}^{q-1}
        \frac{(\lambda e^{-1})^i}{i!}
        e^{-\kappa\alpha(q-1-i)^2}
        (T-t)^i
        }
    \right].
\]
This recovers the closed-form no-drift benchmark for sell limit-order
placement derived in \cite{cartea2015algorithmic}. Hence the result in
\cite{cartea2015algorithmic} is obtained as the no-drift degenerate case of
our general solution.
\end{remark}

\begin{remark}[No-drift, no-volatility exponential-utility benchmark]
The exponential-utility optimal quote in \cite{Guant2012} is recovered as a
special case of Case III. Indeed, take
\[
    a=0,
    \qquad
    b=1,
    \qquad
    g(s)=0,
    \qquad
    \sigma=0,
    \qquad
    I(q)=\alpha,
\]
where \(\alpha\geq0\) is a constant terminal liquidation penalty per share.
Then the triangular system is in the degenerate case \(A_q=0\), and
\[
    w(t,q)
    =
    \sum_{j=0}^q
    \frac{\widehat{\lambda}^{\,j}}{j!}
    e^{-\kappa\alpha(q-j)}
    (T-t)^j,
\]
where
\[
    \widehat{\lambda}
    =
    \lambda
    \left(
        \frac{\kappa}{\kappa+\gamma}
    \right)^{\frac{\kappa}{\gamma}+1}.
\]
Consequently,
\[
\begin{aligned}
    \delta^{\mathrm{unc}}(t,q)
    &=
    -\alpha
    +
    \frac{1}{\kappa}
    \log\left(
        1+
        \frac{
            \dfrac{\widehat{\lambda}^{\,q}}{q!}(T-t)^q
        }{
            \displaystyle
            \sum_{j=0}^{q-1}
            \frac{\widehat{\lambda}^{\,j}}{j!}
            e^{-\kappa\alpha(q-j)}
            (T-t)^j
        }
    \right)
    +
    \frac{1}{\gamma}
    \log\left(1+\frac{\gamma}{\kappa}\right).
\end{aligned}
\]
This is precisely the closed-form optimal ask quote obtained in
\cite{Guant2012}, after translating notation. Hence the result of
\cite{Guant2012} is recovered as the no-drift, no-volatility degenerate case of
our general solution.
\end{remark}

\section{Verification and admissibility}
\label{sec:verification_admissibility}

In this section we justify that the controlled execution model is well posed
and that the candidate value functions obtained above solve the corresponding
control problems. As in Section~\ref{sec:solution}, we work under the
constant-coefficient assumption \(s_t\equiv s\) and \(\sigma_t\equiv\sigma\).
The verification arguments extend directly to deterministic time-dependent
coefficients, with the corresponding time-dependent generator. If the signal is
stochastic and Markovian, it must be included as an additional state variable in
the HJB equation.

Since the inventory takes values in the finite set
\(\{0,\dots,Q_0\}\), the verification argument reduces to a finite family of
equations coupled through execution jumps. The proof flow is: well-posedness of
the controlled process, characterization of Hamiltonian maximizers, predictability and admissibility of feedback controls, positivity of the transformed functions \(w\), and then verification theorem.

Throughout this section we work under Assumption~\ref{ass:standing}. We also
use the projection convention
\[
    \Pi_{[\delta_{\min},\delta_{\max}]}(y)
    :=
    \min\{\delta_{\max},\max\{\delta_{\min},y\}\},
    \qquad y\in\mathbb R.
\]

\begin{proposition}[Well-posedness on the finite inventory grid]
\label{prop:well_posedness_finite_inventory}
For each \(\delta\in\mathcal A\), there exists a unique controlled process
\((X^\delta,M,Q^\delta)\) on \([0,T]\) satisfying the dynamics introduced in
Section~\ref{sec:model}. Moreover,
\(Q_t^\delta\in\{0,\dots,Q_0\},
    \qquad 0\leq t\leq T,
\)
and \(Q^\delta\) has at most \(Q_0\) downward jumps of size one. In particular,
the objective functionals in Cases I--IV are finite.
\end{proposition}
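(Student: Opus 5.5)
The plan is to construct the controlled process explicitly by a change of intensity (thinning of a Poisson random measure) and then to read off the properties from that construction. Since \(M\) does not depend on the control, we first set
\[
    M_t = M_0+\int_0^t g(s_u)\,\dd u+\int_0^t\sigma_u\,\dd W_u .
\]
This is well defined and continuous, with \(\mathbb E[\sup_{t\le T}|M_t|^p]<\infty\) for every \(p\ge 1\), because \(g(s_t)\) and \(\sigma_t\) are bounded and progressively measurable by (A4). For the execution process, we enlarge the basis if necessary to carry a Poisson random measure \(\mathcal N(\dd t,\dd z)\) on \([0,T]\times\mathbb R_+\) with Lebesgue intensity, independent of \(W\). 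We then define \(N^\delta\) as the solution of
\[
    N_t^\delta=\int_0^t\int_{\mathbb R_+}\mathbf 1_{\{z\le \lambda e^{-\kappa\delta_u}\mathbf 1_{\{Q_0-N_{u-}^\delta>0\}}\}}\,\mathcal N(\dd u,\dd z).
\]
Since \(\delta\in\mathcal A\) takes values in \([\delta_{\min},\delta_{\max}]\), the thinning level is bounded by \(\bar\lambda:=\lambda e^{-\kappa\delta_{\min}}\). Hence every jump of \(N^\delta\) is a jump of the dominating Poisson process \(\bar N_t=\mathcal N([0,t]\times[0,\bar\lambda])\), which has finitely many jumps on \([0,T]\) almost surely.

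Existence and uniqueness will follow by pathwise recursion over the ordered jump times \(\bar T_1<\bar T_2<\dots\) of \(\bar N\). Between consecutive candidate times nothing changes. At each \(\bar T_k\), whether a fill occurs is determined by predictable quantities, namely \(\delta_{\bar T_k}\) and \(N_{\bar T_k-}^\delta\), which are fixed by the construction up to \(\bar T_k-\). This gives a unique solution by induction on \(k\), in the standard manner for counting processes with bounded predictable intensity. The compensator of \(N^\delta\) is then \(\int_0^t\lambda_u^\delta\,\dd u\), as required by \eqref{eq:intensity_process}. With \(N^\delta\) in hand, \(Q^\delta=Q_0-N^\delta\), and \(X_t^\delta=x+\int_0^t(M_u-a+b\delta_u)\,\dd N_u^\delta\) is a finite sum, so it is uniquely defined.

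The inventory bounds come from the indicator in the intensity. The intensity vanishes as soon as \(Q_{t-}^\delta=0\), and each jump has size one. Therefore \(Q^\delta\) decreases by unit jumps from \(Q_0\), never goes below \(0\), and has at most \(Q_0\) jumps. It follows that
\[
    |X_\tau^\delta|\le |x|+Q_0\Bigl(\sup_{t\le T}|M_t|+a+b\max(|\delta_{\min}|,|\delta_{\max}|)\Bigr),
\]
and \(|Q_\tau^\delta(M_\tau-I(Q_\tau^\delta))|\le Q_0\bigl(\sup|M|+\max I\bigr)\). The running cost satisfies \(0\le\int_0^\tau J(Q_s^\delta)\,\dd s\le T\max J\). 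This gives integrability in Cases I and II.

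Finiteness in Cases III and IV is the main point requiring care, since we need exponential moments. The wealth is bounded below by \(-c_1-Q_0\sup_{t\le T}|M_t|\) for a deterministic constant \(c_1\). So it suffices to show \(\mathbb E[\exp(\gamma Q_0\sup_{t\le T}|M_t|)]<\infty\). The drift part is bounded. For the martingale part \(\int\sigma\,\dd W\) with \(|\sigma|\le\bar\sigma\), we will use the exponential martingale inequality (or Dambis--Dubins--Schwarz together with the reflection principle) to obtain Gaussian tails for \(\sup_t|\int_0^t\sigma\,\dd W|\), which makes every exponential moment finite. The objective is also bounded above by \(0\), so the expectations in Cases III and IV are finite. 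Finally, we will note that the value functions are finite, since the supremum is bounded above uniformly in \(\delta\) by the same bounds.
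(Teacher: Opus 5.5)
Your proposal is correct and follows essentially the same route as the paper: construct the counting process with bounded predictable intensity $\lambda e^{-\kappa\delta_t}\mathbf 1_{\{Q_{t-}^\delta>0\}}\le \lambda e^{-\kappa\delta_{\min}}$, note absorption at zero and at most $Q_0$ unit jumps, define $X^\delta$ as a finite sum, and deduce integrability from moments of $M$. The differences are minor: you build the counting process by Poisson thinning where the paper uses the time-change representation, and your Gaussian-tail bound on $\sup_{t\le T}|M_t|$ makes precise the exponential-moment step, which the paper only asserts even though wealth involves $M$ at random fill times rather than fixed times.
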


\begin{proof}
Fix \(\delta\in\mathcal A\). Since \(\delta\) is predictable and takes values
in \([\delta_{\min},\delta_{\max}]\), the controlled intensity
\(\lambda_t^\delta
    =
    \lambda e^{-\kappa\delta_t}\mathbf 1_{\{Q_{t-}^\delta>0\}}
\)
is predictable and bounded by
\[
    \overline\lambda:=\lambda e^{-\kappa\delta_{\min}}.
\]
Thus the corresponding counting process can be constructed by the standard
time-change representation and is non-explosive on \([0,T]\). Since the
intensity is set to zero after the inventory reaches zero, \(Q^\delta\) is
absorbed at zero and has at most \(Q_0\) downward jumps. The same construction
also gives pathwise uniqueness of \(N^\delta\) and hence of \(Q^\delta\).

Given \(M\), define
\[
    X_t^\delta
    =
    x+
    \int_0^t
    (M_u-a+b\delta_u)\,\dd N_u^\delta .
\]
Since \(N^\delta\) has at most \(Q_0\) jumps, this integral is well defined and
\(X^\delta\) is uniquely determined. Therefore the controlled state process
\((X^\delta,M,Q^\delta)\) is unique up to indistinguishability.

It remains to check finiteness of the objectives. Since \(M\) is driven by
the Brownian diffusion
\(\dd M_t=g(s_t)\,\dd t+\sigma_t\,\dd W_t\) with bounded drift and
volatility, \(M\) has finite moments and finite exponential moments of linear
order on the finite horizon. Since
\(N_T^\delta\leq Q_0\) and \(\delta\) is bounded, \(X_\tau^\delta\) is a finite
sum of terms involving \(M\) and bounded controls. Moreover, \(I\) and \(J\) are
bounded on the finite inventory set \(\{0,\dots,Q_0\}\). Hence the terminal
liquidation terms, running inventory costs, and exponential utility terms are
integrable. Therefore all four objective functionals are finite.
\end{proof}

\begin{lemma}[Hamiltonian maximizers]
\label{lem:Hamiltonian_maximizers}
Let \(\Delta\in\mathbb R\) be fixed.

\begin{enumerate}[label=(\roman*)]
    \item For Cases I--II, define
    \[
        \mathcal H_1(\delta)
        :=
        \lambda e^{-\kappa\delta}
        \bigl(-a+b\delta+\Delta\bigr).
    \]
    Then \(\mathcal H_1\) has the unique unconstrained maximizer
    \begin{equation}
    \label{eq:hamiltonian_max_case_I_II}
        \delta^{\mathrm{unc}}_1
        =
        \frac1\kappa+\frac{a-\Delta}{b}.
    \end{equation}
    On the constrained interval \([\delta_{\min},\delta_{\max}]\), the unique
    maximizer is
    \[
        \delta_1^\star
        =
        \Pi_{[\delta_{\min},\delta_{\max}]}
        \bigl(\delta^{\mathrm{unc}}_1\bigr).
    \]

    \item For Cases III--IV, define
    \[
        \mathcal H_2(\delta)
        :=
        e^{-\kappa\delta}
        \left(
            1-e^{-\gamma(-a+b\delta+\Delta)}
        \right).
    \]
    Then \(\mathcal H_2\) has the unique unconstrained maximizer
    \begin{equation}
    \label{eq:hamiltonian_max_case_III_IV}
        \delta^{\mathrm{unc}}_2
        =
        \frac{a-\Delta}{b}
        +
        \frac{1}{b\gamma}
        \log\left(\frac{\kappa+b\gamma}{\kappa}\right).
    \end{equation}
    On the constrained interval \([\delta_{\min},\delta_{\max}]\), the unique
    maximizer is
    \[
        \delta_2^\star
        =
        \Pi_{[\delta_{\min},\delta_{\max}]}
        \bigl(\delta^{\mathrm{unc}}_2\bigr).
    \]
\end{enumerate}
\end{lemma}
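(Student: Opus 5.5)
The plan is to treat both parts the same way. Reduce each function to a one-variable strictly quasi-concave problem, namely a function whose derivative equals a positive factor times a strictly decreasing function of \(\delta\). Then identify the unique zero of that function. Finally, use quasi-concavity to conclude that the constrained maximizer is the projection.

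For (i), differentiate to obtain
\[
    \mathcal H_1'(\delta)=\lambda e^{-\kappa\delta}\bigl(b-\kappa(-a+b\delta+\Delta)\bigr).
\]
The prefactor \(\lambda e^{-\kappa\delta}\) is strictly positive. The bracket is affine in \(\delta\) with slope \(-\kappa b<0\), so it is strictly decreasing. Its unique zero is \(\delta_1^{\mathrm{unc}}=\frac1\kappa+\frac{a-\Delta}{b}\). Hence \(\mathcal H_1\) is strictly increasing on \((-\infty,\delta_1^{\mathrm{unc}}]\) and strictly decreasing on \([\delta_1^{\mathrm{unc}},\infty)\), so \(\delta_1^{\mathrm{unc}}\) is the unique global maximizer.

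For (ii), set \(y:=-a+b\delta+\Delta\) and differentiate:
\[
    \mathcal H_2'(\delta)=e^{-\kappa\delta}\Bigl(-\kappa+(\kappa+b\gamma)e^{-\gamma y}\Bigr).
\]
Again the prefactor is positive. Since \(\gamma,b>0\), the bracket is strictly decreasing in \(\delta\): it tends to \(+\infty\) as \(\delta\to-\infty\) and to \(-\kappa<0\) as \(\delta\to\infty\). It therefore has exactly one zero, characterized by \(e^{-\gamma y}=\kappa/(\kappa+b\gamma)\). This gives \(y=\frac1\gamma\log\frac{\kappa+b\gamma}{\kappa}\), which rearranges to \eqref{eq:hamiltonian_max_case_III_IV}. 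The same sign-change argument shows that this zero is the unique global maximizer. Note that \(\mathcal H_2(\delta)\to-\infty\) as \(\delta\to-\infty\) and \(\mathcal H_2\to 0\) as \(\delta\to+\infty\), which is consistent with an interior maximum.

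For the constrained statements, in both cases the function is strictly increasing to the left of \(\delta^{\mathrm{unc}}\) and strictly decreasing to its right. On \([\delta_{\min},\delta_{\max}]\) this gives three cases:
\begin{itemize}
    \item if \(\delta^{\mathrm{unc}}\) lies in the interval, it is the unique maximizer;
    \item if \(\delta^{\mathrm{unc}}<\delta_{\min}\), the function is strictly decreasing on the interval, so the unique maximizer is \(\delta_{\min}\);
    \item if \(\delta^{\mathrm{unc}}>\delta_{\max}\), the function is strictly increasing on the interval, so the unique maximizer is \(\delta_{\max}\).
\end{itemize}
This is exactly \(\Pi_{[\delta_{\min},\delta_{\max}]}(\delta^{\mathrm{unc}})\). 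The proof is elementary, and there is no real obstacle. The only point needing care is (ii): the bracket must be written as a strictly monotone function of \(\delta\), rather than relying on concavity of \(\mathcal H_2\), which need not hold globally. The sign structure of \(\mathcal H_2'\) is what gives uniqueness.
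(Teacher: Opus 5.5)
Your proposal is correct and follows essentially the same route as the paper: you factor each derivative as a positive exponential times a strictly decreasing bracket, read off the unique sign change as the global maximizer, and obtain the constrained maximizer by projection. Your explicit three-case treatment of the constrained problem simply spells out the step the paper states in one line.
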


\begin{proof}
For Cases I--II,
\[
    \mathcal H_1'(\delta)
    =
    \lambda e^{-\kappa\delta}
    \left[
        b-\kappa(-a+b\delta+\Delta)
    \right].
\]
The bracket is strictly decreasing in \(\delta\), hence \(\mathcal H_1'\)
changes sign exactly once, from positive to negative. The unique critical point
is therefore the unique global maximizer and is given by
\eqref{eq:hamiltonian_max_case_I_II}. The constrained maximizer is the
projection of this point onto the compact admissible interval.

For Cases III--IV, set
\[
    z(\delta):=e^{-\gamma(-a+b\delta+\Delta)}.
\]
Then
\[
    \mathcal H_2'(\delta)
    =
    e^{-\kappa\delta}
    \left[
        -\kappa+(\kappa+b\gamma)z(\delta)
    \right],
    \qquad
    z'(\delta)=-b\gamma z(\delta)<0.
\]
Thus the bracket is strictly decreasing and crosses zero exactly once. The
unique critical point is therefore the unique global maximizer and is given by
\eqref{eq:hamiltonian_max_case_III_IV}. Again, the constrained maximizer is the
projection onto \([\delta_{\min},\delta_{\max}]\).
\end{proof}

\begin{remark}[Predictability of feedback quotes]
Under the constant-coefficient specification considered in
Section~\ref{sec:solution}, the candidate feedback quotes are deterministic
functions of \(t\) and \(q\). Evaluated at the predictable inventory process
\(Q_{t-}^\delta\), they therefore define predictable controls. If the
constrained maximizer is used, predictability is preserved because the
projection \(\Pi_{[\delta_{\min},\delta_{\max}]}\) is continuous.
\end{remark}

\begin{remark}[Interior admissibility of the explicit formulas]
\label{rem:interior_admissibility}
The explicit \(w\)-systems derived in Section~\ref{sec:solution} are obtained
by substituting the unconstrained Hamiltonian maximizers into the HJB equations.
Therefore, for these closed-form candidates to solve the constrained control
problem under the compact admissible set \(\mathcal A\), we impose the interior
admissibility condition
\[
    \delta^{\mathrm{unc}}(t,q)
    \in
    [\delta_{\min},\delta_{\max}],
    \qquad
    t\in[0,T],\quad q=1,\dots,Q_0.
\]
Under this condition, the projection is inactive and the unconstrained feedback
quote is admissible. If this condition fails, the HJB equation remains
well-posed on the compact control set, and the pointwise maximizer is the
projected quote, but the reduced linear \(w\)-system must be replaced by the
corresponding constrained Hamiltonian equation.
\end{remark}

\begin{lemma}[Positivity of the transformed solution]
\label{lem:positivity_w}
Let \(w\) solve the triangular system
\[
    \partial_t w(t,q)+A_qw(t,q)+Cw(t,q-1)=0,
    \qquad q=1,\dots,Q_0,
\]
with
\[
    w(T,q)=G_q>0,
    \qquad
    w(t,0)=1,
    \qquad
    C>0.
\]
Then
\[
    w(t,q)>0,
    \qquad
    (t,q)\in[0,T]\times\{0,\dots,Q_0\}.
\]
In particular, the feedback quotes involving
\[
    \log\frac{w(t,q)}{w(t,q-1)}
\]
are well defined.
\end{lemma}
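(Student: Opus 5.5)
The plan is an induction on the inventory level \(q\), based on the variation-of-constants representation \eqref{eq:w_recursive_solution_general}. I will not use the explicit divided-difference formula of Proposition~\ref{prop:explicit_solution_triangular_system}: its terms carry alternating signs, so positivity is not visible there. The recursion, by contrast, writes \(w(\cdot,q)\) as a sum of manifestly nonnegative pieces once \(w(\cdot,q-1)\) is known to be positive.

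First I record that the triangular system has a unique solution on \([0,T]\). Each equation is a scalar linear ODE with constant coefficient \(A_q\) and a continuous forcing term \(Cw(\cdot,q-1)\). Solving successively for \(q=1,\dots,Q_0\) gives a unique \(C^1\) solution for every \(q\). Uniqueness matters here, because it guarantees that any solution \(w\) in the statement coincides with the one produced by \eqref{eq:w_recursive_solution_general}.

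For the induction, the base case \(q=0\) is immediate since \(w(t,0)=1>0\). Suppose \(w(u,q-1)>0\) for all \(u\in[0,T]\). Then
\[
    w(t,q)=e^{A_q(T-t)}G_q+C\int_t^T e^{A_q(u-t)}w(u,q-1)\,\dd u .
\]
The first term is strictly positive because \(G_q>0\) and the exponential is positive. The second term is nonnegative because \(C>0\), the exponential is positive and the integrand is positive by the induction hypothesis; it vanishes only at \(t=T\). Hence \(w(t,q)>0\) for all \(t\in[0,T]\), which completes the induction.

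No sign condition on \(A_q\) is needed anywhere in this argument. In particular, the drift terms and the inventory penalties in \eqref{eq:case_coefficients_A_C} may make \(A_q\) negative without affecting positivity. Since both \(w(t,q)\) and \(w(t,q-1)\) are strictly positive, the ratio \(w(t,q)/w(t,q-1)\) is positive and its logarithm is well defined, so the feedback quotes are well defined. Continuity of \(w\) on the compact interval \([0,T]\) also gives uniform positive lower bounds, which will be useful later for verification.

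There is no real obstacle. The only point needing care is to justify that the recursion actually represents the given solution, and this is exactly the uniqueness argument for linear ODEs noted at the start.
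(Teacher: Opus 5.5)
Your proof is correct and follows the paper's argument: induction on \(q\) via the variation-of-constants formula, where \(e^{A_q(T-t)}G_q\) is strictly positive and the integral term is nonnegative. Two of your points are slight refinements of the paper's wording: the uniqueness justification, and the observation that the integral term vanishes at \(t=T\), whereas the paper asserts that both terms are strictly positive on all of \([0,T]\).
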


\begin{proof}
The assertion is proved by induction on \(q\). The case \(q=0\) follows from
\(w(t,0)=1\). Suppose \(w(\cdot,q-1)>0\). By variation of constants,
\[
    w(t,q)
    =
    e^{A_q(T-t)}G_q
    +
    C\int_t^T e^{A_q(u-t)}w(u,q-1)\,\dd u .
\]
Both terms are strictly positive for \(t\in[0,T]\), since \(G_q>0\),
\(C>0\), and \(w(u,q-1)>0\). Hence \(w(t,q)>0\). The result follows by
induction.
\end{proof}

\begin{theorem}[Verification theorem]
\label{thm:verification}
Consider one of Cases I--IV, and let \(V\) be the corresponding candidate value
function constructed in Section~\ref{sec:solution}. Assume that:
\begin{enumerate}[label=(V\arabic*)]
    \item \(V\) is \(C^{1,2}\) in \((t,M)\) for each fixed inventory state
    \(q\in\{0,\dots,Q_0\}\);
    \item \(V\) satisfies the corresponding HJB equation together with the
    terminal and boundary conditions;
    \item the feedback quote is admissible and attains the Hamiltonian
    maximizer pointwise. For the explicit linear \(w\)-systems, this is ensured
    by the interior admissibility condition in
    Remark~\ref{rem:interior_admissibility}.
\end{enumerate}
Then \(V\) coincides with the corresponding value function on
\([0,T]\times\mathbb R\times\mathbb R\times\{0,\dots,Q_0\}\). Moreover, the
feedback quote attaining the Hamiltonian maximum is optimal.
\end{theorem}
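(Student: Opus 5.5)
The plan is the standard verification argument for controlled jump-diffusions: apply It\^o's formula for semimartingales with jumps to $V(t,X_t^\delta,M_t,Q_t^\delta)$ (Case I--III) or to a multiplicatively adjusted process (Case IV, and Case II for the running cost), and compare with the HJB inequality. Fix a case, initial data $(t,x,M,q)$ and an arbitrary $\delta\in\mathcal A$. By Proposition~\ref{prop:well_posedness_finite_inventory} the state process exists uniquely and $Q^\delta$ has at most $Q_0$ jumps. Using (V1), It\^o's formula on $[t,\tau]$ gives
\[
V(\tau,X_\tau,M_\tau,Q_\tau)=V(t,x,M,q)+\int_t^\tau\bigl(\partial_t+g(s)\partial_M+\tfrac12\sigma^2\partial_{MM}\bigr)V\,\dd u+\int_t^\tau \sigma\,\partial_MV\,\dd W_u+\int_t^\tau \Delta V_u\,\dd N_u^\delta .
\]
Here $\Delta V_u=V(u,X_{u-}+M_u-a+b\delta_u,M_u,Q_{u-}-1)-V(u,X_{u-},M_u,Q_{u-})$. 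Then compensate the jump integral with the intensity $\lambda e^{-\kappa\delta_u}\mathbf 1_{\{Q_{u-}>0\}}$. In Case II add $-\int_t^\tau J(Q_u)\dd u$ to both sides. In Case IV apply the product rule to $\exp\{\gamma\int_t^\cdot J(Q_u)\dd u\}\,V(\cdot)$, which produces exactly the zeroth-order term $\gamma J(q)V$ of \eqref{eq:HJB_equations_utility_inventory}. The drift of the resulting process is then the HJB operator evaluated at the control $\delta_u$, which is $\le 0$ by (V2), since the supremum dominates every admissible value. It vanishes when $\delta_u=\delta^*(u,Q_{u-})$ by (V3). For $u\ge\tau$ with $\tau<T$ the inventory is zero, and the boundary condition $V(t,x,M,0)$ (equal to $x$, respectively $-e^{-\gamma x}$) is consistent with the stopped payoff. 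The terminal conditions at $u=T$ identify $V(\tau,\cdot)$ with the payoff.

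The main obstacle is showing that the local martingale parts are true martingales, so that expectations can be taken. I would use the explicit structure: $V=x+qM+h(t,q)$ in Cases I--II, and $V=-\exp\{-\gamma(x+qM+u(t,q))\}$ in Cases III--IV. Here $h,u,f$ are continuous on the compact set $[0,T]\times\{0,\dots,Q_0\}$, by Lemma~\ref{lem:positivity_w} and continuity of $w$. In Cases I--II, $\partial_MV=q\le Q_0$, so the Brownian integral is an $L^2$ martingale. The jump increments are bounded by $C(1+\sup_u|M_u|)$, and because the intensity is bounded by $\overline\lambda=\lambda e^{-\kappa\delta_{\min}}$, the compensated jump integral is a martingale. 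This uses $\mathbb E\sup_{u\le T}|M_u|^2<\infty$, which holds since $g,\sigma$ are bounded.

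In Cases III--IV, $|V|\le K\exp\{\gamma(|X|+Q_0|M|)\}$, and $X$ is a sum of at most $Q_0$ terms $M_u-a+b\delta_u$. I would therefore bound everything by $K'\exp\{c\sup_{u\le T}|M_u|\}$, which is integrable with all powers because $\sup|M|$ has Gaussian tails. Taking conditional expectations then gives $\mathbb E[\text{payoff}]\le V(t,x,M,q)$ for every $\delta\in\mathcal A$, with equality for the feedback control $\delta^*$. That control is admissible by the predictability remark and (V3), and it generates a well-defined state process by Proposition~\ref{prop:well_posedness_finite_inventory}.

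Combining the two inequalities identifies $V$ with the value function and shows that $\delta^*$ is optimal. If the martingale property proves delicate, I would fall back on localizing with stopping times $\theta_n=\inf\{u:|M_u|\ge n\}\wedge\tau$ and passing to the limit by dominated convergence, using the same exponential-moment bound as the dominating function.
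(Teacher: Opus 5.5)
Your proposal is correct and follows essentially the same route as the paper's proof. In both, It\^o's formula for the jump-diffusion is applied to $V$ in Cases I and III, to $V-\int_t^\cdot J(Q_u)\,\dd u$ in Case II, and to $e^{\gamma\int_t^\cdot J(Q_u)\,\dd u}V$ in Case IV; the HJB makes the drift nonpositive for any admissible $\delta$ and zero for the feedback maximizer, and the terminal and boundary conditions identify the stopped value with the payoff. The one place you go beyond the paper is the true-martingale step, which the paper only asserts via finite moments; your bounds supply it correctly (bounded $\partial_M V$ and bounded intensity in Cases I--II, domination by $K'\exp\{c\sup_{u\le T}|M_u|\}$ with Gaussian tails in Cases III--IV).
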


\begin{proof}
Fix an initial state \((t,x,M,q)\) and an admissible control
\(\delta\in\mathcal A\). Let
\[
    \Phi_s^\delta
    :=
    (X_s^\delta,M_s,Q_s^\delta),
    \qquad s\in[t,T],
\]
denote the controlled state process starting from
\(\Phi_t^\delta=(x,M,q)\). At an execution time, cash increases by
\(M_s-a+b\delta_s\) and inventory decreases by one unit. We denote the
corresponding post-execution state by
\[
    \Phi_{s-}^{\delta,\downarrow}
    :=
    \bigl(
        X_{s-}^\delta+M_s-a+b\delta_s,
        M_s,
        Q_{s-}^\delta-1
    \bigr).
\]
Since \(M\) is continuous in the present model, using \(M_s\) or \(M_{s-}\) is
equivalent. Let
\[
    \tau^\delta
    :=
    T\wedge \inf\{s\geq t:Q_s^\delta=0\}
\]
be the liquidation horizon starting from the state \((t,x,M,q)\). For the cases
with running inventory cost, define
\[
    A_s^\delta
    :=
    \int_t^s J(Q_u^\delta)\,\dd u,
    \qquad s\in[t,\tau^\delta].
\]

All expectations below are finite by
Proposition~\ref{prop:well_posedness_finite_inventory}: the inventory grid is
finite, the number of executions is bounded by \(Q_0\), the quote process is
bounded, and the affine and exponential-affine candidates satisfy the required
moment bounds under Assumption~\ref{ass:standing}.

We first consider Cases I and III, where there is no running inventory cost.
Applying It\^o's formula for jump-diffusions to \(V(s,\Phi_s^\delta)\) on
\([t,\tau^\delta]\) gives
\[
\begin{aligned}
    \dd V(s,\Phi_s^\delta)
    &=
    \Big[
        \partial_s V(s,\Phi_{s-}^\delta)
        +
        \mathcal L^M V(s,\Phi_{s-}^\delta)
        +
        \lambda_s^\delta
        \Big(
            V(s,\Phi_{s-}^{\delta,\downarrow})
            -
            V(s,\Phi_{s-}^\delta)
        \Big)
    \Big]\dd s
    +
    \dd\mathcal M_s^\delta ,
\end{aligned}
\]
where
\[
    \mathcal L^M V
    =
    g(s)\partial_MV
    +
    \frac12\sigma^2\partial_{MM}V,
\]
and \(\mathcal M^\delta\) is a martingale. The three drift terms arise from the
time derivative, the diffusion generator of the reference price, and the
compensator of the execution jump, respectively.

For the fixed admissible quote \(\delta_s\), the HJB equation implies that the
drift in the display above is nonpositive. Indeed, the HJB sets the same
expression with the supremum over all admissible quotes equal to zero, and
evaluating the Hamiltonian at one admissible quote cannot exceed that supremum.
Therefore \(V(s,\Phi_s^\delta)\) is a supermartingale on \([t,\tau^\delta]\) in
Cases I and III. Taking conditional expectations gives
\[
    V(t,x,M,q)
    \geq
    \mathbb E_{t,x,M,q}
    \left[
        V(\tau^\delta,\Phi_{\tau^\delta}^\delta)
    \right].
\]
The terminal condition is used if \(\tau^\delta=T\), while the boundary
condition at \(q=0\) is used if liquidation occurs before \(T\). Hence
\(V(\tau^\delta,\Phi_{\tau^\delta}^\delta)\) is exactly the terminal payoff in
Case I, or the exponential utility payoff in Case III.

In Case II, the running inventory cost is additive. Applying It\^o's formula to
\(V(s,\Phi_s^\delta)-A_s^\delta\) gives the drift
\[
\partial_s V(s,\Phi_{s-}^\delta)
+
\mathcal L^M V(s,\Phi_{s-}^\delta)
-
J(Q_{s-}^\delta)
+
\lambda_s^\delta
\Big(
    V(s,\Phi_{s-}^{\delta,\downarrow})
    -
    V(s,\Phi_{s-}^\delta)
\Big).
\]
This drift is nonpositive by the Case II HJB equation evaluated at
\(\delta_s\). Thus \(V(s,\Phi_s^\delta)-A_s^\delta\) is a supermartingale on
\([t,\tau^\delta]\), and therefore
\[
    V(t,x,M,q)
    \geq
    \mathbb E_{t,x,M,q}
    \left[
        V(\tau^\delta,\Phi_{\tau^\delta}^\delta)
        -
        A_{\tau^\delta}^\delta
    \right].
\]
Using the terminal and boundary conditions, the right-hand side is precisely the
Case II payoff under the control \(\delta\).

In Case IV, the running inventory cost is inside the exponential utility. The
HJB contains the zeroth-order term \(\gamma J(q)V\). Applying the product rule
to \(e^{\gamma A_s^\delta}V(s,\Phi_s^\delta)\) gives the drift
\[
e^{\gamma A_s^\delta}
\Big[
    \partial_s V(s,\Phi_{s-}^\delta)
    +
    \mathcal L^M V(s,\Phi_{s-}^\delta)
    +
    \gamma J(Q_{s-}^\delta)V(s,\Phi_{s-}^\delta)
    +
    \lambda_s^\delta
    \Big(
        V(s,\Phi_{s-}^{\delta,\downarrow})
        -
        V(s,\Phi_{s-}^\delta)
    \Big)
\Big].
\]
The bracket is nonpositive by the Case IV HJB equation evaluated at the
admissible quote \(\delta_s\). Hence
\(e^{\gamma A_s^\delta}V(s,\Phi_s^\delta)\) is a supermartingale on
\([t,\tau^\delta]\), and
\[
    V(t,x,M,q)
    \geq
    \mathbb E_{t,x,M,q}
    \left[
        e^{\gamma A_{\tau^\delta}^\delta}
        V(\tau^\delta,\Phi_{\tau^\delta}^\delta)
    \right].
\]
The terminal and boundary conditions give
\[
    e^{\gamma A_{\tau^\delta}^\delta}
    V(\tau^\delta,\Phi_{\tau^\delta}^\delta)
    =
    -\exp\left\{
        -\gamma
        \left(
            X_{\tau^\delta}^\delta
            +
            Q_{\tau^\delta}^\delta
            \bigl(M_{\tau^\delta}-I(Q_{\tau^\delta}^\delta)\bigr)
            -
            A_{\tau^\delta}^\delta
        \right)
    \right\}.
\]
Thus the right-hand side is exactly the Case IV performance functional under
\(\delta\).

Combining the preceding arguments, for each of the four cases we have
\[
    V(t,x,M,q)
    \geq
    \mathcal J^\delta(t,x,M,q),
\]
where \(\mathcal J^\delta(t,x,M,q)\) denotes the corresponding performance
functional under the admissible control \(\delta\). Taking the supremum over
\(\delta\in\mathcal A\), we obtain that \(V\) is an upper bound for the value
function.

Now let \(\delta^\star\) be a predictable admissible feedback quote which
attains the Hamiltonian maximum pointwise. Then the Hamiltonian evaluated at
\(\delta^\star\) equals the supremum in the HJB equation. Consequently, all the
nonpositive drift terms above vanish under \(\delta^\star\), and the
corresponding supermartingales become martingales. Therefore
\[
    V(t,x,M,q)
    =
    \mathcal J^{\delta^\star}(t,x,M,q).
\]
Hence \(V\) equals the value function, and \(\delta^\star\) is optimal.
\end{proof}

\section{Comparison of optimal quotes}
\label{sec:comparison_optimal_quotes}

The explicit formulas show that the optimal quote is determined by the ratio
\(
    w(t,q)/w(t,q-1).
\)
This ratio represents the marginal continuation value of holding one additional
unit of inventory. In Cases I and II, the unconstrained quote is
\[
    \delta^{\mathrm{unc}}(t,q)
    =
    \frac{1}{\kappa}
    +
    \frac{a}{b}
    +
    \frac{1}{\kappa}
    \log\frac{w(t,q)}{w(t,q-1)}.
\]
Thus a larger continuation-value ratio leads to a larger quote depth. For a sell
limit order, this means that the trader posts at a more favorable price but
accepts a lower execution intensity. Conversely, when the continuation value of
carrying inventory decreases, the trader quotes more aggressively in order to
increase the probability of execution. The discussion below separates three
effects: the signal-driven value of waiting, explicit inventory costs, and the
certainty-equivalent inventory cost generated by exponential utility.

Since the ratio \(w(t,q)/w(t,q-1)\) is generated by the triangular system, the
coefficient \(A_q\) provides a compact way to compare the economic forces that
shape the continuation value across the four cases. 

In Case I,
\(
    A_q=\frac{\kappa}{b}g(s)q.
\)
A positive signal drift \(g(s)>0\) increases the value of holding inventory,
because the reference price is expected to move favorably. This tends to
increase the quote depth. A negative signal has the opposite effect and induces
more aggressive liquidation.

In Case II, the coefficient becomes
\(
    A_q=\frac{\kappa}{b}\bigl(g(s)q-J(q)\bigr).
\)
The running inventory penalty therefore acts against the predictive drift. A
larger value of \(J(q)\) reduces the continuation value of holding inventory and
pushes the optimal quote closer to the reference price, increasing the execution
intensity. Economically, the trader is willing to give up some price improvement
in order to reduce inventory exposure.

In Case III, exponential utility introduces the volatility correction
\(
    -\frac{1}{2}\sigma^2\gamma q^2,
\)
so that
\[
    A_q=
    \frac{\kappa}{b}
    \left(
        g(s)q-\frac{1}{2}\sigma^2\gamma q^2
    \right).
\]
This negative correction represents the certainty-equivalent cost of
mark-to-market inventory risk. Its magnitude increases with volatility, risk
aversion, and the square of the remaining inventory, thereby lowering \(A_q\)
and reducing the continuation value of holding inventory. Hence, at the level of the reduced coefficient \(A_q\), exponential
utility induces the same type of quadratic inventory-risk correction as a
quadratic running inventory penalty.

This also explains why volatility affects the optimal quote only in the
exponential-utility cases. In Cases I and II, Brownian
fluctuations change the distribution of terminal wealth but not its conditional
expectation, and therefore \(\sigma\) drops out of the reduced equations. In
Cases III and IV, the nonlinear utility objective turns price uncertainty into
the certainty-equivalent penalty
\(
    \frac{1}{2}\sigma^2\gamma q^2
\),
which lowers the continuation value of inventory and pushes the optimal policy
toward more aggressive liquidation.

The connection between Cases II and III is especially transparent when
\(
    J(q)=\frac{1}{2}\sigma^2\gamma q^2.
\)
Then the coefficient \(A_q\) is the same in both cases:
\[
    \frac{\kappa}{b}\bigl(g(s)q-J(q)\bigr)
    =
    \frac{\kappa}{b}
    \left(
        g(s)q-\frac{1}{2}\sigma^2\gamma q^2
    \right).
\]
Thus a quadratic running inventory penalty reproduces, at the level of the
continuation-value equation, the certainty-equivalent inventory cost induced by
exponential utility. The two formulations are nevertheless not identical:
exponential utility also changes the execution coefficient in the linear
\(w\)-system from \(\lambda\exp\left\{-\frac{\kappa a}{b}-1\right\}\) to
\[
    \widehat{\lambda}
    =
    \lambda
    \left(
        \frac{\kappa}{\kappa+b\gamma}
    \right)^{\frac{\kappa}{b\gamma}+1}
    e^{-\kappa a/b},
\]
and replaces the risk-neutral quote constant \(1/\kappa\) by
\[
    \frac{1}{b\gamma}
    \log\left(\frac{\kappa+b\gamma}{\kappa}\right).
\]
Hence Cases II and III share the same inventory-risk contribution under this
choice of \(J\), but they do not generally produce identical optimal quotes. The two quotes coincide, or differ only by the
corresponding constant shift, only when these additional effects are matched or
negligible.

In the small-risk-aversion regime, the exponential-utility quote constant
satisfies
\[
    \frac{1}{b\gamma}
    \log\left(\frac{\kappa+b\gamma}{\kappa}\right)
    =
    \frac{1}{\kappa}
    -
    \frac{b\gamma}{2\kappa^2}
    +
    O(\gamma^2),
    \qquad \gamma\to0.
\]
Thus the exponential-utility quote approaches the risk-neutral quote as
\(\gamma\to0\). At the same time, the volatility penalty
\(
    \frac{1}{2}\sigma^2\gamma q^2
\)
is precisely the leading-order certainty-equivalent cost of inventory risk.
This explains why quadratic inventory penalties are frequently used as a
tractable proxy for risk aversion in optimal execution models.

Case IV combines the two inventory-risk effects:
\[
    A_q=
    \frac{\kappa}{b}
    \left(
        g(s)q
        -
        \frac{1}{2}\sigma^2\gamma q^2
        -
        J(q)
    \right).
\]
The running inventory cost and the exponential-utility volatility penalty are
additive in the coefficient \(A_q\). Financially, the trader is penalized both
for explicitly carrying inventory over time and for being exposed to
mark-to-market price uncertainty. Both effects reduce the value of waiting and
therefore push the optimal policy toward more aggressive liquidation.

\section{Long-horizon asymptotics of the optimal quotes}
\label{sec:long_horizon_asymptotics}

We now discuss the behaviour of the optimal quotes as the trading horizon becomes large. Since the feedback quotes are expressed through the ratio
\(
    w(t,q)/w(t,q-1),
\)
their long-horizon behaviour is determined by the asymptotics of the triangular system. It is convenient to work with the remaining time
\(
    \tau:=T-t.
\)
Thus the relevant asymptotic regime is \(\tau\to\infty\). This includes the initial quote \(t=0\) as \(T\to\infty\). If instead the remaining time \(T-t\) stays fixed, then no long-horizon asymptotic effect arises.

Recall that, with
\[
    v_q(\tau):=w(T-\tau,q),
\]
the common triangular system can be written as
\[
    \frac{\dd}{\dd\tau}v_q(\tau)
    =
    A_qv_q(\tau)+Cv_{q-1}(\tau),
    \qquad
    v_q(0)=G_q,
    \qquad
    v_0(\tau)=1.
\]
For each \(q\), define
\[
    \overline A_q
    :=
    \max_{0\le i\le q} A_i,
    \qquad A_0:=0.
\]

\subsection{Non-degenerate case}
\label{subsec:asymptotic_nondegenerate}

Assume that the coefficients are non-degenerate:
\[
    A_i\neq A_j,
    \qquad i\neq j.
\]
Then the maximum defining \(\overline A_q\) is attained at a unique index.

\begin{proposition}[Long-horizon growth rate in the non-degenerate case]
\label{prop:asymptotic_quotes_nondegenerate}
Assume that
\(
    A_i\neq A_j, i\neq j,
\)
for \(i,j\in\{0,\dots,Q_0\}\), and that \(C>0\), \(G_q>0\) for
\(q=0,\dots,Q_0\). Then, for each fixed \(q\geq1\),
\begin{equation}
\label{eq:log_ratio_growth_rate_nondegenerate}
    \lim_{T-t\to\infty}
    \frac{1}{T-t}
    \log\frac{w(t,q)}{w(t,q-1)}
    =
    \overline A_q-\overline A_{q-1}.
\end{equation}
Consequently, for all four cases,
\begin{equation}
\label{eq:quote_growth_rate_nondegenerate}
    \lim_{T-t\to\infty}
    \frac{\delta^{\mathrm{unc}}(t,q)}{T-t}
    =
    \frac{\overline A_q-\overline A_{q-1}}{\kappa}.
\end{equation}
In particular, if
\(
    \overline A_q>\overline A_{q-1},
\)
then the unconstrained quote grows linearly in the remaining horizon. If
\(
    \overline A_q=\overline A_{q-1},
\)
then
\[
    \lim_{T-t\to\infty}
    \frac{\delta^{\mathrm{unc}}(t,q)}{T-t}
    =
    0.
\]
\end{proposition}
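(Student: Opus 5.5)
The plan is to prove a two-sided exponential bound for the reversed-time functions $v_q(\tau)=w(T-\tau,q)$, using the variation-of-constants recursion and positivity rather than the explicit divided-difference formula. Both limits in the proposition then follow immediately.

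Concretely, I will show by induction on $q$ that there exist constants $0<c_q\le K_q<\infty$ such that
\[
    c_q e^{\overline A_q\tau}\le v_q(\tau)\le K_q e^{\overline A_q\tau},
    \qquad \tau\ge 0 .
\]
This gives $\tau^{-1}\log v_q(\tau)\to\overline A_q$. Subtracting the same statement for $q-1$ yields \eqref{eq:log_ratio_growth_rate_nondegenerate}.

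\textbf{Base case and upper bound.} For $q=0$ we have $v_0\equiv1$ and $\overline A_0=0$, so the bound holds with $c_0=K_0=1$. For the inductive step, write
\[
    v_q(\tau)=e^{A_q\tau}G_q+C\int_0^\tau e^{A_q(\tau-u)}v_{q-1}(u)\,\dd u,
\]
which is \eqref{eq:w_recursive_solution_general} in reversed time. By Lemma~\ref{lem:positivity_w}, every term here is positive. Inserting the inductive upper bound gives
\[
    v_q(\tau)\le G_qe^{A_q\tau}+CK_{q-1}e^{A_q\tau}\int_0^\tau e^{(\overline A_{q-1}-A_q)u}\,\dd u .
\]
Non-degeneracy guarantees $A_q\neq\overline A_{q-1}$, since $\overline A_{q-1}$ equals some $A_i$ with $i<q$. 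The integral therefore equals $\bigl(e^{(\overline A_{q-1}-A_q)\tau}-1\bigr)/(\overline A_{q-1}-A_q)$. In either sign case the right-hand side is at most a constant times $e^{\max(A_q,\overline A_{q-1})\tau}=e^{\overline A_q\tau}$.

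\textbf{Lower bound.} This is the step where I expect the only real care to be needed. In the explicit formula \eqref{eq:general_explicit_solution_expanded}, the coefficient of $e^{\overline A_q\tau}$ is a signed sum, so nonvanishing is not obvious there; the induction with positivity avoids this issue. There are two cases.
\begin{itemize}
\item If $\overline A_q=A_q$, drop the integral, which is nonnegative, to get $v_q(\tau)\ge G_qe^{A_q\tau}$. This uses $G_q>0$.
\item Otherwise $\overline A_q=\overline A_{q-1}>A_q$. Using the inductive lower bound,
\[
    v_q(\tau)\ge Cc_{q-1}e^{\overline A_{q-1}\tau}\int_0^\tau e^{-(\overline A_{q-1}-A_q)(\tau-u)}\,\dd u .
\]
The last integral equals $\bigl(1-e^{-(\overline A_{q-1}-A_q)\tau}\bigr)/(\overline A_{q-1}-A_q)$, which is bounded below by a positive constant for $\tau\ge1$. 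For $\tau\in[0,1]$, continuity and positivity of $v_q$ (Lemma~\ref{lem:positivity_w}) give a positive lower bound directly, so the estimate holds for all $\tau\ge0$ after shrinking $c_q$.
\end{itemize}
Here $C>0$ is essential, and the inductive constant $c_{q-1}$ is what transmits the growth of lower inventory levels upward.

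\textbf{Consequences for the quotes.} By Proposition~\ref{prop:explicit_optimal_quotes_four_cases}, in every case
\[
    \delta^{\mathrm{unc}}(t,q)=K+\frac1\kappa\log\frac{w(t,q)}{w(t,q-1)},
\]
where $K$ is a constant independent of $t$: either $1/\kappa+a/b$, or $\frac{1}{b\gamma}\log\frac{\kappa+b\gamma}{\kappa}+a/b$. Dividing by $T-t$ and letting $T-t\to\infty$ kills $K$ and gives \eqref{eq:quote_growth_rate_nondegenerate}. Since $\overline A_q\ge\overline A_{q-1}$ always holds, the two sub-cases follow: linear growth when the inequality is strict, and a vanishing linear rate under equality.
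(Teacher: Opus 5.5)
Your proof is correct, but it takes a different route from the paper's.

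\emph{The paper's route.} It starts from the explicit finite exponential sum $v_q(\tau)=\sum_{i}B_{q,i}e^{A_i\tau}$ of Proposition~\ref{prop:explicit_solution_triangular_system}. It isolates the coefficient $B_q$ of the unique dominant mode $e^{\overline A_q\tau}$ and asserts $B_q>0$ ``by positivity of the triangular system''. It then concludes $v_q=B_qe^{\overline A_q\tau}(1+o(1))$.

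\emph{Your route.} You never use the divided-difference formula. Instead you induct on the variation-of-constants recursion and get two-sided bounds $c_qe^{\overline A_q\tau}\le v_q\le K_qe^{\overline A_q\tau}$. The underlying dichotomy is the same in both proofs: a new record $A_q>\overline A_{q-1}$ versus an inherited record $\overline A_{q-1}>A_q$. In your proof it appears as the two cases of the lower bound.

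\emph{What your route buys.} First, rigor at exactly the point the paper treats tersely. Positivity of $v_q$ alone only gives $B_q\ge 0$, and your lower bound is precisely the statement that the dominant coefficient does not vanish. The paper's claim can be justified the same way from the recursion: $B_q=CB_{q-1}/(\overline A_{q-1}-A_q)$ in the inherited case, and $B_q=G_q+C\int_0^\infty e^{-A_qu}v_{q-1}(u)\dd u$ in the new-record case. Second, robustness: if you allow a polynomial factor in the upper bound, the same induction gives $\tau^{-1}\log v_q\to\overline A_q$ even without non-degeneracy.

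\emph{What the paper's route buys.} Sharper asymptotics. It gives $\log(v_q/v_{q-1})-(\overline A_q-\overline A_{q-1})\tau\to\log(B_q/B_{q-1})$. This underlies the subsequent remark that, when $\overline A_q=\overline A_{q-1}$, the quote converges to a limit determined by $B_q/B_{q-1}$. Your bounds only show that this remainder is $O(1)$, i.e.\ that the quote stays bounded. That is all the proposition itself requires.
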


\begin{proof}
Let \(\tau:=T-t\) and \(v_q(\tau):=w(T-\tau,q)\). By
Proposition~\ref{prop:explicit_solution_triangular_system}, \(v_q\) is a finite
linear combination of exponentials,
\[
    v_q(\tau)=\sum_{i=0}^q B_{q,i}e^{A_i\tau},
\]
where \(B_{q,i}\) denotes the coefficient of \(e^{A_i\tau}\). 
Since the \(A_i\)'s are distinct, the maximum
\(\overline A_q=\max_{0\leq i\leq q}A_i\) is attained at a unique index; denote
it by \(i_q\). Set \(B_q:=B_{q,i_q}\). By positivity of the triangular system,
together with \(C>0\) and \(G_r>0\), the dominant coefficient satisfies
\(B_q>0\). Therefore
\[
    v_q(\tau)=B_q e^{\overline A_q\tau}(1+o(1)),
    \qquad \tau\to\infty.
\]
Similarly, if \(i_{q-1}\) is the unique maximizer of
\(\max_{0\leq i\leq q-1}A_i\) and \(B_{q-1}:=B_{q-1,i_{q-1}}>0\), then
\[
    v_{q-1}(\tau)
    =
    B_{q-1}e^{\overline A_{q-1}\tau}(1+o(1)).
\]
Hence
\[
    \log\frac{v_q(\tau)}{v_{q-1}(\tau)}
    =
    \bigl(\overline A_q-\overline A_{q-1}\bigr)\tau+O(1).
\]
Dividing by \(\tau\) and using \(\tau=T-t\) gives
\eqref{eq:log_ratio_growth_rate_nondegenerate}.

For Cases I and II,
\[
    \delta^{\mathrm{unc}}(t,q)
    =
    \frac{1}{\kappa}
    +
    \frac{a}{b}
    +
    \frac{1}{\kappa}
    \log\frac{w(t,q)}{w(t,q-1)}.
\]
For Cases III and IV,
\[
    \delta^{\mathrm{unc}}(t,q)
    =
    \frac{1}{b\gamma}
    \log\left(\frac{\kappa+b\gamma}{\kappa}\right)
    +
    \frac{a}{b}
    +
    \frac{1}{\kappa}
    \log\frac{w(t,q)}{w(t,q-1)}.
\]
In both cases, the terms not involving \(w(t,q)/w(t,q-1)\) are finite constants
independent of the remaining time \(T-t\). Hence, after division by \(T-t\),
these constants vanish:
\[
    \lim_{T-t\to\infty}
    \frac{1}{T-t}
    \left(
        \frac{1}{\kappa}
        +
        \frac{a}{b}
    \right)
    =
    0,
\]
and similarly
\[
    \lim_{T-t\to\infty}
    \frac{1}{T-t}
    \left[
        \frac{1}{b\gamma}
        \log\left(\frac{\kappa+b\gamma}{\kappa}\right)
        +
        \frac{a}{b}
    \right]
    =
    0.
\]
Therefore, using \eqref{eq:log_ratio_growth_rate_nondegenerate}, we obtain
\[
    \lim_{T-t\to\infty}
    \frac{\delta^{\mathrm{unc}}(t,q)}{T-t}
    =
    \frac{1}{\kappa}
    \lim_{T-t\to\infty}
    \frac{1}{T-t}
    \log\frac{w(t,q)}{w(t,q-1)}
    =
    \frac{\overline A_q-\overline A_{q-1}}{\kappa},
\]
which proves \eqref{eq:quote_growth_rate_nondegenerate}.
\end{proof}

The proposition shows that the long-horizon growth of the optimal quote is
controlled by the increment
\(
    \overline A_q-\overline A_{q-1}.
\)

If \( \overline A_q>\overline A_{q-1}, \) then adding the \(q\)-th unit of inventory introduces a new dominant growth mode in the continuation value. The unconstrained quote then grows linearly in the remaining horizon. Financially, the trader is increasingly willing to wait for a better execution price, because the marginal continuation value of holding one additional unit of inventory grows over long horizons.

If instead \( \overline A_q=\overline A_{q-1}, \) then there is no linear growth in the optimal quote. In the non-degenerate case, the quote remains bounded as \(T-t\to\infty\), with its limiting value determined by the ratio \(B_q/B_{q-1}\). Financially, this means that the additional unit of inventory does not create a new dominant long-run continuation value, so the trader does not become increasingly patient as the horizon grows.

If the admissible quote set is bounded, then
\[
    \delta^*(t,q)
    =
    \Pi_{[\delta_{\min},\delta_{\max}]}
    \bigl(\delta^{\mathrm{unc}}(t,q)\bigr).
\]
Hence, whenever the unconstrained quote diverges to \(+\infty\), the admissible
quote converges to the upper bound \(\delta_{\max}\).

\subsection{Fully degenerate case}
\label{subsec:asymptotic_degenerate}

We next consider the fully degenerate case
\[
    A_q=0,
    \qquad q=0,1,\dots,Q_0.
\]
In this case, the continuation value grows polynomially rather than
exponentially in the remaining time.

\begin{proposition}[Logarithmic long-horizon growth in the fully degenerate case]
\label{prop:asymptotic_quotes_degenerate}
Assume that
\(
    A_q=0, q=0,1,\dots,Q_0,
\)
with \(C>0\) and \(G_0=1\). Then, for each fixed \(q\geq1\),
\begin{equation}
\label{eq:degenerate_log_ratio_growth_rate}
    \lim_{T-t\to\infty}
    \frac{1}{\log(T-t)}
    \log\frac{w(t,q)}{w(t,q-1)}
    =
    1.
\end{equation}
Consequently, for all four cases,
\begin{equation}
\label{eq:degenerate_quote_growth_rate}
    \lim_{T-t\to\infty}
    \frac{\delta^{\mathrm{unc}}(t,q)}{\log(T-t)}
    =
    \frac{1}{\kappa}.
\end{equation}
\end{proposition}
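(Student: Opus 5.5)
The plan is to work directly from the explicit degenerate-case formula of Lemma~\ref{lem:explicit_solution_degenerate_case}. Set \(\tau:=T-t\) and \(v_q(\tau)=w(T-\tau,q)\). Then
\[
    v_q(\tau)=\sum_{i=0}^q \frac{C^i}{i!}G_{q-i}\tau^i ,
\]
which is a polynomial in \(\tau\) of degree exactly \(q\). Its leading coefficient is \(C^qG_0/q!=C^q/q!>0\), since \(C>0\) and \(G_0=1\). All lower-order coefficients are finite, because \(G_{q-i}\) are fixed constants; in the execution problems they are also positive, although positivity of the lower coefficients is not needed for the leading-order asymptotics.

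First I will show that
\[
    v_q(\tau)=\frac{C^q}{q!}\tau^q\bigl(1+O(\tau^{-1})\bigr),\qquad \tau\to\infty.
\]
This comes from factoring out \(\tau^q\): the remaining terms are \(\frac{q!}{C^{q}}\frac{C^i}{i!}G_{q-i}\tau^{i-q}\to0\) for \(i<q\). The same expansion holds with \(q\) replaced by \(q-1\), with leading term \(C^{q-1}\tau^{q-1}/(q-1)!\). Dividing the two expansions gives
\[
    \frac{w(t,q)}{w(t,q-1)}=\frac{C}{q}\,\tau\,\bigl(1+o(1)\bigr),
\]
so that
\[
    \log\frac{w(t,q)}{w(t,q-1)}=\log\tau+\log(C/q)+o(1).
\]
Positivity of both polynomials for large \(\tau\) follows from Lemma~\ref{lem:positivity_w}, or simply from the positive leading coefficient, so the logarithms are well defined. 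Dividing by \(\log\tau\) and letting \(\tau\to\infty\) yields \eqref{eq:degenerate_log_ratio_growth_rate}.

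For \eqref{eq:degenerate_quote_growth_rate}, I will use the quote formulas of Proposition~\ref{prop:explicit_optimal_quotes_four_cases}. In each of the four cases, \(\delta^{\mathrm{unc}}\) equals a constant independent of \(\tau\) plus \(\frac1\kappa\log\frac{w(t,q)}{w(t,q-1)}\). The constant is \(1/\kappa+a/b\) in Cases I–II and \(\frac{1}{b\gamma}\log\frac{\kappa+b\gamma}{\kappa}+a/b\) in Cases III–IV. After division by \(\log\tau\) the constant vanishes, and the limit \(1/\kappa\) follows from the first part.

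The only step needing care is the leading-order expansion of the ratio. One must check that the degree drops by exactly one from \(v_q\) to \(v_{q-1}\), which holds because the leading coefficient of each \(v_q\) involves only \(G_0=1\) and never a possibly small \(G_q\). One must also check that the lower-order terms cannot cancel the leading term, which is automatic because the degree is fixed and the coefficients are finite. I therefore expect no real obstacle; the argument reduces to this polynomial-degree observation.
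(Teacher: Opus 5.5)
Your proposal is correct and follows essentially the same route as the paper: you expand the degenerate polynomial formula of Lemma~\ref{lem:explicit_solution_degenerate_case}, isolate the leading term \(C^q(T-t)^q/q!\) (using \(C>0\) and \(G_0=1\)), deduce \(w(t,q)/w(t,q-1)=\frac{C}{q}(T-t)(1+o(1))\), and then note that the \((T-t)\)-independent quote constants vanish after division by \(\log(T-t)\). Your added justification that the logarithms are well defined, via the positive leading coefficient or Lemma~\ref{lem:positivity_w}, is a small detail the paper leaves implicit.
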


\begin{proof}
In the fully degenerate case, Lemma~\ref{lem:explicit_solution_degenerate_case}
gives
\[
    w(t,q)
    =
    \sum_{i=0}^q
    \frac{C^i}{i!}
    G_{q-i}(T-t)^i.
\]
Since \(C>0\) and \(G_0=1\), the leading term as \(T-t\to\infty\) is the term
with \(i=q\). Hence
\[
    w(t,q)
    =
    \frac{C^q}{q!}(T-t)^q
    +
    O\bigl((T-t)^{q-1}\bigr).
\]
Similarly,
\[
    w(t,q-1)
    =
    \frac{C^{q-1}}{(q-1)!}(T-t)^{q-1}
    +
    O\bigl((T-t)^{q-2}\bigr).
\]
Therefore,
\[
    \frac{w(t,q)}{w(t,q-1)}
    =
    \frac{C}{q}(T-t)(1+o(1)).
\]
Taking logarithms yields
\[
    \log\frac{w(t,q)}{w(t,q-1)}
    =
    \log(T-t)
    +
    \log\frac{C}{q}
    +
    o(1).
\]
Dividing by \(\log(T-t)\) gives
\eqref{eq:degenerate_log_ratio_growth_rate}.

For Cases I and II,
\[
    \delta^{\mathrm{unc}}(t,q)
    =
    \frac{1}{\kappa}
    +
    \frac{a}{b}
    +
    \frac{1}{\kappa}
    \log\frac{w(t,q)}{w(t,q-1)}.
\]
For Cases III and IV,
\[
    \delta^{\mathrm{unc}}(t,q)
    =
    \frac{1}{b\gamma}
    \log\left(\frac{\kappa+b\gamma}{\kappa}\right)
    +
    \frac{a}{b}
    +
    \frac{1}{\kappa}
    \log\frac{w(t,q)}{w(t,q-1)}.
\]
In both cases, all terms except the logarithmic continuation-value ratio are
finite constants independent of \(T-t\). After division by \(\log(T-t)\), these
constants vanish. Hence
\[
    \lim_{T-t\to\infty}
    \frac{\delta^{\mathrm{unc}}(t,q)}{\log(T-t)}
    =
    \frac{1}{\kappa}
    \lim_{T-t\to\infty}
    \frac{1}{\log(T-t)}
    \log\frac{w(t,q)}{w(t,q-1)}
    =
    \frac{1}{\kappa}.
\]
This proves \eqref{eq:degenerate_quote_growth_rate}.
\end{proof}

Thus, in the fully degenerate case, the optimal quote grows only logarithmically
with the remaining horizon. This contrasts with the non-degenerate case, where
a new dominant exponential growth mode can lead to linear growth in \(T-t\).
Financially, when all \(A_q\)'s vanish, there is no drift, running-cost, or
risk-aversion contribution generating exponential growth or decay in the
continuation value. The only long-horizon effect comes from the increasing
number of possible execution opportunities, which produces polynomial growth in
\(w(t,q)\) and hence logarithmic growth in the optimal quote.

\subsection{Implications for the four cases in the non-degenerate case}
\label{subsec:implications_four_cases_nondegenerate}

We now interpret the non-degenerate long-horizon growth rate in the four
optimization criteria. In the fully degenerate case \(A_q=0\) for all
\(q=0,\dots,Q_0\), Proposition~\ref{prop:asymptotic_quotes_degenerate} shows
that the unconstrained quote grows logarithmically with universal rate
\(1/\kappa\), independently of the particular form of \(A_q\). Therefore, the
case-specific financial effects of signal drift, running inventory cost, and
price risk enter the long-horizon growth rate only in the non-degenerate case,
which is the focus of this subsection.

The four cases differ through the coefficient \(A_q\). It is useful to write
\[
    A_q=\frac{\kappa}{b}\Psi(q),
\]
where \(\Psi(q)\) represents the effective continuation value of holding
inventory level \(q\). In the four cases,
\[
\begin{array}{lll}
\text{Case I:}
&
\displaystyle
\Psi(q)=g(s)q,
\\[0.8em]
\text{Case II:}
&
\displaystyle
\Psi(q)=g(s)q-J(q),
\\[0.8em]
\text{Case III:}
&
\displaystyle
\Psi(q)=g(s)q-\frac{1}{2}\sigma^2\gamma q^2,
\\[0.8em]
\text{Case IV:}
&
\displaystyle
\Psi(q)=g(s)q-\frac{1}{2}\sigma^2\gamma q^2-J(q).
\end{array}
\]
Assume throughout this subsection that
\(A_i\neq A_j,\) and \(i\neq j.\)
Then Proposition~\ref{prop:asymptotic_quotes_nondegenerate} gives
\[
    \lim_{T-t\to\infty}
    \frac{\delta^{\mathrm{unc}}(t,q)}{T-t}
    =
    \frac{\overline A_q-\overline A_{q-1}}{\kappa}.
\]
Since
\[
    \overline A_q
    =
    \frac{\kappa}{b}
    \max_{0\le i\le q}\Psi(i),
\]
the growth rate can be written as
\begin{equation}
\label{eq:general_growth_rate_Psi}
    \frac{\overline A_q-\overline A_{q-1}}{\kappa}
    =
    \frac{1}{b}
    \left(
        \max_{0\le i\le q}\Psi(i)
        -
        \max_{0\le i\le q-1}\Psi(i)
    \right).
\end{equation}
Thus the unconstrained quote grows linearly in the remaining horizon precisely
when the additional inventory level \(q\) creates a new maximum of the effective
continuation value \(\Psi\). If it does not, the quote has zero linear growth
rate in the non-degenerate case.

\begin{enumerate}
\item In Case I,
\(
    \Psi(q)=g(s)q.
\)
\begin{itemize}
\item If \(g(s)>0\), then \(\Psi(q)\) is increasing, and
\[
    \max_{0\le i\le q}\Psi(i)-\max_{0\le i\le q-1}\Psi(i)
    =
    g(s).
\]
Thus
\[
    \lim_{T-t\to\infty}
    \frac{\delta^{\mathrm{unc}}(t,q)}{T-t}
    =
    \frac{g(s)}{b}.
\]
Equivalently,
\[
    \delta^{\mathrm{unc}}(t,q)
    =
    \frac{g(s)}{b}(T-t)+O(1),
    \qquad T-t\to\infty.
\]
A positive drift signal makes holding inventory attractive, so the trader becomes more patient and quotes further away from the reference price as the horizon increases.

\item If \(g(s)<0\), then \(\Psi(q)<0\) for all \(q\geq1\), while \(\Psi(0)=0\). Hence the maximum is attained at \(0\), and
\[
    \overline A_q=\overline A_{q-1}=0.
\]
The optimal quote remains bounded as \(T-t\to\infty\). Financially, a negative drift signal makes waiting unattractive, so the trader does not increase the quote depth indefinitely even over a long horizon.
\end{itemize}
\item In Case II,
\(
    \Psi(q)=g(s)q-J(q).
\)
The running inventory penalty reduces the effective value of waiting. Even if \(g(s)>0\), a sufficiently large penalty \(J(q)\) can prevent \(\Psi(q)\) from creating new record values as \(q\) increases. In that case the optimal quote remains bounded. If, however, the favourable drift dominates the running inventory cost at the relevant inventory level, then \(\Psi(q)\) may attain a new maximum at \(q\), and the optimal quote grows linearly with the remaining horizon. Thus Case II balances the benefit of waiting for favourable price movements against the cost of carrying inventory.

\item In Case III,
\(
    \Psi(q)=g(s)q-\frac{1}{2}\sigma^2\gamma q^2.
\)
The second term is the certainty-equivalent cost of mark-to-market inventory risk under exponential utility. It is quadratic in \(q\), increasing in volatility \(\sigma\), and increasing in risk aversion \(\gamma\). Therefore, even when \(g(s)>0\), the effective value \(\Psi(q)\) eventually decreases for large \(q\). The quote grows linearly with the horizon only at inventory levels where adding the \(q\)-th unit creates a new maximum of \(\Psi\). Once the volatility-risk penalty dominates, the optimal quote becomes bounded. Financially, risk aversion prevents the trader from becoming indefinitely patient at large inventory levels.

\item In Case IV,
\(
    \Psi(q)
    =
    g(s)q
    -
    \frac{1}{2}\sigma^2\gamma q^2
    -
    J(q).
\)
The running inventory penalty and the exponential-utility volatility penalty are additive. Both reduce the continuation value of holding inventory and therefore make the trader more aggressive. Consequently, Case IV is the most conservative of the four cases: it is least likely that \(\Psi(q)\) creates a new maximum at high inventory levels, and therefore least likely that the unconstrained quote grows linearly with the horizon.
\end{enumerate}

\section{Numerical Experiments}\label{sec:numerics}

We now present numerical experiments for the four cases studied above. Unless
otherwise stated, the parameter values are given in
Table~\ref{tab:baseline-parameters}. The experiments vary the signal
specification, the inventory level, and the main model parameters in order to
illustrate their effects on the optimal quotes. More specifically, we compare
constant and time-dependent signals, different inventory levels, different
terminal liquidation penalties \(I(q)=\alpha q\), different running inventory
penalties \(J(q)=\beta q^2\), different volatility and risk-aversion parameters
\((\sigma,\gamma)\), and different terminal horizons \(T\). The purpose of the experiments is to illustrate the qualitative behaviour and
parameter sensitivity of the explicit optimal quotes.

\begin{table}[t]
\centering
\begin{tabular}{ll}
\hline
Parameter & Baseline value / description \\
\hline
$T$ & $30$, except in the horizon-comparison experiment \\
$q$ & $q=1$ in the baseline plots; $q=2$ in inventory-comparison plots \\
$a,b$ & $a=0$, $b=1$ \\
$\lambda$ & $5/6$ \\
$\kappa$ & $1000$ \\
$\alpha$ & $\alpha\in\{0.001,0.005\}$ for $I(q)=\alpha q$ \\
$\beta$ & $\beta\in\{0.0001,0.0005,0.001\}$ for $J(q)=\beta q^2$ \\
$\gamma$ & $\gamma\in\{0.01,0.05\}$ in Cases III and IV \\
$\sigma$ & $\sigma\in\{0.01,0.1\}$ in Cases III and IV \\
Signal levels & $g(s)\in\{-3,-2,-1,0,1,2,3\}\times 10^{-4}$ in the line plots \\
Constant signal & $g(s,t)=g(s)$ \\
Time-decay signal & $g(s,t)=g(s)e^{-0.01t}$ \\
Delayed-decay signal & $g(s,t)=g(s)e^{-0.01|t-10|}$ \\
\hline
\end{tabular}
\caption{Parameter values used in the numerical experiments.}
\label{tab:baseline-parameters}
\end{table}

The constant-signal figures are computed from the explicit optimal quotes in
Proposition~\ref{prop:explicit_optimal_quotes_four_cases}. The figures show
the unconstrained explicit quotes. If finite quote bounds are imposed in
practice, the constrained optimal quote is obtained by projection onto
\([\delta_{\min},\delta_{\max}]\). In all cases, the
optimal quote is computed from the ratio \(w(t,q)/w(t,q-1)\) with coefficients
chosen according to \eqref{eq:case_coefficients_A_C}. The precise additive terms differ between Cases I--II and
Cases III--IV, as given in
Proposition~\ref{prop:explicit_optimal_quotes_four_cases}. For 
time-dependent signal experiments, we use the implementation of frozen-signals to obtain optimal signal-adaptive quotes: at each time \(t\),
the instantaneous signal level \(g(s,t)\) is inserted into the
constant-coefficient formula.

The numerical results show that the optimal quotes respond systematically to
the signal, inventory, time-to-maturity, and the model parameters. Across all
four cases, favourable signals shift the optimal quotes upward, while adverse
signals shift them downward. Thus, when the signal predicts a favourable price
movement, the trader quotes more patiently; when the signal is adverse, the
trader quotes more aggressively to increase the probability of execution. The
separation between signal levels is most visible away from maturity. Close to
the terminal time, the quotes move closer together and become more aggressive,
reflecting the increasing need to complete the liquidation.

Figures~\ref{fig:caseI-q1-price-impact} and
\ref{fig:caseI-q2-price-impact} illustrate this effect in Case I under constant
signals. Positive signals lead to larger quotes, whereas negative signals lead
to lower quotes. Increasing the terminal liquidation penalty parameter
$\alpha$ shifts the quote curves downward, since holding inventory until
maturity becomes more costly. The same qualitative pattern appears for both
$q=1$ and $q=2$, although the inventory level changes the scale of the quote
profiles. This confirms that even in the risk-neutral benchmark, both the signal
and the liquidation penalty have a visible impact on the optimal quoting strategy.

Figures~\ref{fig:caseI-decay-inventory} and
\ref{fig:caseI-delayed-decay-alpha001-inventory} show that the strategy also
adapts to the time-dependent signal. Under the exponential decay
specification $g(s,t)=g(s)e^{-0.01t}$, the signal has its largest effect early
in the horizon and gradually weakens over time. Under the delayed-decay
specification $g(s,t)=g(s)\xi(t)$, with
$\xi(t)=\exp(-0.01|t-10|)$, the quote profiles reflect the changing signal
strength around the delay point. These experiments show that the framework can
capture not only different signal levels, but also time-varying signal
strength.

Figures~\ref{fig:caseII-q1-running-penalty} and
\ref{fig:caseII-q2-running-penalty} show the effect of the running inventory
penalty in Case II. Increasing $\beta$ in $J(q)=\beta q^2$ makes the quotes
more aggressive, because carrying inventory during the trading period becomes
more costly. This effect is stronger for larger inventory, as expected from the
quadratic form of the running penalty. Nevertheless, the ordering across signal
levels is preserved: favourable signals lead to higher quotes, while adverse
signals lead to lower quotes.

Figures~\ref{fig:caseIII-q1-risk-aversion}--\ref{fig:caseIV-q2-volatility}
show the impact of price risk, risk aversion, and volatility. In Case III,
increasing the risk-aversion parameter $\gamma$ reduces the value of waiting
and leads to more aggressive quotes. In Case IV, the effects of running
inventory costs and price risk act together. Larger running penalties, higher
volatility, and stronger risk aversion all push the strategy toward faster
execution. Positive signals partly offset this pressure by increasing the value
of waiting, while negative signals reinforce the need to execute more
aggressively.

Figure~\ref{fig:caseIV-q2-terminal-time} highlights the role of the trading
horizon. A shorter horizon leads to more aggressive quotes, since there is less
time to complete the liquidation. With a longer horizon, the trader can quote
more patiently, and the effect of the signal remains visible over a larger part
of the trading period. This confirms that signal information is most valuable
when there is sufficient time to exploit it, while terminal liquidation pressure
dominates near maturity.

Finally, Figure~\ref{fig:caseIII-heatmaps-signal-levels} illustrates the joint
dependence of the optimal quote on time and inventory in Case III under four
constant signal levels. The heatmaps show that the signal affects the full
time--inventory structure of the policy, rather than only shifting quotes at a
fixed inventory level. Relative to the zero-signal benchmark, a negative signal
shifts the quote surface downward and leads to more aggressive execution, while
positive signals shift the surface upward and allow the trader to quote more
patiently. The effect becomes stronger as the signal level increases. Overall,
the experiments show that ignoring predictive signals can lead to materially
different quoting decisions, especially away from maturity.

\begin{figure}[t]
    \centering
    \begin{subfigure}[t]{0.49\textwidth}
        \centering
        \includegraphics[width=\textwidth]{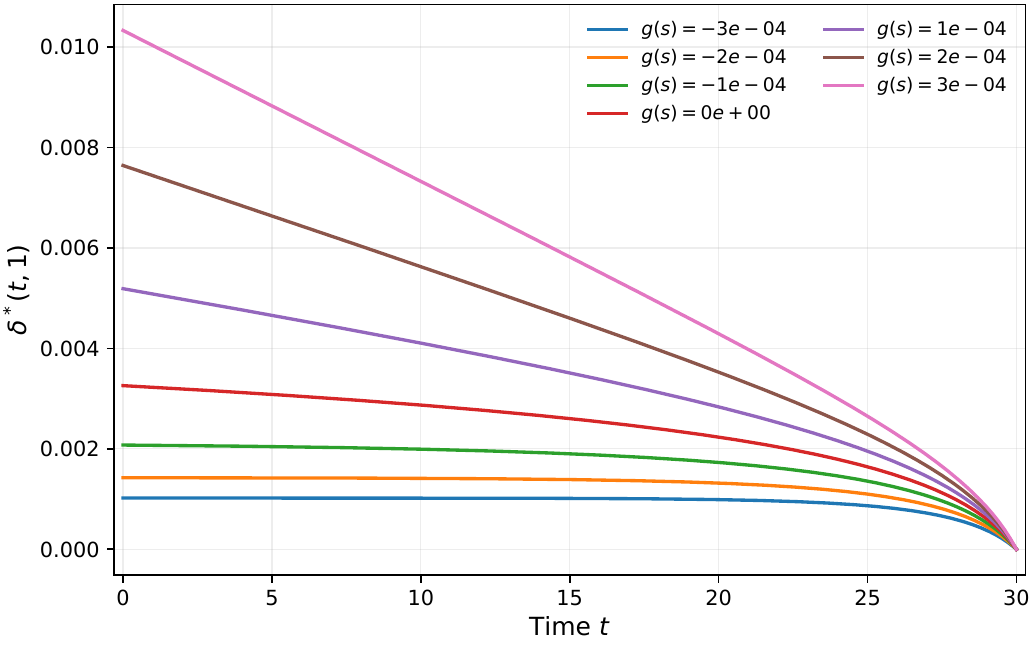}
        \caption{$\alpha=0.001$.}
        \label{fig:caseI-q1-alpha001}
    \end{subfigure}
    \hfill
    \begin{subfigure}[t]{0.49\textwidth}
        \centering
        \includegraphics[width=\textwidth]{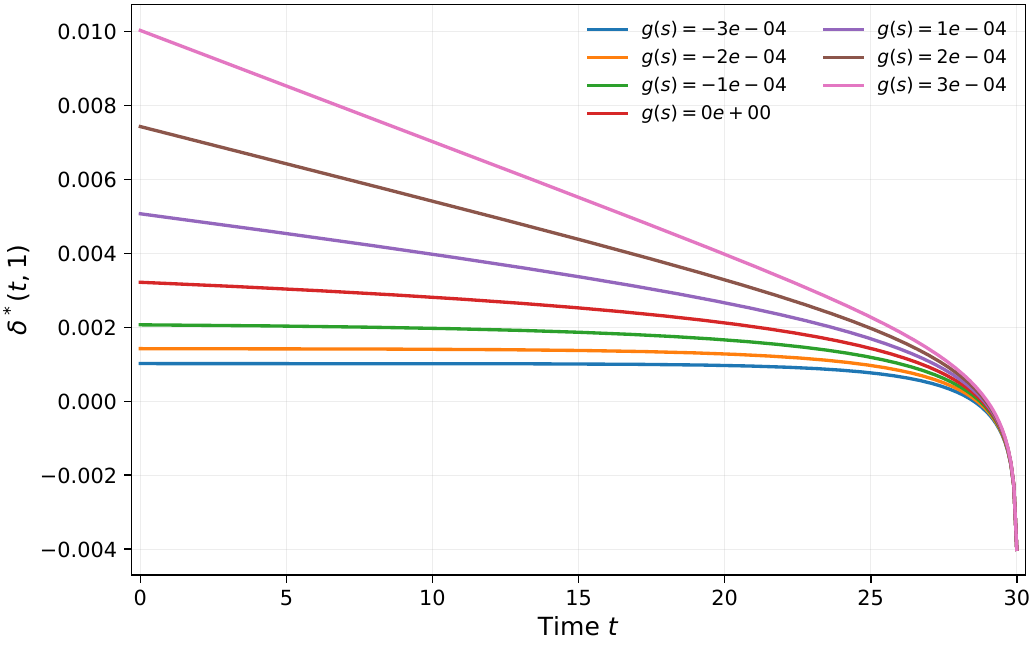}
        \caption{$\alpha=0.005$.}
        \label{fig:caseI-q1-alpha005}
    \end{subfigure}

    \caption{Case I: optimal quotes $\delta^\star(t,1)$ for different signal levels. The panels compare two values of the liquidation penalty parameter $\alpha$ in $I(q)=\alpha q$.}
    \label{fig:caseI-q1-price-impact}
\end{figure}

\begin{figure}[t]
    \centering
    \begin{subfigure}[t]{0.49\textwidth}
        \centering
        \includegraphics[width=\textwidth]{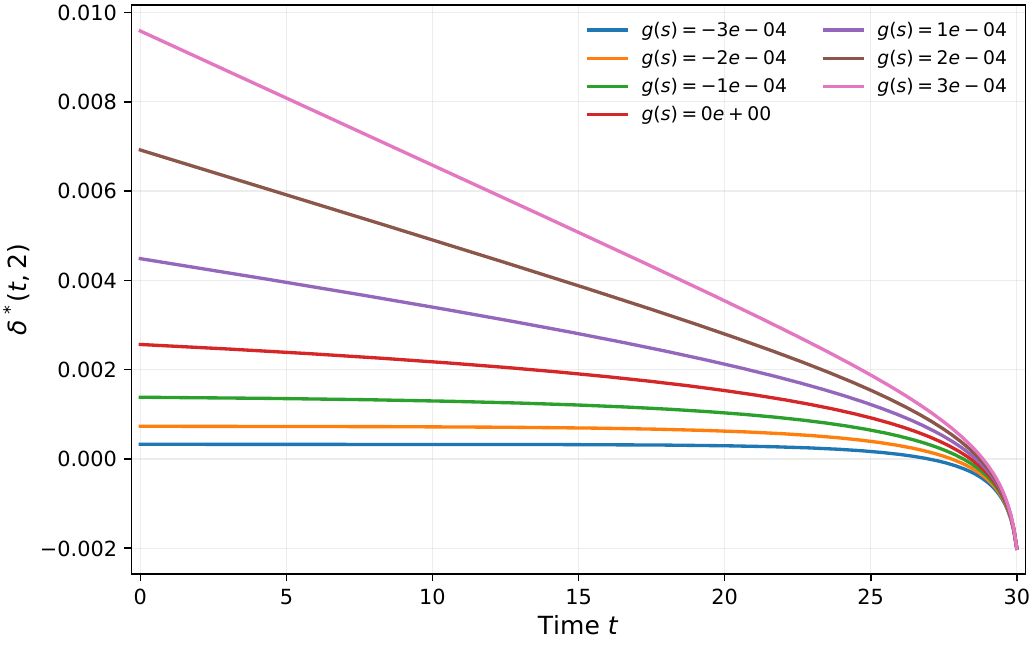}
        \caption{$\alpha=0.001$.}
        \label{fig:caseI-q2-alpha001}
    \end{subfigure}
    \hfill
    \begin{subfigure}[t]{0.49\textwidth}
        \centering
        \includegraphics[width=\textwidth]{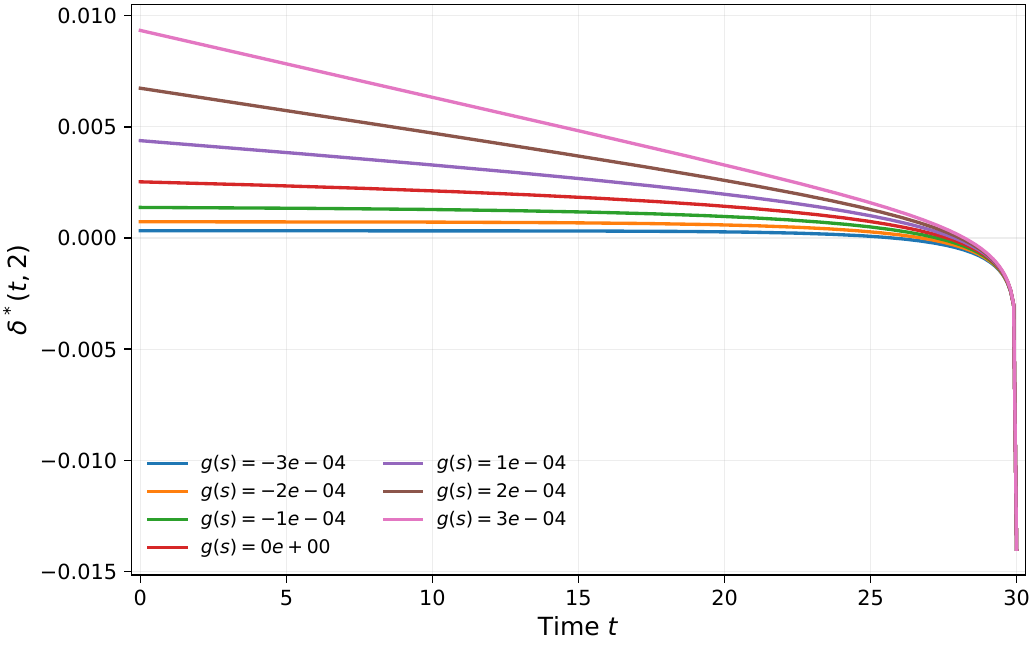}
        \caption{$\alpha=0.005$.}
        \label{fig:caseI-q2-alpha005}
    \end{subfigure}

    \caption{Case I: optimal quotes $\delta^\star(t,2)$ for different signal levels. The panels compare two values of the liquidation penalty parameter $\alpha$ in $I(q)=\alpha q$.}
    \label{fig:caseI-q2-price-impact}
\end{figure}

\begin{figure}[t]
    \centering
    \begin{subfigure}[t]{0.49\textwidth}
        \centering
        \includegraphics[width=\textwidth]{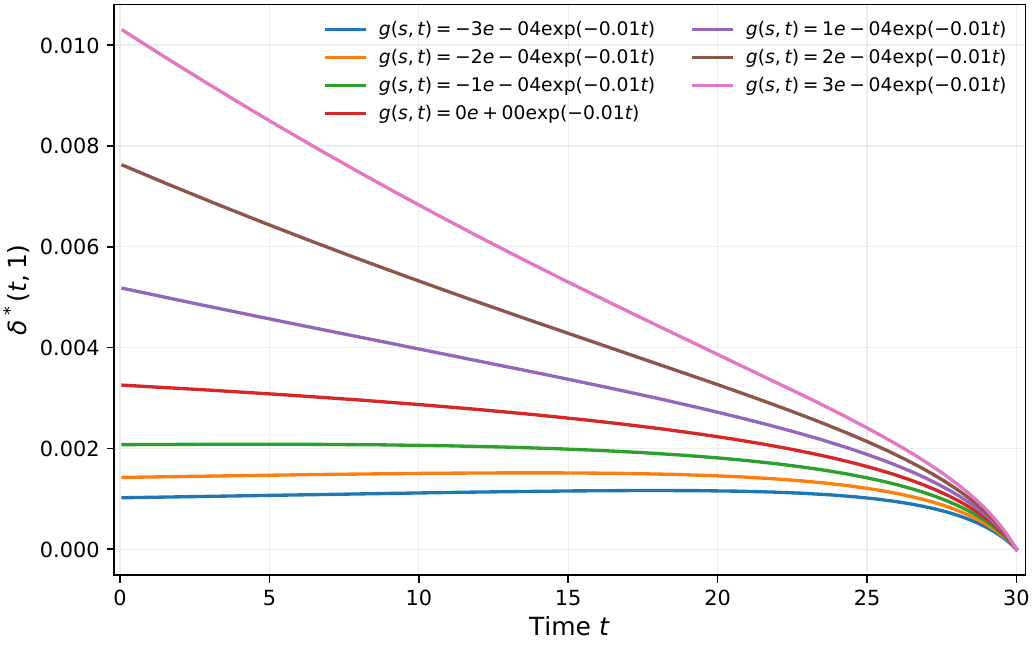}
        \caption{$q=1$.}
        \label{fig:caseI-q1-decay}
    \end{subfigure}
    \hfill
    \begin{subfigure}[t]{0.49\textwidth}
        \centering
        \includegraphics[width=\textwidth]{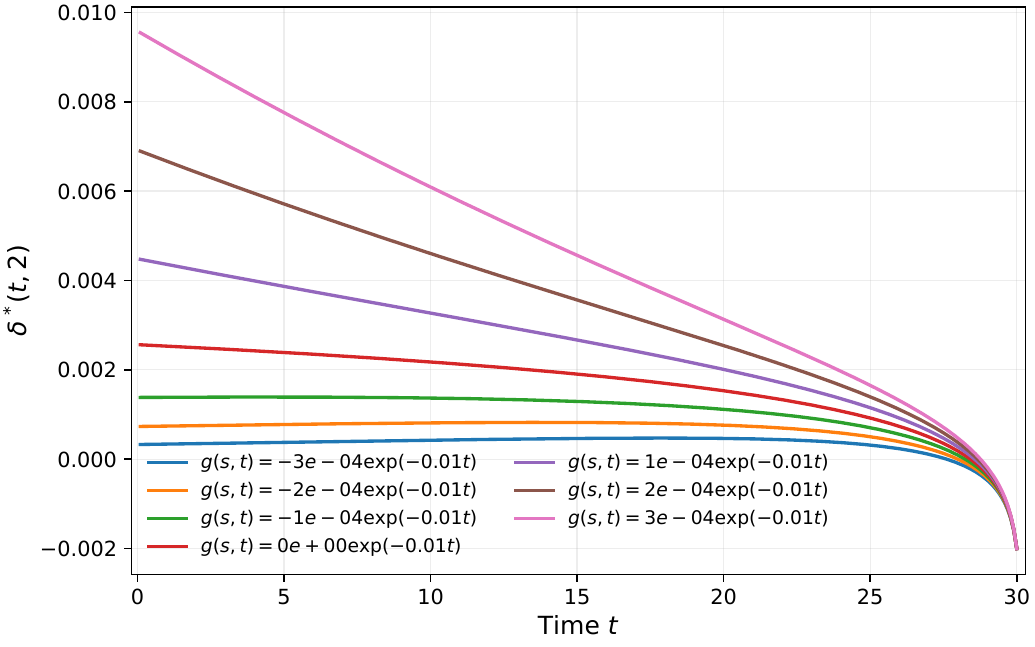}
        \caption{$q=2$.}
        \label{fig:caseI-q2-decay}
    \end{subfigure}
    \caption{Case I: optimal quotes for different signal levels under the exponential signal decay $g(s,t)=g(s)e^{-0.01t}$, with $\alpha=0.001$. The panels compare two initial inventory levels.}
    \label{fig:caseI-decay-inventory}
\end{figure}

\begin{figure}[t]
    \centering
    \begin{subfigure}[t]{0.49\textwidth}
        \centering
        \includegraphics[width=\textwidth]{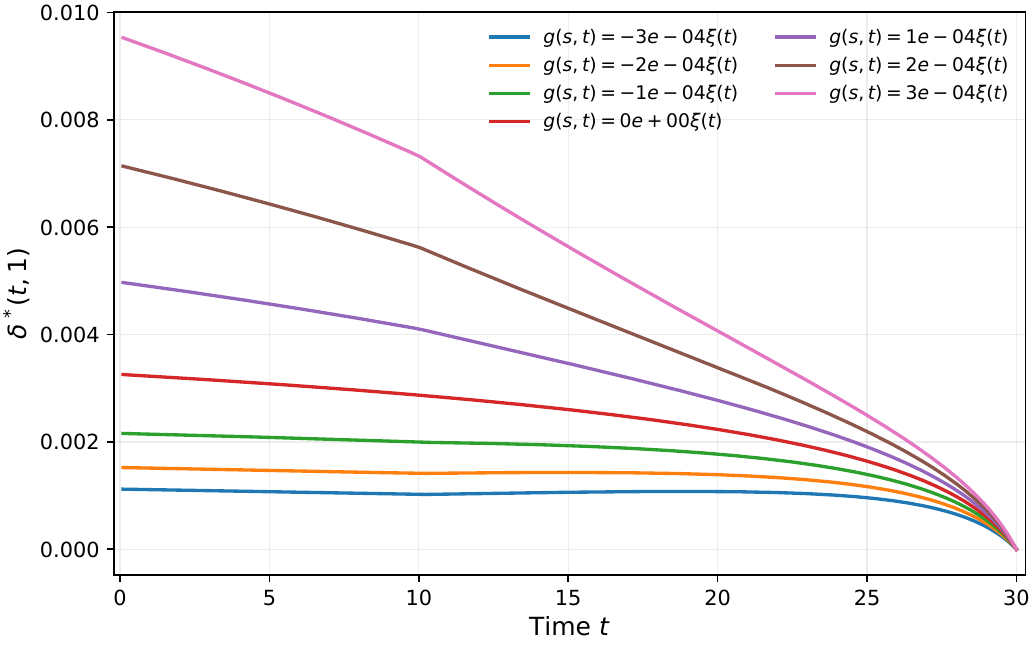}
        \caption{$q=1$.}
        \label{fig:caseI-q1-alpha001-delayed-decay}
    \end{subfigure}
    \hfill
    \begin{subfigure}[t]{0.49\textwidth}
        \centering
        \includegraphics[width=\textwidth]{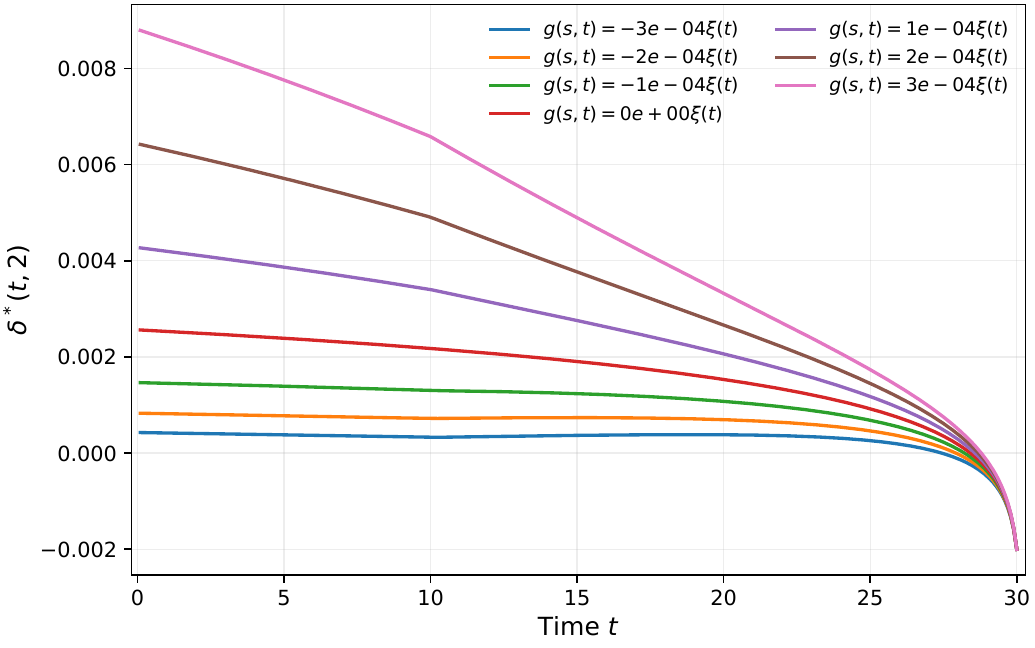}
        \caption{$q=2$.}
        \label{fig:caseI-q2-alpha001-delayed-decay}
    \end{subfigure}

    \caption{Case I: optimal quotes for different signal levels under the time-dependent signal $g(s,t)=g(s)\xi(t)$, with $\alpha=0.001$, where $\xi(t)=\exp(-0.01|t-10|)$. The panels compare two initial inventory levels.}
    \label{fig:caseI-delayed-decay-alpha001-inventory}
\end{figure}

\begin{figure}[t]
    \centering
    \begin{subfigure}[t]{0.49\textwidth}
        \centering
        \includegraphics[width=\textwidth]{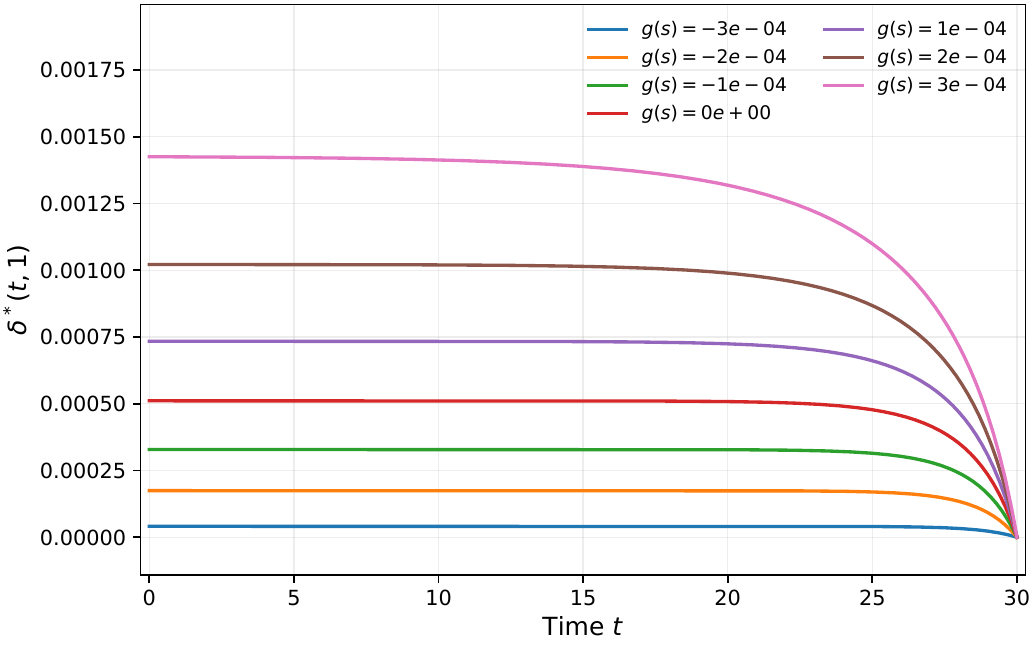}
        \caption{$\beta=0.0005$.}
        \label{fig:caseII-q1-beta0005}
    \end{subfigure}
    \hfill
    \begin{subfigure}[t]{0.49\textwidth}
        \centering
        \includegraphics[width=\textwidth]{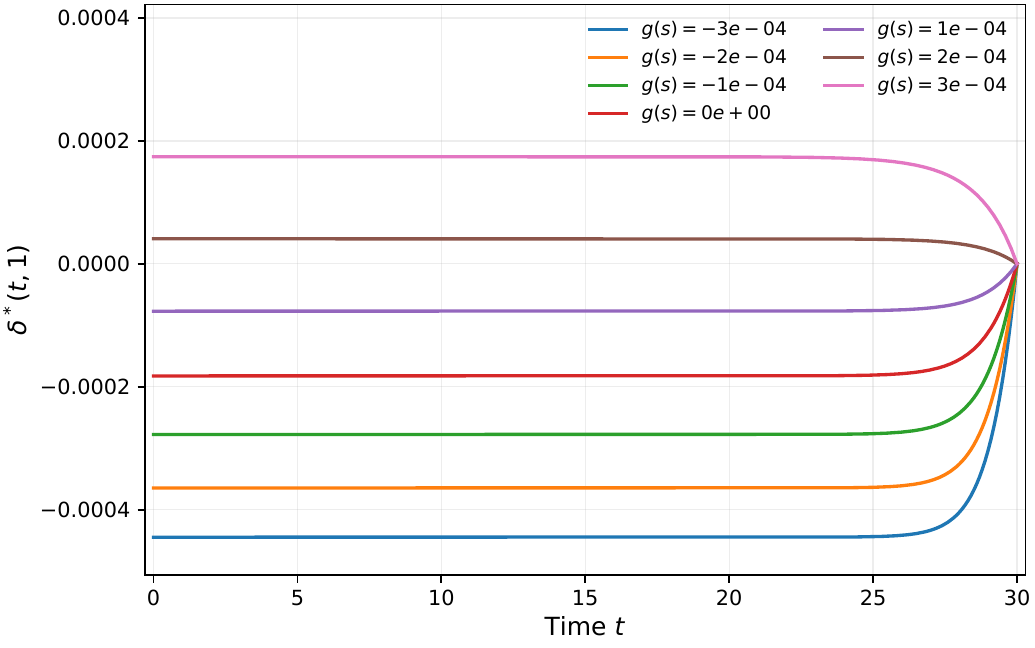}
        \caption{$\beta=0.001$.}
        \label{fig:caseII-q1-beta001}
    \end{subfigure}

    \caption{Case II: optimal quotes $\delta^\star(t,1)$ for different signal levels. The panels compare two values of the running inventory penalty parameter $\beta$ in $J(q)=\beta q^2$, with $\alpha=0.001$.}
    \label{fig:caseII-q1-running-penalty}
\end{figure}

\begin{figure}[t]
    \centering
    \begin{subfigure}[t]{0.49\textwidth}
        \centering
        \includegraphics[width=\textwidth]{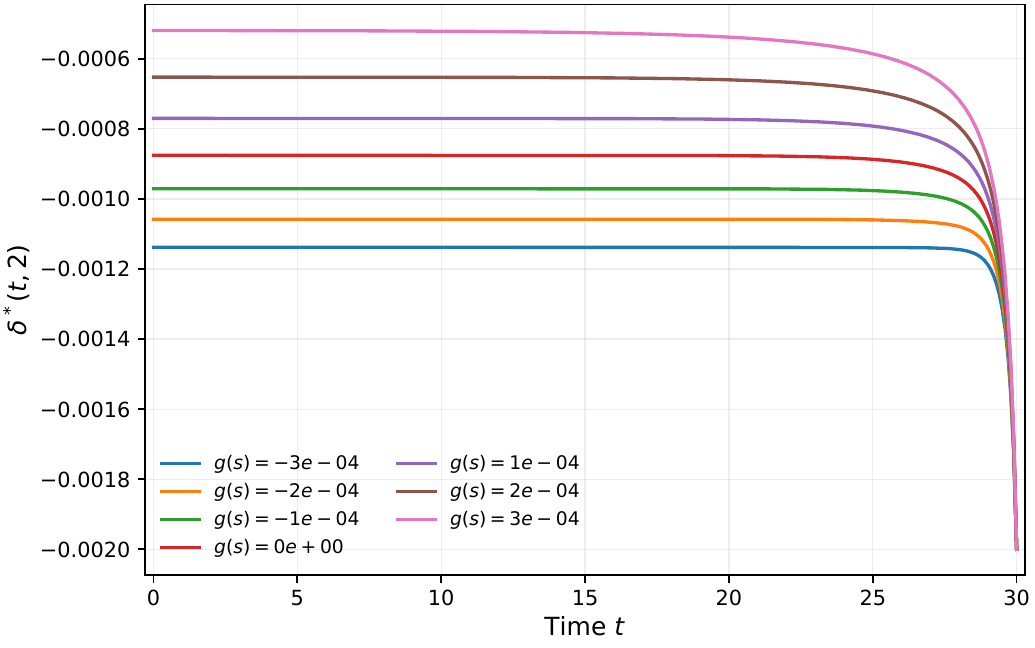}
        \caption{$\beta=0.0005$.}
        \label{fig:caseII-q2-beta0005}
    \end{subfigure}
    \hfill
    \begin{subfigure}[t]{0.49\textwidth}
        \centering
        \includegraphics[width=\textwidth]{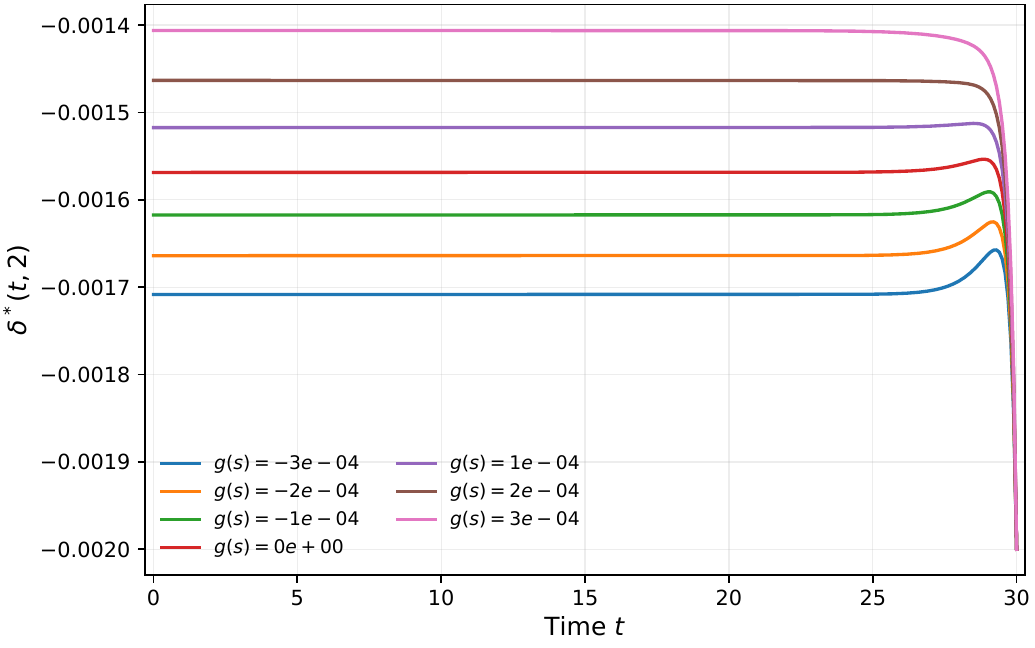}
        \caption{$\beta=0.001$.}
        \label{fig:caseII-q2-beta001}
    \end{subfigure}

    \caption{Case II: optimal quotes $\delta^\star(t,2)$ for different signal levels. The panels compare two values of the running inventory penalty parameter $\beta$ in $J(q)=\beta q^2$, with $\alpha=0.001$.}
    \label{fig:caseII-q2-running-penalty}
\end{figure}

\begin{figure}[t]
    \centering
    \begin{subfigure}[t]{0.49\textwidth}
        \centering
        \includegraphics[width=\textwidth]{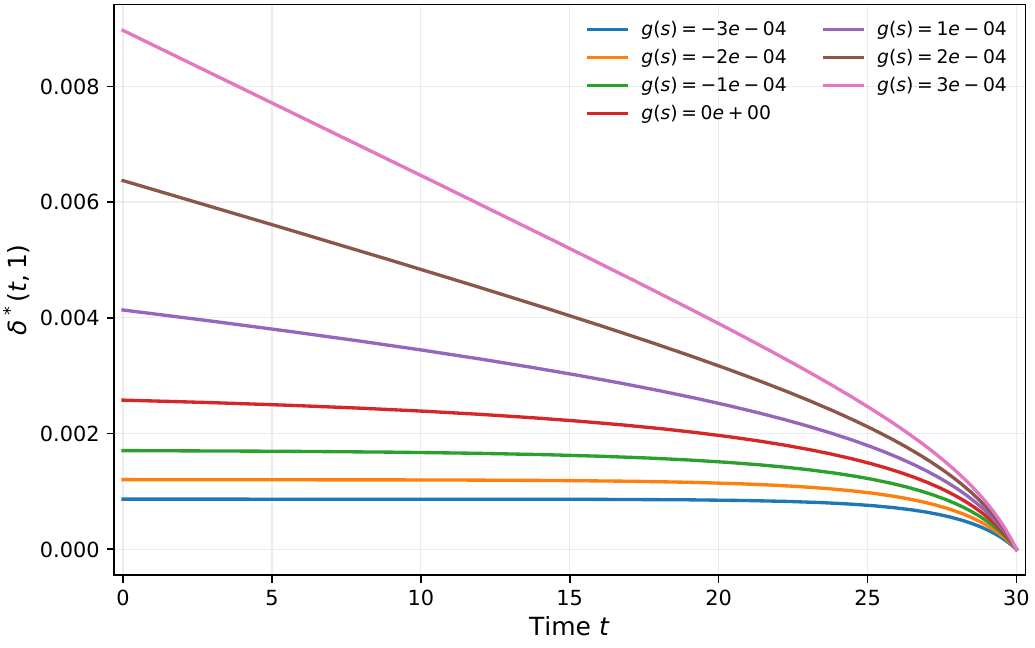}
        \caption{$\gamma=0.01$.}
        \label{fig:caseIII-q1-gamma001}
    \end{subfigure}
    \hfill
    \begin{subfigure}[t]{0.49\textwidth}
        \centering
        \includegraphics[width=\textwidth]{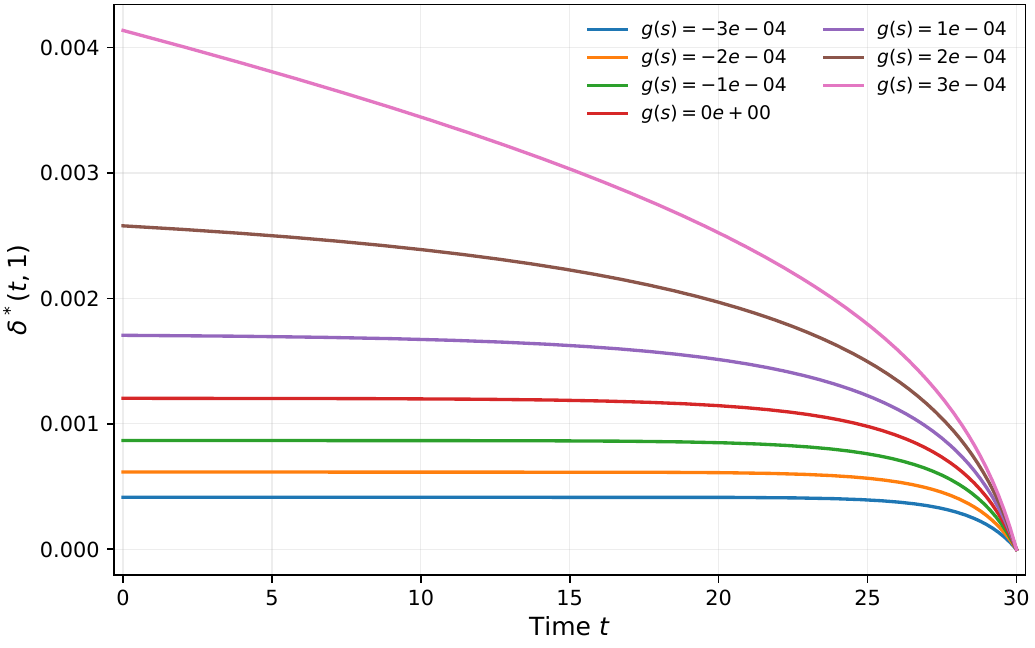}
        \caption{$\gamma=0.05$.}
        \label{fig:caseIII-q1-gamma005}
    \end{subfigure}

    \caption{Case III: optimal quotes $\delta^\star(t,1)$ for different signal levels. The panels compare two values of the risk-aversion parameter $\gamma$, with $\alpha=0.001$ and $\sigma=0.1$.}
    \label{fig:caseIII-q1-risk-aversion}
\end{figure}

\begin{figure}[t]
    \centering
    \begin{subfigure}[t]{0.49\textwidth}
        \centering
        \includegraphics[width=\textwidth]{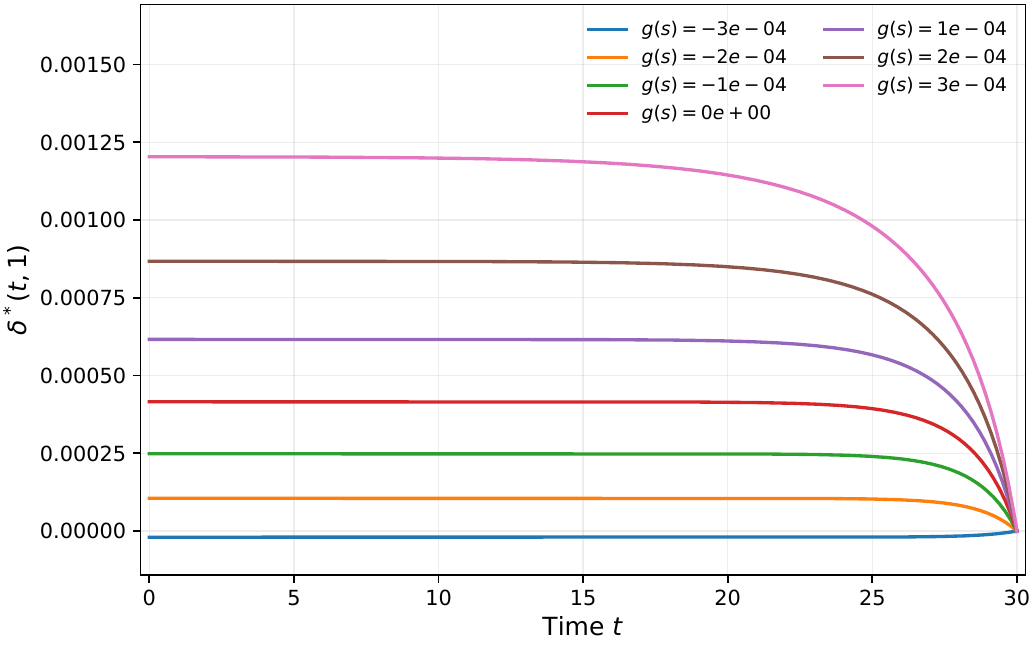}
        \caption{$\beta=0.0005$.}
        \label{fig:caseIV-q1-beta0005}
    \end{subfigure}
    \hfill
    \begin{subfigure}[t]{0.49\textwidth}
        \centering
        \includegraphics[width=\textwidth]{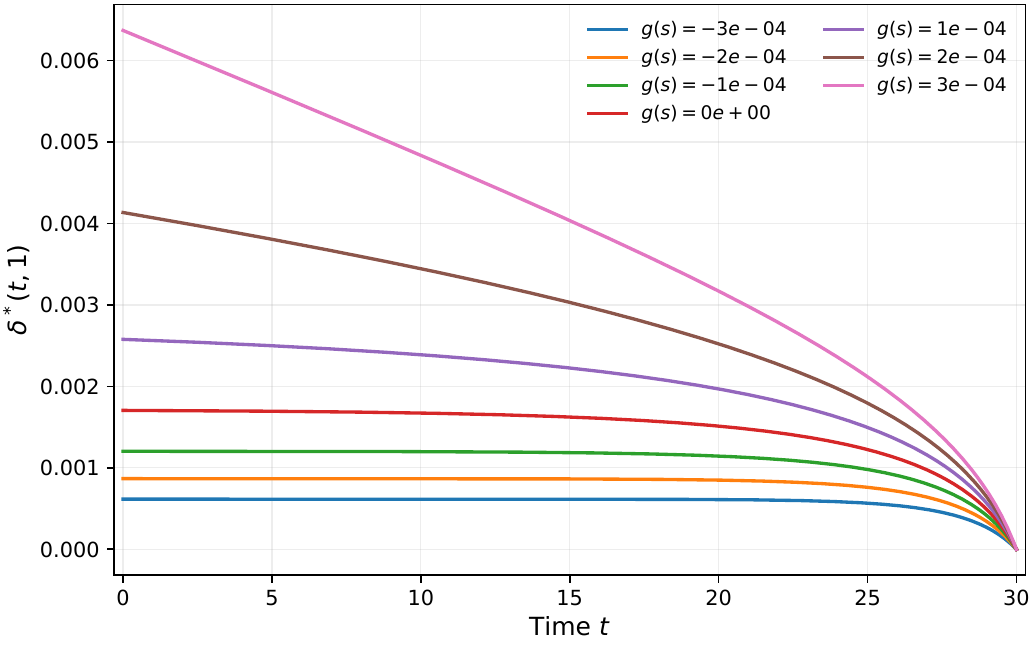}
        \caption{$\beta=0.001$.}
        \label{fig:caseIV-q1-beta0001}
    \end{subfigure}

    \caption{Case IV: optimal quotes $\delta^\star(t,1)$ for different signal levels. The panels compare two values of the running inventory penalty parameter $\beta$ in $J(q)=\beta q^2$, with $\alpha=0.001$, $\sigma=0.1$, and $\gamma=0.01$.}
    \label{fig:caseIV-q1-running-penalty}
\end{figure}

\begin{figure}[t]
    \centering
    \begin{subfigure}[t]{0.49\textwidth}
        \centering
        \includegraphics[width=\textwidth]{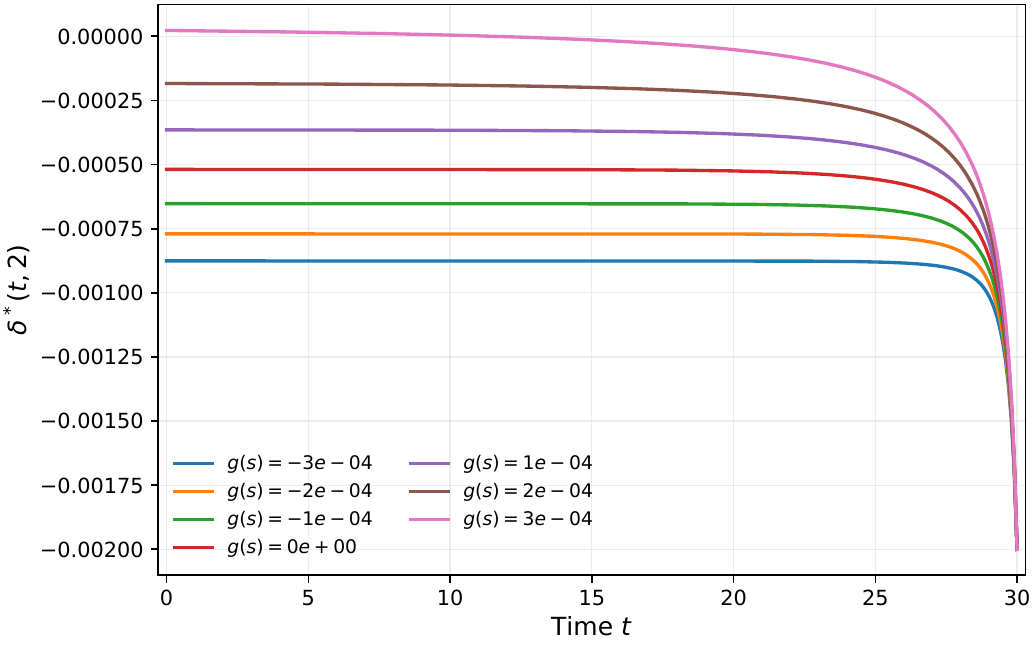}
        \caption{$\sigma=0.1$.}
        \label{fig:caseIV-q2-sigma01}
    \end{subfigure}
    \hfill
    \begin{subfigure}[t]{0.49\textwidth}
        \centering
        \includegraphics[width=\textwidth]{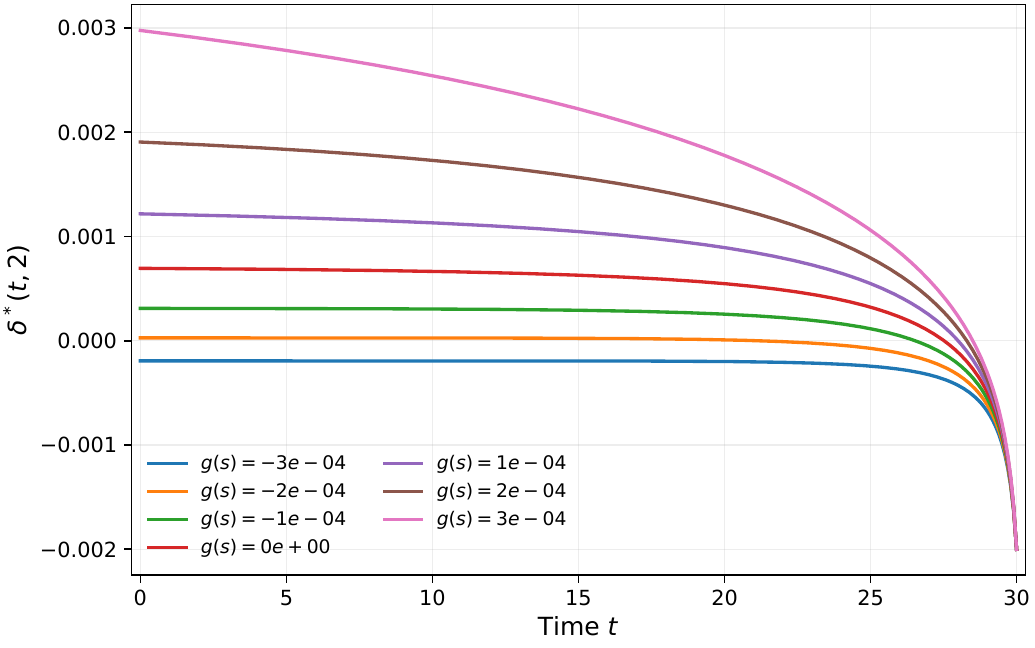}
        \caption{$\sigma=0.01$.}
        \label{fig:caseIV-q2-sigma001}
    \end{subfigure}

    \caption{Case IV: optimal quotes $\delta^\star(t,2)$ for different signal levels. The panels compare two values of the volatility parameter $\sigma$, with $\alpha=0.001$, $\beta=0.0001$, and $\gamma=0.05$.}
    \label{fig:caseIV-q2-volatility}
\end{figure}

\begin{figure}[t]
    \centering
    \begin{subfigure}[t]{0.49\textwidth}
        \centering
        \includegraphics[width=\textwidth]{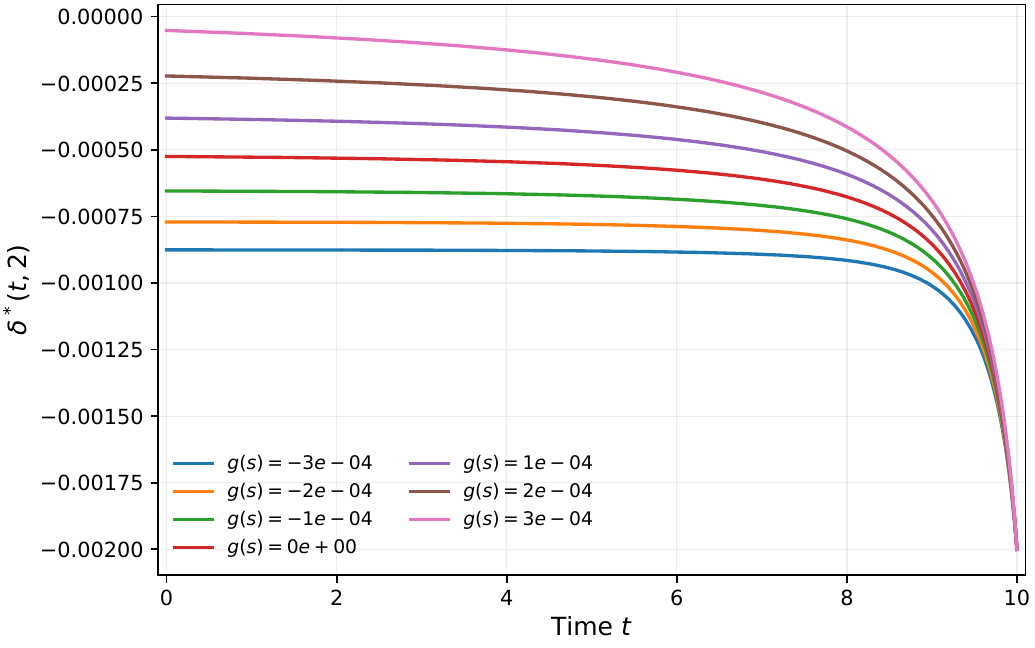}
        \caption{$T=10$.}
        \label{fig:caseIV-q2-T10}
    \end{subfigure}
    \hfill
    \begin{subfigure}[t]{0.49\textwidth}
        \centering
        \includegraphics[width=\textwidth]{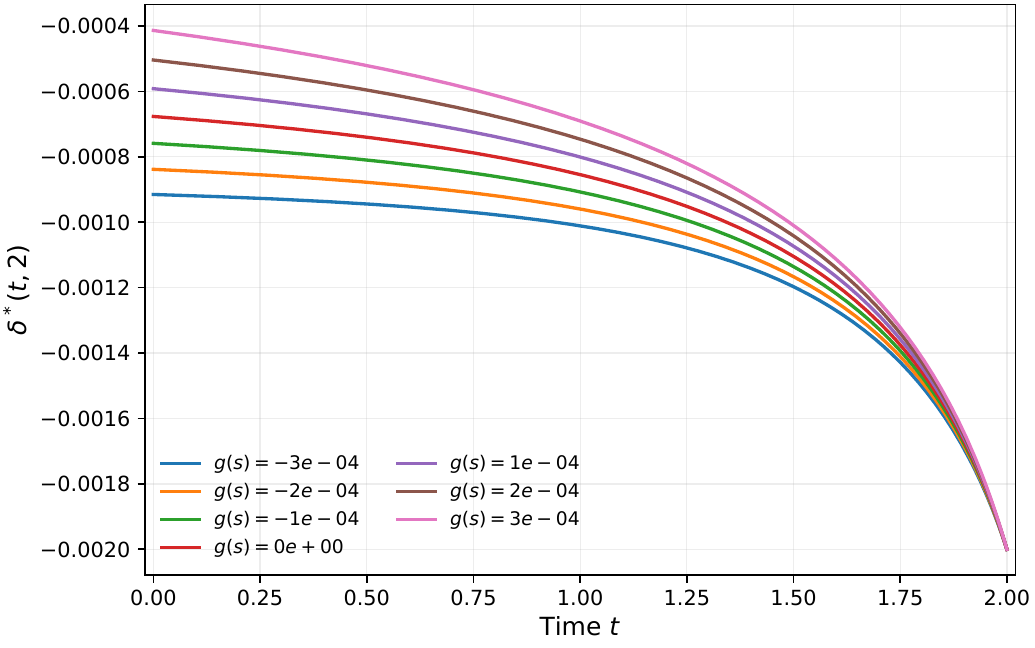}
        \caption{$T=2$.}
        \label{fig:caseIV-q2-T02}
    \end{subfigure}

    \caption{Case IV: optimal quotes $\delta^\star(t,2)$ for different signal levels. The panels compare two values of the terminal time $T$, with $\alpha=0.001$, $\beta=0.0001$, $\sigma=0.1$, and $\gamma=0.05$.}
    \label{fig:caseIV-q2-terminal-time}
\end{figure}

\begin{figure}[t]
    \centering

    \begin{subfigure}[t]{0.49\textwidth}
        \centering
        \includegraphics[width=\textwidth]{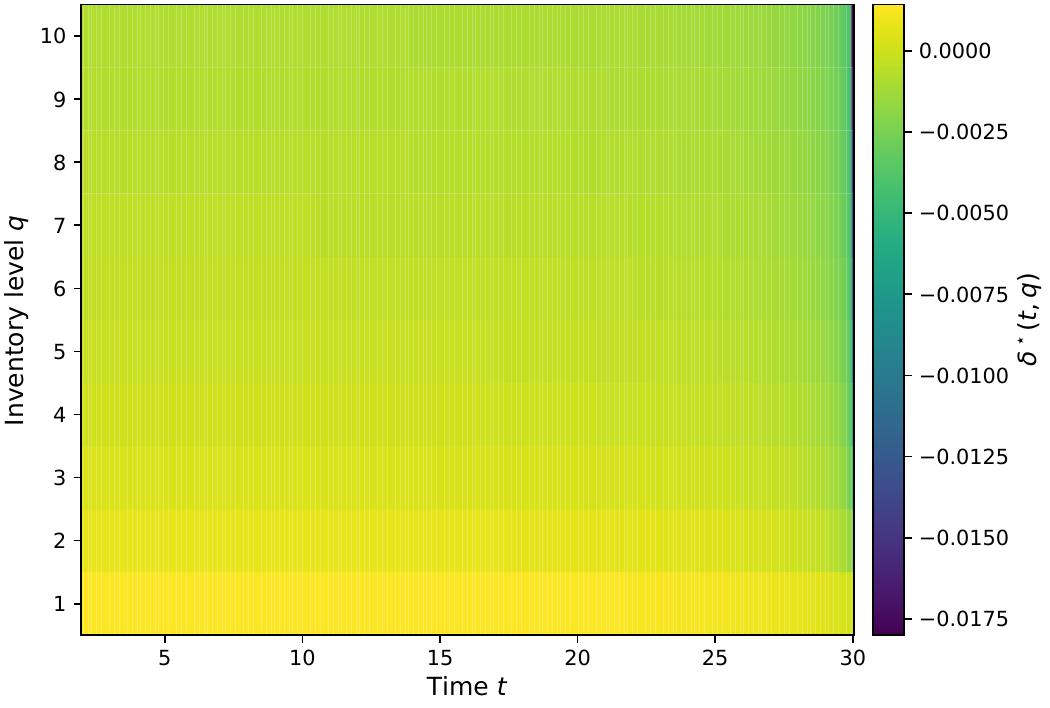}
        \caption{$g(s)=-2\times 10^{-4}$.}
        \label{fig:caseIII-heatmap-gm2e-4}
    \end{subfigure}
    \hfill
    \begin{subfigure}[t]{0.49\textwidth}
        \centering
        \includegraphics[width=\textwidth]{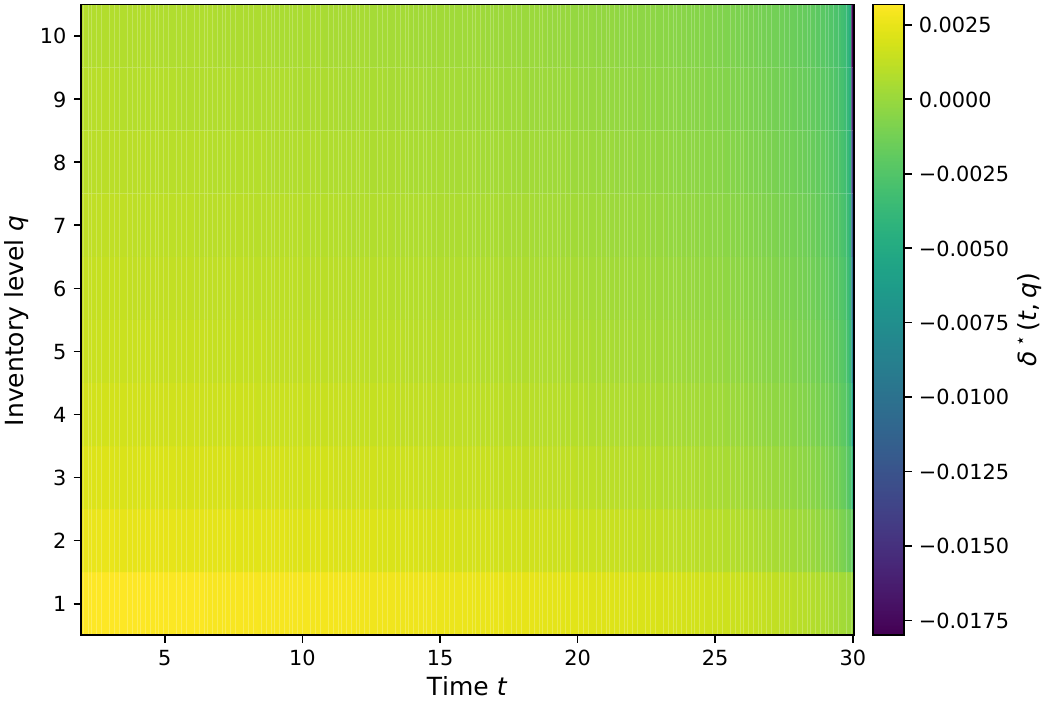}
        \caption{$g(s)=0$.}
        \label{fig:caseIII-heatmap-g0}
    \end{subfigure}

    \vspace{0.8em}

    \begin{subfigure}[t]{0.49\textwidth}
        \centering
        \includegraphics[width=\textwidth]{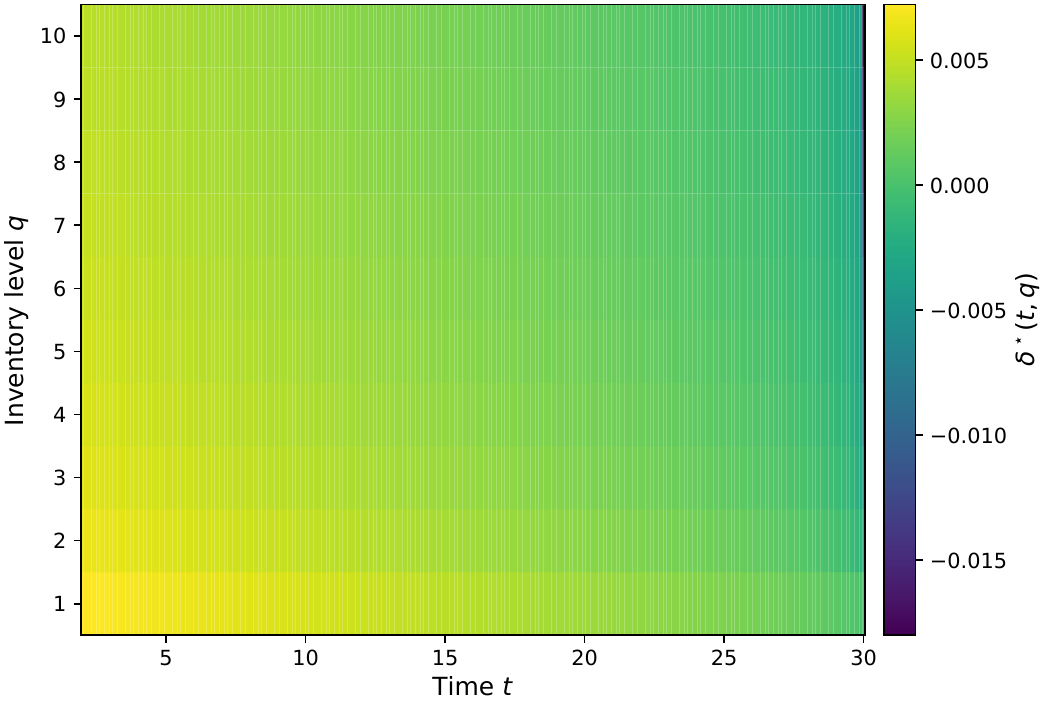}
        \caption{$g(s)=2\times 10^{-4}$.}
        \label{fig:caseIII-heatmap-g2e-4}
    \end{subfigure}
    \hfill
    \begin{subfigure}[t]{0.49\textwidth}
        \centering
        \includegraphics[width=\textwidth]{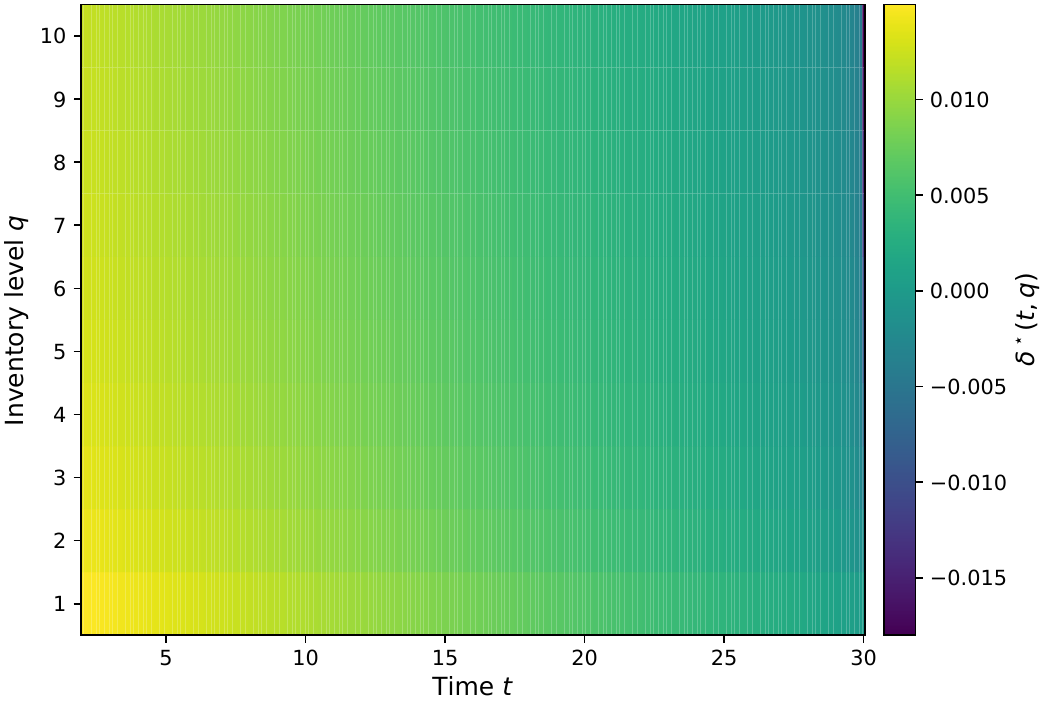}
        \caption{$g(s)=5\times 10^{-4}$.}
        \label{fig:caseIII-heatmap-g5e-4}
    \end{subfigure}

    \caption{Case III: heatmaps of the optimal quote $\delta^\star(t,q)$ over time and inventory levels for different constant signal levels. The parameters are $\alpha=0.001$, $\sigma=0.01$, and $\gamma=0.01$.}
    \label{fig:caseIII-heatmaps-signal-levels}
\end{figure}

\section{Conclusion}
\label{sec:conclusion}

This paper develops a unified explicit solution theory for optimal execution
through sequential limit-order placement. The key result is that a broad class
of controlled-jump execution problems, including quote-dependent fill
intensities, price impact, inventory risk, and signal-dependent price dynamics,
can be reduced to finite-dimensional triangular systems. This yields explicit
value functions and optimal quotes for four execution criteria based on
terminal liquidation wealth and CARA utility, with and without running inventory
penalties.

The paper also provides a rigorous foundation for these explicit strategies by
establishing well-posedness, admissibility, and verification results. The
explicit formulas reveal structural connections between the optimal quotes under
different criteria, in particular between inventory-penalized objectives and
utility-based risk aversion, and make long-horizon asymptotic analysis
tractable.

The numerical results highlight the practical relevance of the framework:
signal-dependent drift can substantially change the optimal quoting strategy.
Thus, predictive information from limit-order-book data can play an essential
role in execution decisions. Overall, the paper provides a tractable and rigorous framework that combines
explicit solvability, verification, asymptotic analysis, and financial
interpretation for signal-adaptive optimal execution quoting.

\bibliographystyle{plain} 
\bibliography{refs} 

@article{cartea2018enhancing,
  title={Enhancing trading strategies with order book signals},
  author={Cartea, \'Alvaro and Donnelly, Ryan and Jaimungal, Sebastian},
  journal={Applied Mathematical Finance},
  volume={25},
  number={1},
  pages={1--35},
  year={2018},
  publisher={Taylor \& Francis}
}

@article{capponi2020multi,
  title={Multi-asset market impact and order flow commonality},
  author={Capponi, Francesco and Cont, Rama},
  journal={Available at SSRN 3706390},
  year={2020}
}

@article{bechler2015optimal,
  title={Optimal execution with dynamic order flow imbalance},
  author={Bechler, Kyle and Ludkovski, Michael},
  journal={SIAM Journal on Financial Mathematics},
  volume={6},
  number={1},
  pages={1123--1151},
  year={2015},
  publisher={SIAM}
}

@article{almgren2001optimal,
  title={Optimal execution of portfolio transactions},
  author={Almgren, Robert and Chriss, Neil},
  journal={Journal of Risk},
  volume={3},
  pages={5--40},
  year={2001}
}

@article{bertsimas1998optimal,
  title={Optimal control of execution costs},
  author={Bertsimas, Dimitris and Lo, Andrew W},
  journal={Journal of financial markets},
  volume={1},
  number={1},
  pages={1--50},
  year={1998},
  publisher={Elsevier}
}

@article{kolm2023deep,
  title={Deep order flow imbalance: Extracting alpha at multiple horizons from the limit order book},
  author={Kolm, Petter N and Turiel, Jeremy and Westray, Nicholas},
  journal={Mathematical Finance},
  volume={33},
  number={4},
  pages={1044--1081},
  year={2023},
  publisher={Wiley Online Library}
}

@article{cont2014price,
  title={The price impact of order book events},
  author={Cont, Rama and Kukanov, Arseniy and Stoikov, Sasha},
  journal={Journal of Financial Econometrics},
  volume={12},
  number={1},
  pages={47--88},
  year={2014},
  publisher={Oxford University Press}
}

@book{cartea2015algorithmic,
  title={Algorithmic and high-frequency trading},
  author={Cartea, {\'A}lvaro and Jaimungal, Sebastian and Penalva, Jos{\'e}},
  year={2015},
  publisher={Cambridge University Press}
}

@article{lokin2026probabilitieslimitorderbook,
      title={Fill Probabilities in a Limit Order Book with State-Dependent Stochastic Order Flows}, 
      author={Felix Lokin and Fenghui Yu},
      year={2026},
      eprint={2403.02572},
      archivePrefix={arXiv},
      primaryClass={q-fin.TR},
      url={https://arxiv.org/abs/2403.02572}, 
      journal={Available at arXiv 2403.02572}
}

@article{cartea2015optimal,
  title={Optimal execution with limit and market orders},
  author={Cartea, {\'A}lvaro and Jaimungal, Sebastian},
  journal={Quantitative Finance},
  volume={15},
  number={8},
  pages={1279--1291},
  year={2015},
  publisher={Taylor \& Francis}
}

@article{hawkes2018hawkes,
  title={Hawkes processes and their applications to finance: a review},
  author={Hawkes, Alan G},
  journal={Quantitative Finance},
  volume={18},
  number={2},
  pages={193--198},
  year={2018},
  publisher={Taylor \& Francis}
}

@article{lehalle2021optimal,
  title={Optimal liquidity-based trading tactics},
  author={Lehalle, Charles-Albert and Mounjid, Othmane and Rosenbaum, Mathieu},
  journal={Stochastic Systems},
  volume={11},
  number={4},
  pages={368--390},
  year={2021},
  publisher={INFORMS}
}

@article{cartea2016incorporating,
  title={Incorporating order-flow into optimal execution},
  author={Cartea, {\'A}lvaro and Jaimungal, Sebastian},
  journal={Mathematics and Financial Economics},
  volume={10},
  pages={339--364},
  year={2016},
  publisher={Springer}
}

@article{guilbaud2013optimal,
  title={Optimal high-frequency trading with limit and market orders},
  author={Guilbaud, Fabien and Pham, Huyen},
  journal={Quantitative Finance},
  volume={13},
  number={1},
  pages={79--94},
  year={2013},
  publisher={Taylor \& Francis}
}

@article{cont2017optimal,
  title={Optimal order placement in limit order markets},
  author={Cont, Rama and Kukanov, Arseniy},
  journal={Quantitative Finance},
  volume={17},
  number={1},
  pages={21--39},
  year={2017},
  publisher={Taylor \& Francis}
}

@article{Guant2012,
   author = {Olivier Guéant and Charles Albert Lehalle and Joaquin Fernandez-Tapia},
   doi = {10.1137/110850475},
   issn = {1945497X},
   issue = {1},
   journal = {SIAM Journal on Financial Mathematics},
   month = {11},
   pages = {740-764},
   publisher = {Society for Industrial and Applied Mathematics},
   title = {Optimal Portfolio Liquidation with Limit Orders},
   volume = {3},
   url = {https://epubs.siam.org/doi/10.1137/110850475},
   year = {2012}
}

@article{Bayraktar2014,
   author = {Erhan Bayraktar and Michael Ludkovski},
   doi = {10.1111/J.1467-9965.2012.00529.X},
   issn = {1467-9965},
   issue = {4},
   journal = {Mathematical Finance},
   month = {10},
   pages = {627-650},
   publisher = {John Wiley \& Sons, Ltd},
   title = {LIQUIDATION IN LIMIT ORDER BOOKS WITH CONTROLLED INTENSITY},
   volume = {24},
   url = {https://onlinelibrary.wiley.com/doi/full/10.1111/j.1467-9965.2012.00529.x https://onlinelibrary.wiley.com/doi/abs/10.1111/j.1467-9965.2012.00529.x https://onlinelibrary.wiley.com/doi/10.1111/j.1467-9965.2012.00529.x},
   year = {2014}
}

@article{Guant2015,
   author = {Olivier Guéant and Charles Albert Lehalle},
   doi = {10.1111/MAFI.12052},
   issn = {1467-9965},
   issue = {3},
   journal = {Mathematical Finance},
   month = {7},
   pages = {457-495},
   publisher = {John Wiley \& Sons, Ltd},
   title = {GENERAL INTENSITY SHAPES IN OPTIMAL LIQUIDATION},
   volume = {25},
   url = {https://onlinelibrary.wiley.com/doi/full/10.1111/mafi.12052 https://onlinelibrary.wiley.com/doi/abs/10.1111/mafi.12052 https://onlinelibrary.wiley.com/doi/10.1111/mafi.12052},
   year = {2015}
}

@article{Guant2013,
   author = {Olivier Guéant and Charles Albert Lehalle and Joaquin Fernandez-Tapia},
   doi = {10.1007/S11579-012-0087-0/METRICS},
   issn = {18629679},
   issue = {4},
   journal = {Mathematics and Financial Economics},
   month = {9},
   pages = {477-507},
   publisher = {Springer},
   title = {Dealing with the inventory risk: A solution to the market making problem},
   volume = {7},
   url = {https://link-springer-com.tudelft.idm.oclc.org/article/10.1007/s11579-012-0087-0},
   year = {2013}
}

@article{Lehalle2017,
   author = {Lehalle, Charles-Albert and Mounjid, Othmane},
title = {Limit Order Strategic Placement with Adverse Selection Risk and the Role of Latency},
journal = {Market Microstructure and Liquidity},
volume = {03},
number = {01},
pages = {1750009},
year = {2017},
doi = {10.1142/S2382626617500095},

URL = {https://doi.org/10.1142/S2382626617500095},
eprint = {https://doi.org/10.1142/S2382626617500095}
}

@article{barzykin2025optimalquotingadverseselection,
      title={Optimal Quoting under Adverse Selection and Price Reading}, 
      author={Alexander Barzykin and Philippe Bergault and Olivier Guéant and Malo Lemmel},
      year={2025},
      eprint={2508.20225},
      archivePrefix={arXiv},
      primaryClass={q-fin.TR},
      url={https://arxiv.org/abs/2508.20225}, 
      journal={Available at arXiv 2508.20225}
}

@article{Barzykin2025,
   author = {Alexander Barzykin and Robert Boyce and Eyal Neuman},
   doi = {10.2139/SSRN.5859484},
   title = {FX Market Making with Internal Liquidity},
   url = {https://papers.ssrn.com/abstract=5859484},
   journal = {Available at SSRN 5859484},
   year = {2025}
}

@article{Bergault2021,
   author = {Philippe Bergault and David Evangelista and Olivier Guéant and Douglas Vieira},
   doi = {10.1080/1350486X.2021.1949359},
   issn = {14664313},
   issue = {2},
   journal = {Applied Mathematical Finance},
   month = {3},
   pages = {101-142},
   publisher = {Routledge},
   title = {Closed-form Approximations in Multi-asset Market Making},
   volume = {28},
   url = {https://scholar.google.com/scholar_url?url=https://www.tandfonline.com/doi/pdf/10.1080/1350486X.2021.1949359%3Fcasa_token%3DDCPp1JwW6fEAAAAA:NQp_ghvEYwLixcwwYVz-MT9F-ldvUuUyy-SNuDizkHTS8jqTYaXxOWBlz3uiNrFkxEY1vUy_gNTSBA&hl=nl&sa=T&oi=ucasa&ct=ucasa&ei=1LEIaqP5F_i9ieoP3-TG8As&scisig=AFyMTJXwTN8DGyFCeOWV93koQxE7},
   year = {2021}
}

@article{Cartea2014,
author = {Cartea, \'{A}lvaro and Jaimungal, Sebastian and Ricci, Jason},
title = {Buy Low, Sell High: A High Frequency Trading Perspective},
journal = {SIAM Journal on Financial Mathematics},
volume = {5},
number = {1},
pages = {415-444},
year = {2014},
doi = {10.1137/130911196},
URL = {https://doi.org/10.1137/130911196
},
eprint = { https://doi.org/10.1137/130911196}
}

@article{HO198147,
title = {Optimal dealer pricing under transactions and return uncertainty},
journal = {Journal of Financial Economics},
volume = {9},
number = {1},
pages = {47-73},
year = {1981},
issn = {0304-405X},
doi = {https://doi.org/10.1016/0304-405X(81)90020-9},
url = {https://www.sciencedirect.com/science/article/pii/0304405X81900209},
author = {Thomas Ho and Hans R. Stoll}
}

@article{Avellaneda01042008,
author = {Marco Avellaneda and Sasha Stoikov},
title = {High-frequency trading in a limit order book},
journal = {Quantitative Finance},
volume = {8},
number = {3},
pages = {217--224},
year = {2008},
publisher = {Routledge},
doi = {10.1080/14697680701381228},
URL = {https://doi.org/10.1080/14697680701381228},
eprint = {https://doi.org/10.1080/14697680701381228   }
}

@article{Boyce2025,
author = {Boyce, Robert and Herdegen, Martin and S\'{a}nchez-Betancourt, Leandro},
title = {Market Making with Exogenous Competition},
journal = {SIAM Journal on Financial Mathematics},
volume = {16},
number = {2},
pages = {692-706},
year = {2025},
doi = {10.1137/24M1679811},

URL = {https://doi.org/10.1137/24M1679811},
eprint = {https://doi.org/10.1137/24M1679811}
}
\nocite{*}

\appendix

\section{Proofs}

\subsection{Proof of Lemma \ref{lem:optimal_quote_case_II}}
\label{app:optimal-quote-caseII-proof}
The terminal and boundary conditions in \eqref{eq:HJB_equations_inventory_risk}
suggest the affine ansatz
\[
    R(t,x,M,q)=x+qM+r(t,q).
\]
The term \(x+qM\) is the mark-to-market wealth, while \(r(t,q)\) captures the
additional value of optimal liquidation net of the running inventory cost.

Using
\[
    \partial_t R=\partial_t r,
    \qquad
    \partial_M R=q,
    \qquad
    \partial_{MM}R=0,
\]
and substituting the ansatz into \eqref{eq:HJB_equations_inventory_risk}, we
obtain, for \(q\geq1\),
\begin{align}
\label{eq:HJB_equations_runningrisk_ansatz}
\left\{
\begin{array}{ll}
\displaystyle
\partial_t r(t,q)
+
g(s)q
-
J(q)
+
\sup_{\delta\in\mathcal A}
\left\{
    \lambda e^{-\kappa\delta}
    \left[
        -a+b\delta+r(t,q-1)-r(t,q)
    \right]
\right\}
=0,
& t<T,
\\[1.1em]
\displaystyle
r(T,q)=-qI(q),
& q\geq 1,
\\[0.6em]
\displaystyle
r(t,0)=0.
&
\end{array}
\right.
\end{align}
The terminal condition follows by comparing
\[
    R(T,x,M,q)=x+q(M-I(q))
\]
with
\[
    R(T,x,M,q)=x+qM+r(T,q).
\]

For fixed \((t,q)\), define
\[
    \Delta r(t,q):=r(t,q)-r(t,q-1).
\]
The term to be maximized is
\[
    \lambda e^{-\kappa\delta}
    \left[
        -a+b\delta-\Delta r(t,q)
    \right].
\]
The first-order condition for an interior maximizer is the same as in Case I
with \(h\) replaced by \(r\), and gives
\begin{equation}
\label{eq:optimal_delta_feedback_running_risk_unconstrained}
    \delta^{\mathrm{unc}}(t,q)
    =
    \frac{1}{\kappa}
    +
    \frac{a}{b}
    +
    \frac{r(t,q)-r(t,q-1)}{b}.
\end{equation}

Substituting \eqref{eq:optimal_delta_feedback_running_risk_unconstrained} into
\eqref{eq:HJB_equations_runningrisk_ansatz} yields
\begin{align}
\label{eq:HJB_equations_runningrisk_feedback}
\left\{
\begin{array}{ll}
\displaystyle
\partial_t r(t,q)
+
g(s)q
-
J(q)
+
\frac{\lambda b}{e\kappa}
\exp\left\{
    -\frac{\kappa a}{b}
    -
    \frac{\kappa}{b}
    \bigl(r(t,q)-r(t,q-1)\bigr)
\right\}
=0,
& q\geq 1,\; t<T,
\\[1.1em]
\displaystyle
r(T,q)=-qI(q),
& q\geq 1,
\\[0.6em]
\displaystyle
r(t,0)=0.
&
\end{array}
\right.
\end{align}

Now set
\[
    r(t,q)=\frac{b}{\kappa}\log w(t,q),
    \qquad w(t,q)>0.
\]
Then
\[
    r(t,q)-r(t,q-1)
    =
    \frac{b}{\kappa}
    \log\frac{w(t,q)}{w(t,q-1)}.
\]
Therefore the feedback quote becomes
\[
    \delta^{\mathrm{unc}}(t,q)
    =
    \frac{1}{\kappa}
    \left(
        1+\log\frac{w(t,q)}{w(t,q-1)}
    \right)
    +
    \frac{a}{b}.
\]
Substituting the logarithmic transformation into
\eqref{eq:HJB_equations_runningrisk_feedback} gives
\[
\begin{aligned}
0
&=
\frac{b}{\kappa}
\frac{\partial_t w(t,q)}{w(t,q)}
+
g(s)q
-
J(q)
+
\frac{\lambda b}{e\kappa}
\exp\left\{-\frac{\kappa a}{b}\right\}
\frac{w(t,q-1)}{w(t,q)}.
\end{aligned}
\]
Multiplying by \(\kappa w(t,q)/b\) gives
\eqref{eq:PDE_running_risk}. The terminal and boundary conditions follow from
\[
    r(T,q)=-qI(q),
    \qquad
    r(t,0)=0,
\]
which imply
\[
    w(T,q)
    =
    \exp\left\{
        -\frac{\kappa}{b}qI(q)
    \right\},
    \qquad
    w(t,0)=1.
\]
If the admissible set is bounded, the admissible maximizer is the projection of
the unconstrained maximizer onto \([\delta_{\min},\delta_{\max}]\).

\subsection{Proof of Lemma \ref{lem:optimal_quote_case_III}}
\label{app:optimal-quote-caseIII-proof}
The terminal and boundary conditions in \eqref{eq:HJB_equations_utility}
suggest the CARA ansatz
\[
    U(t,x,M,q)
    =
    -\exp\left\{
        -\gamma\bigl(x+qM+u(t,q)\bigr)
    \right\}.
\]
The quantity \(x+qM\) is the mark-to-market wealth, while \(u(t,q)\) is the
certainty-equivalent adjustment associated with the remaining liquidation
problem.

For notational simplicity, write \(U=U(t,x,M,q)\). Then
\[
    \partial_t U=-\gamma U\,\partial_t u,
    \qquad
    \partial_M U=-\gamma qU,
    \qquad
    \partial_{MM}U=\gamma^2q^2U.
\]
Moreover, after an execution at quote depth \(\delta\),
\[
\begin{aligned}
&U\bigl(t,x+M-a+b\delta,M,q-1\bigr)  \\
&\qquad
=
U(t,x,M,q)
\exp\left\{
    -\gamma
    \bigl(
        -a+b\delta+u(t,q-1)-u(t,q)
    \bigr)
\right\}.
\end{aligned}
\]
Substituting these expressions into \eqref{eq:HJB_equations_utility} and
dividing by \(-\gamma U>0\), we obtain, for \(q\geq1\),
\begin{align}
\label{eq:HJB_equations_utility_ansatz}
\left\{
\begin{array}{ll}
\displaystyle
\partial_t u(t,q)
+
g(s)q
-
\frac{1}{2}\sigma^2\gamma q^2
& \\[0.5em]
\displaystyle\qquad
+
\sup_{\delta\in\mathcal A}
\left\{
    \frac{\lambda}{\gamma}e^{-\kappa\delta}
    \left[
        1
        -
        \exp\left\{
            -\gamma
            \bigl(
                -a+b\delta+u(t,q-1)-u(t,q)
            \bigr)
        \right\}
    \right]
\right\}
=0,
& t<T,
\\[1.2em]
\displaystyle
u(T,q)=-qI(q),
& q\geq 1,
\\[0.6em]
\displaystyle
u(t,0)=0.
&
\end{array}
\right.
\end{align}
The term
\(
    -\frac{1}{2}\sigma^2\gamma q^2
\)
is the certainty-equivalent penalty for mark-to-market inventory risk generated
by the volatility of \(qM_t\).

For fixed \((t,q)\), define
\[
    \Delta u(t,q):=u(t,q)-u(t,q-1).
\]
The expression to be maximized is
\[
    \frac{\lambda}{\gamma}e^{-\kappa\delta}
    \left[
        1
        -
        \exp\left\{
            -\gamma
            \bigl(
                -a+b\delta-\Delta u(t,q)
            \bigr)
        \right\}
    \right].
\]
The first-order condition for an interior maximizer is
\[
\begin{aligned}
0
&=
\frac{\partial}{\partial\delta}
\left[
    e^{-\kappa\delta}
    \left(
        1
        -
        \exp\left\{
            -\gamma
            \bigl(
                -a+b\delta-\Delta u(t,q)
            \bigr)
        \right\}
    \right)
\right]
\\
&=
e^{-\kappa\delta}
\left[
    -\kappa
    +
    (\kappa+b\gamma)
    \exp\left\{
        -\gamma
        \bigl(
            -a+b\delta-\Delta u(t,q)
        \bigr)
    \right\}
\right].
\end{aligned}
\]
Thus the unconstrained maximizer is
\begin{equation}
\label{eq:optimal_delta_feedback_utility_unconstrained}
    \delta^{\mathrm{unc}}(t,q)
    =
    \frac{1}{b\gamma}
    \log\left(\frac{\kappa+b\gamma}{\kappa}\right)
    +
    \frac{a}{b}
    +
    \frac{u(t,q)-u(t,q-1)}{b}.
\end{equation}

Substituting \eqref{eq:optimal_delta_feedback_utility_unconstrained} into
\eqref{eq:HJB_equations_utility_ansatz} gives
\begin{align}
\label{eq:HJB_equations_utility_optimal}
\left\{
\begin{array}{ll}
\displaystyle
\partial_t u(t,q)
+
g(s)q
-
\frac{1}{2}\sigma^2\gamma q^2
& \\[0.4em]
\displaystyle\quad
+
\frac{b\lambda}{\kappa}
\left(
    \frac{\kappa}{\kappa+b\gamma}
\right)^{\frac{\kappa}{b\gamma}+1}
e^{-\kappa a/b}
\exp\left\{
    -\frac{\kappa}{b}
    \bigl(u(t,q)-u(t,q-1)\bigr)
\right\}
=0,
& q\geq 1,\; t<T,
\\[1.2em]
\displaystyle
u(T,q)=-qI(q),
& q\geq 1,
\\[0.6em]
\displaystyle
u(t,0)=0.
&
\end{array}
\right.
\end{align}

Finally, set
\[
    u(t,q)=\frac{b}{\kappa}\log w(t,q),
    \qquad w(t,q)>0.
\]
Then
\[
    u(t,q)-u(t,q-1)
    =
    \frac{b}{\kappa}
    \log\frac{w(t,q)}{w(t,q-1)}.
\]
Therefore the feedback quote becomes
\[
    \delta^{\mathrm{unc}}(t,q)
    =
    \frac{1}{b\gamma}
    \log\left(\frac{\kappa+b\gamma}{\kappa}\right)
    +
    \frac{a}{b}
    +
    \frac{1}{\kappa}
    \log\frac{w(t,q)}{w(t,q-1)}.
\]
Substituting the logarithmic transformation into
\eqref{eq:HJB_equations_utility_optimal}, and using the definition of
\(\widehat{\lambda}\), gives \eqref{eq:PDE_utility}. The terminal and boundary
conditions follow from
\[
    u(T,q)=-qI(q),
    \qquad
    u(t,0)=0,
\]
which imply
\[
    w(T,q)=
    \exp\left\{
        -\frac{\kappa}{b}qI(q)
    \right\},
    \qquad
    w(t,0)=1.
\]
If the admissible set is bounded, the admissible maximizer is the projection of
the unconstrained maximizer onto \([\delta_{\min},\delta_{\max}]\).

\subsection{Proof of Lemma \ref{lem:optimal_quote_case_IV}}
\label{app:optimal-quote-caseIV-proof}
The terminal and boundary conditions in
\eqref{eq:HJB_equations_utility_inventory} suggest the CARA ansatz
\[
    F(t,x,M,q)
    =
    -\exp\left\{
        -\gamma\bigl(x+qM+f(t,q)\bigr)
    \right\}.
\]
Here \(x+qM\) is the mark-to-market wealth, while \(f(t,q)\) is the
certainty-equivalent adjustment generated by optimal liquidation, price risk,
and the running inventory cost.

For notational simplicity, write \(F=F(t,x,M,q)\). Then
\[
    \partial_t F=-\gamma F\,\partial_t f,
    \qquad
    \partial_M F=-\gamma qF,
    \qquad
    \partial_{MM}F=\gamma^2q^2F.
\]
Moreover,
\[
\begin{aligned}
&F\bigl(t,x+M-a+b\delta,M,q-1\bigr)  \\
&\qquad
=
F(t,x,M,q)
\exp\left\{
    -\gamma
    \bigl(
        -a+b\delta+f(t,q-1)-f(t,q)
    \bigr)
\right\}.
\end{aligned}
\]
Substituting these expressions into
\eqref{eq:HJB_equations_utility_inventory} and dividing by
\(-\gamma F>0\), we obtain
\begin{align}
\label{eq:HJB_equations_utility_inventory_ansatz}
\left\{
\begin{array}{ll}
\displaystyle
\partial_t f(t,q)
+
g(s)q
-
\frac{1}{2}\sigma^2\gamma q^2
-
J(q)
& \\[0.5em]
\displaystyle\qquad
+
\sup_{\delta\in\mathcal A}
\left\{
    \frac{\lambda}{\gamma}e^{-\kappa\delta}
    \left[
        1
        -
        \exp\left\{
            -\gamma
            \bigl(
                -a+b\delta+f(t,q-1)-f(t,q)
            \bigr)
        \right\}
    \right]
\right\}
=0,
& q\geq 1,\; t<T,
\\[1.2em]
\displaystyle
f(T,q)=-qI(q),
& q\geq 1,
\\[0.6em]
\displaystyle
f(t,0)=0.
&
\end{array}
\right.
\end{align}
Thus, compared with Case III, the only additional term in the reduced equation
is \(-J(q)\).

The maximization over \(\delta\) is the same as in Case III. Defining
\[
    \Delta f(t,q):=f(t,q)-f(t,q-1),
\]
the first-order condition for an interior maximizer gives
\[
    \delta^{\mathrm{unc}}(t,q)
    =
    \frac{1}{b\gamma}
    \log\left(\frac{\kappa+b\gamma}{\kappa}\right)
    +
    \frac{a}{b}
    +
    \frac{f(t,q)-f(t,q-1)}{b}.
\]
Substituting this optimizer into
\eqref{eq:HJB_equations_utility_inventory_ansatz} yields
\begin{align}
\label{eq:HJB_equations_utility_inventory_optimal}
\left\{
\begin{array}{ll}
\displaystyle
\partial_t f(t,q)
+
g(s)q
-
\frac{1}{2}\sigma^2\gamma q^2
-
J(q)
+
\frac{b}{\kappa}\widehat{\lambda}
\exp\left\{
    -\frac{\kappa}{b}
    \bigl(f(t,q)-f(t,q-1)\bigr)
\right\}
=0,
& q\geq 1,\; t<T,
\\[1.2em]
\displaystyle
f(T,q)=-qI(q),
& q\geq 1,
\\[0.6em]
\displaystyle
f(t,0)=0.
&
\end{array}
\right.
\end{align}

Finally, set
\[
    f(t,q)=\frac{b}{\kappa}\log w(t,q),
    \qquad w(t,q)>0.
\]
Then
\[
    f(t,q)-f(t,q-1)
    =
    \frac{b}{\kappa}
    \log\frac{w(t,q)}{w(t,q-1)}.
\]
This gives the feedback quote in
\eqref{eq:optimal_delta_feedback_utility_inventory_w}. Substituting the same
transformation into
\eqref{eq:HJB_equations_utility_inventory_optimal} gives
\eqref{eq:PDE_utility_inventory}. The terminal and boundary conditions follow
from
\[
    f(T,q)=-qI(q),
    \qquad
    f(t,0)=0,
\]
which imply
\[
    w(T,q)=
    \exp\left\{
        -\frac{\kappa}{b}qI(q)
    \right\},
    \qquad
    w(t,0)=1.
\]
If the admissible set is bounded, the admissible maximizer is obtained by
projecting the unconstrained maximizer onto
\([\delta_{\min},\delta_{\max}]\).

\subsection{Proof of Proposition \ref{prop:explicit_solution_triangular_system}}
\label{app:triangular-system-proof}

In this appendix, we provide the detailed induction argument leading to the
explicit solution formula for the triangular system
\begin{align}
\label{eq:appendix_general_triangular_system}
\left\{
\begin{array}{ll}
\displaystyle
\partial_t w(t,q)+A_q w(t,q)+Cw(t,q-1)=0,
& q\geq 1,\; t<T,
\\[0.8em]
\displaystyle
w(T,q)=G_q,
& q\geq 1,
\\[0.6em]
\displaystyle
w(t,0)=G_0=1.
&
\end{array}
\right.
\end{align}
We assume throughout this appendix that
\(
    A_i\neq A_j,\) and 
    \( i\neq j,
\)
so that the denominators appearing below are nonzero. Set
\[
    \tau:=T-t,
    \qquad
    v_q(\tau):=w(T-\tau,q),
    \qquad 0\leq \tau\leq T.
\]
Then \eqref{eq:appendix_general_triangular_system} is equivalent to the
forward triangular system
\begin{equation}
\label{eq:appendix_forward_triangular_system}
    \frac{\dd}{\dd\tau}v_q(\tau)
    =
    A_qv_q(\tau)+Cv_{q-1}(\tau),
    \qquad
    v_q(0)=G_q,
    \qquad
    v_0(\tau)=1.
\end{equation}

For fixed \(q\geq 1\), this is a linear inhomogeneous ODE with forcing term
\(Cv_{q-1}(\tau)\). Multiplying by the integrating factor \(e^{-A_q\tau}\)
gives
\[
    \frac{\dd}{\dd\tau}
    \left(
        e^{-A_q\tau}v_q(\tau)
    \right)
    =
    Ce^{-A_q\tau}v_{q-1}(\tau).
\]
Integrating from \(0\) to \(\tau\), we obtain the variation-of-constants
formula
\begin{equation}
\label{eq:appendix_vq_variation_constants}
    v_q(\tau)
    =
    e^{A_q\tau}G_q
    +
    C\int_0^\tau e^{A_q(\tau-u)}v_{q-1}(u)\,\dd u.
\end{equation}
This formula shows explicitly that \(v_q\) can be computed once \(v_{q-1}\)
is known.

We prove by induction that
\begin{equation}
\label{eq:appendix_vq_explicit_solution}
    v_q(\tau)
    =
    \sum_{r=0}^q
    C^{q-r}G_r
    \sum_{i=r}^{q}
    \frac{e^{A_i\tau}}
    {\displaystyle
        \prod_{\substack{j=r\\ j\neq i}}^{q}(A_i-A_j)
    }.
\end{equation}
For \(q=0\), the formula reduces to \(v_0(\tau)=G_0=1\), where the empty
product is interpreted as \(1\). Hence the base case holds.

Assume that the formula holds for \(q-1\), namely
\begin{equation}
\label{eq:appendix_induction_hypothesis}
    v_{q-1}(u)
    =
    \sum_{r=0}^{q-1}
    C^{q-1-r}G_r
    \sum_{i=r}^{q-1}
    \frac{e^{A_i u}}
    {\displaystyle
        \prod_{\substack{j=r\\ j\neq i}}^{q-1}(A_i-A_j)
    }.
\end{equation}
Substituting \eqref{eq:appendix_induction_hypothesis} into
\eqref{eq:appendix_vq_variation_constants} gives
\[
\begin{aligned}
    v_q(\tau)
    &=
    e^{A_q\tau}G_q
    +
    C\int_0^\tau e^{A_q(\tau-u)}
    \sum_{r=0}^{q-1}
    C^{q-1-r}G_r
    \sum_{i=r}^{q-1}
    \frac{e^{A_i u}}
    {\displaystyle
        \prod_{\substack{j=r\\ j\neq i}}^{q-1}(A_i-A_j)
    }
    \,\dd u
    \\
    &=
    e^{A_q\tau}G_q
    +
    \sum_{r=0}^{q-1}
    C^{q-r}G_r
    \sum_{i=r}^{q-1}
    \frac{1}
    {\displaystyle
        \prod_{\substack{j=r\\ j\neq i}}^{q-1}(A_i-A_j)
    }
    \int_0^\tau e^{A_q(\tau-u)}e^{A_i u}\,\dd u.
\end{aligned}
\]
Since \(A_i\neq A_q\) for \(i<q\),
\[
\begin{aligned}
    \int_0^\tau e^{A_q(\tau-u)}e^{A_i u}\,\dd u
    &=
    e^{A_q\tau}
    \int_0^\tau e^{(A_i-A_q)u}\,\dd u
    \\
    &=
    e^{A_q\tau}
    \frac{e^{(A_i-A_q)\tau}-1}{A_i-A_q}
    \\
    &=
    \frac{e^{A_i\tau}-e^{A_q\tau}}{A_i-A_q}.
\end{aligned}
\]
Therefore,
\begin{equation}
\label{eq:appendix_vq_before_rearrangement}
\begin{aligned}
    v_q(\tau)
    &=
    e^{A_q\tau}G_q
    +
    \sum_{r=0}^{q-1}
    C^{q-r}G_r
    \sum_{i=r}^{q-1}
    \frac{
        e^{A_i\tau}-e^{A_q\tau}
    }{
        \displaystyle
        (A_i-A_q)
        \prod_{\substack{j=r\\ j\neq i}}^{q-1}(A_i-A_j)
    }.
\end{aligned}
\end{equation}

It remains to rewrite the inner sum in the desired divided-difference form.
For fixed \(r\leq q-1\), define
\[
    D_i^{(r,q)}
    :=
    \prod_{\substack{j=r\\ j\neq i}}^{q}(A_i-A_j),
    \qquad r\leq i\leq q.
\]
For \(i<q\),
\[
    D_i^{(r,q)}
    =
    (A_i-A_q)
    \prod_{\substack{j=r\\ j\neq i}}^{q-1}(A_i-A_j).
\]
Hence the coefficients of \(e^{A_i\tau}\) for \(i=r,\dots,q-1\) in
\eqref{eq:appendix_vq_before_rearrangement} are already of the form
\(1/D_i^{(r,q)}\). The coefficient of \(e^{A_q\tau}\) is
\[
    -\sum_{i=r}^{q-1}
    \frac{1}
    {
        (A_i-A_q)
        \displaystyle
        \prod_{\substack{j=r\\ j\neq i}}^{q-1}(A_i-A_j)
    }
    =
    -\sum_{i=r}^{q-1}\frac{1}{D_i^{(r,q)}}.
\]
We now use the identity
\begin{equation}
\label{eq:appendix_lagrange_identity}
    \sum_{i=r}^{q}\frac{1}{D_i^{(r,q)}}=0.
\end{equation}
This identity follows, for instance, from the Lagrange interpolation formula
applied to the constant polynomial \(1\). Indeed, the Lagrange basis
polynomials associated with the nodes \(A_r,\dots,A_q\) satisfy
\[
    1
    =
    \sum_{i=r}^q
    \prod_{\substack{j=r\\j\neq i}}^q
    \frac{x-A_j}{A_i-A_j}.
\]
The coefficient of \(x^{q-r}\) on the right-hand side is precisely
\[
    \sum_{i=r}^q
    \frac{1}{\prod_{\substack{j=r\\j\neq i}}^q(A_i-A_j)},
\]
whereas the coefficient of \(x^{q-r}\) on the left-hand side is zero. This
proves \eqref{eq:appendix_lagrange_identity}.

Using \eqref{eq:appendix_lagrange_identity}, the coefficient of \(e^{A_q\tau}\)
is
\[
    -\sum_{i=r}^{q-1}\frac{1}{D_i^{(r,q)}}
    =
    \frac{1}{D_q^{(r,q)}}.
\]
Consequently,
\[
\begin{aligned}
&\sum_{i=r}^{q-1}
    \frac{
        e^{A_i\tau}-e^{A_q\tau}
    }{
        \displaystyle
        (A_i-A_q)
        \prod_{\substack{j=r\\ j\neq i}}^{q-1}(A_i-A_j)
    }
=
\sum_{i=r}^{q}
    \frac{e^{A_i\tau}}
    {\displaystyle
        \prod_{\substack{j=r\\ j\neq i}}^{q}(A_i-A_j)
    }.
\end{aligned}
\]
Substituting this back into \eqref{eq:appendix_vq_before_rearrangement}, we get
\[
    v_q(\tau)
    =
    e^{A_q\tau}G_q
    +
    \sum_{r=0}^{q-1}
    C^{q-r}G_r
    \sum_{i=r}^{q}
    \frac{e^{A_i\tau}}
    {\displaystyle
        \prod_{\substack{j=r\\ j\neq i}}^{q}(A_i-A_j)
    }.
\]
The first term is exactly the \(r=q\) term, since the corresponding product is
empty and therefore equal to \(1\):
\[
    G_qe^{A_q\tau}
    =
    C^0G_q
    \sum_{i=q}^{q}
    \frac{e^{A_i\tau}}
    {\displaystyle
        \prod_{\substack{j=q\\ j\neq i}}^{q}(A_i-A_j)
    }.
\]
Thus \eqref{eq:appendix_vq_explicit_solution} holds for \(q\). By induction, it
holds for all \(q=0,\dots,Q_0\).

Finally, since \(w(t,q)=v_q(T-t)\), we obtain
\begin{equation}
\label{eq:appendix_w_explicit_solution}
    w(t,q)
    =
    \sum_{r=0}^q
    C^{q-r}G_r
    \sum_{i=r}^{q}
    \frac{e^{A_i(T-t)}}
    {\displaystyle
        \prod_{\substack{j=r\\ j\neq i}}^{q}(A_i-A_j)
    },
    \qquad q=0,\dots,Q_0.
\end{equation}
This is the explicit solution used in \eqref{eq:general_explicit_solution}.

\end{document}